\pgfplotsset{compat=newest}
\definecolor{red}{rgb}{0.7,0.15,0.15}
\definecolor{green}{rgb}{0,0.5,0}
\definecolor{blue}{rgb}{0,0,0.7}
\makeatletter \@addtoreset{equation}{section}
\newtheorem{theorem}{Theorem}
\newtheorem{theorem2}{Theorem}[section]
\newtheorem{example}[theorem2]{Example}
\newtheorem{lemma}[theorem2]{Lemma}
\newtheorem{proposition}[theorem2]{Proposition}
\newtheorem{definition}[theorem2]{Definition}
\newtheorem{remark}[theorem2]{Remark}
\title{Adaptive trading strategies across liquidity pools\footnote{This work benefits from the financial support of the Chaires Analytics and Models for Regulation, Financial Risk and Finance and Sustainable Development. Bastien Baldacci gratefully acknowledge the financial support of the ERC Grant 679836 Staqamof. The authors would like to thank Joffrey Derchu (Ecole Polytechnique), Mathieu Rosenbaum (Ecole Polytechnique) and Olivier Guéant (Université Paris-1 Panthéon-Sorbonne) for numerous fruitful discussions. In particular, Mathieu Rosenbaum deserves warm thanks for his careful reading of the paper and his many suggestions to improve its quality.}}
\author{Bastien {\sc Baldacci}\footnote{\'Ecole Polytechnique, CMAP, 91128, Palaiseau, France, bastien.baldacci@polytechnique.edu.} \and Iuliia {\sc Manziuk}\footnote{\'Ecole Polytechnique, CMAP, 91128, Palaiseau, France, iuliia.manziuk@polytechnique.edu.} }
\begin{document}

\maketitle 
\begin{abstract}
In this article, we provide a flexible framework for optimal trading in an asset listed on different venues. We take into account the dependencies between the imbalance and spread of the venues, and allow for partial execution of limit orders at different limits as well as market orders. We present a Bayesian update of the model parameters to take into account possibly changing market conditions and propose extensions to include short/long trading signals, market impact or hidden liquidity. To solve the stochastic control problem of the trader we apply the finite difference method and also develop a deep reinforcement learning algorithm allowing to consider more complex settings. \\

\noindent{\bf Keywords:} cross-platform trading, optimal trading, Bayesian learning, adaptive trading strategies, deep reinforcement learning, stochastic control. 

\end{abstract}

\section{Introduction}\label{Section Introduction}

A vast majority of quantitative trading strategies are based on cross-platform arbitrage. These strategies involve cross-listed stocks, that are assets traded on two or more liquidity venues. In~\cite{jain2017hidden}, the authors investigate the prices of cross-listed stocks in different venues and provide evidence of price deviations for the majority of the $600$ cross-listed stocks they studied. In \cite{alsayed2012arbitrage}, the authors highlight mispricing that can exist between a domestic stock and its ADR (American Deposit Receipt) counterpart. The study conducted in \cite{werner1996uk} for US-UK cross-listed stocks shows that markets for cross-listed securities are among the most heavily arbitraged. In particular, higher potential of arbitrage can be exploited for cross-listed stocks from emerging markets, see \cite{rabinovitch2003returns}. \\

Usually, the trader builds an execution curve targeting, for example, an Implementation Shortfall or volume-weighted average price (VWAP). Then, he buys or sells shares of the asset following the execution curve by sending limit and market orders to the different venues. But how to find the best splitting of orders between the venues? The trader splits his orders depending mainly on the imbalance and spread of the different venues, which of course depend on each other. For example, a higher imbalance on the ask side of one venue can indicate a potential imminent price change and a lower probability to have an ask limit order executed, so it may be more profitable to send the order to another venue. The problem of optimal trading across liquidity pools has been treated, for instance, in \cite{almgren2008dynamic,cont2017optimal,laruelle2011optimal,laruelle2013optimal}. In \cite{almgren2008dynamic}, the authors develop a dynamic estimate of the hidden liquidity present on several venues and use this information to make order splitting decisions (Smart Order Routing). The paper \cite{cont2017optimal} solve a general order placement problem and provide explicit solution for the optimal split between limit and market orders on different venues. Finally, in \cite{laruelle2011optimal,laruelle2013optimal}, the authors build a stochastic algorithm to find the optimal splitting between liquidity pools, including dark venues.  \\

Building a good model for optimal trading cross-listed assets requires to take into account the cross-dependence between the imbalance and spread of each venue, as well as the probability and the proportion of execution of limit orders. However, the quality of the model will mainly rely on the estimation of the market parameters. If one assumes constant parameters over the trading period, he believes in the quality of his parameters' estimation. In this case, his strategy is not robust to changes in price dynamics or the platforms' behavior. For example, the trading period may occur when another market participant is executing a buy (or sell) metaorder on one or several venues. This participant will consume the vast majority of the liquidity available on the sell (or buy) side of those venues. If the trader does not adjust his market parameters, the algorithm will keep sending limit orders on the platform where the metaorder is being split, without an execution opportunity. That is why it is essential to update model parameter estimations with new information obtained from observing the market dynamics. Here, we treat the updates in a Bayesian manner. \\

In this paper, we formulate the problem of a trader dealing in a stock, listed on several venues, by placing limit and market orders. Trader's activity can be formulated as a stochastic control problem. The controls are the splitting of volumes between limit and market orders on each venue, and the limits chosen by the trader. The optima are obtained from a classical Hamilton-Jacobi-Bellman (HJB) quasi-variational inequality, which, for a parsimonious model, can be easily solved by grid methods.\\ 

Then we propose a Bayesian update of each market parameter, decoupled from the control problem. One of the advantages of this method is the simplicity of the formulae for each parameter's posterior estimate. In particular, we do not need to use Markov chain Monte-Carlo. This method's choice comes from the fact that updating the market parameters continuously in the control problem increases the number of state variables drastically, leading to high computation time. A continuous Bayesian update would require first to compute the conditional expectation of the value function given the market parameters and then to integrate it over their posterior distribution. This last integration brings multiple non-linearities in the equation, making this fully Bayesian control problem hard to solve numerically. \\

The proposed Bayesian procedure is easier to apply in practice: we divide the trading period into time slices of about several seconds up to a few minutes long, assuming that market conditions do not vary drastically throughout the slice. For each slice, we keep track of all the market events. Specifically, on each venue, we count the number of executed limit orders, the executed proportions (for example, $50\%$ or $100\%$ of the order volume) given the couple spread-imbalance on each venue at the time of the execution. We also keep track of the price dynamics. At the end of each slice, we update our view on the market parameters and recompute the optimal trading strategy for the next slice. This application of the Bayesian updates on slices of execution is time-inconsistent. However, we see it as a first step toward a more integrated Bayesian learning framework for cross-listed trading. By using finite difference schemes or deep reinforcement learning methods (which could also be mixed) for high-dimensional PDE resolution, we can compute in a couple of minutes the optimal trading strategy on a slice given new market conditions. \\

This paper aims at giving a useful and applicable model for practitioners who work on cross-trading strategies. For a quantitative firm, the control model is flexible enough to reproduce the main stylized facts about the market and to design trading strategies taking into account real signals. Moreover, the procedure for Bayesian updates of market parameters in the control problem enables to reevaluate the optimal strategy when the market conditions may differ from the prior empirical estimation of the trader.\\

The article has the following structure: in Section \ref{section_framework}, we describe the framework for cross-platform trading and formulate the trader's optimization problem. In Section \ref{section_HJB_Equation}, we derive the Hamilton-Jacobi-Bellman quasi-variational inequality (HJBQVI) associated with the trader's optimal trading problem. We introduce a change of variable to reduce the dimensionality of the problem and prove the existence and uniqueness of the viscosity solution of the initial HJBQVI in \ref{Section Proof HJBQVI}. In Section \ref{section bayesian model}, we first define the conjugate Bayesian update of all market parameters. Then, we describe the update procedure in practice and its link to the control problem of the trader. Section \ref{sec_extensions} is dedicated to some extensions of the model and their impact on the dimensionality of the resulting HJBQVI. We devote Section \ref{section_results} to numerical results, for the sake of clarity of interpretations considered in the case of limit orders only. Finally in Appendix \ref{sec_Bayesian_OTC}, we present an application of the Bayesian update of the market parameters to the problem of an OTC market maker. 

\section{Optimal trading on several liquidity pools}\label{section_framework}

The model presented in this section is a generalization of the classic optimal trading framework, developed notably in \cite{avellaneda2008high,gueant2013dealing,guilbaud2013optimal,stoikov2009option} and in the reference books \cite{cartea2015algorithmic,gueant2016financial}, to the case of several liquidity venues.   
\subsection{Framework}
We consider a trader acting on $N$ liquidity platforms operating with limit order books over time interval $[0,T]$. He trades continuously on each venue by sending limit and market orders. For $n\in \{1,\dots,N\}$, the $n$-th venue is characterized by the following continuous-time Markov chains:
\begin{itemize}
  \item the bid-ask spread process $(\psi_t^n)_{t \in [0,T]}$ taking values in the state space $\overline{\psi}^n=\{\delta^n,\dots,J\delta^n\}$,
  \item the imbalance process $(I^n_t)_{t \in [0,T]}$ taking values in the state space $\overline{I}^n=\{I^n_1,\dots,I^n_K\}$,
\end{itemize}
where $J, K \in \mathbb{N}$ denote the number of possible spreads and imbalances respectively and $\delta^n$ stands for the tick size of the $n$-th venue. We define the sets $\Psi=\{\Psi_1,\dots,\Psi_{\# \Psi}\}, \mathcal{I}=\{\mathcal{I}_1,\dots,\mathcal{I}_{\# \mathcal{I}}\}$ of disjoint intervals, representing different market regimes of interest in terms of spreads and imbalances.

\begin{example}
Assume for all $n\in \{1,\dots,N\}$ that $\delta^n=\delta$. The set $\Psi=\big\{\delta, \{2\delta, 3\delta\}, \{4\delta, 5\delta\} \big\}$ denotes three spread regimes: low (one tick), medium (two or three ticks), and high (four or five ticks). 
\end{example}
\begin{example}
Assume for all $n\in \{1,\dots,N\}$ and $k\in \{1,\dots,K\}$ that $I^n_k=I_k$. In this case the set $\mathcal{I}=\big\{[-1,-0.66], (-0.66, -0.33], (-0.33, 0.33], (0.33, 0.66], (0.66, 1]\big\}$ denotes five regimes of imbalance: low ($-33\%$ to $33\%$), medium on the ask (resp. bid) from $33\%$ to $66\%$ (resp. from $-66\%$ to $-33\%$) and high on the ask (resp. bid) from $66\%$ to $100\%$ (resp. from $-100\%$ to $-66\%$).
\end{example}
Whenever the spread and the imbalance of each venue enter the state $\mathbf{k}=(\mathbf{k}^\psi,\mathbf{k}^I)\in \mathcal{K}$ where $\mathcal{K}=\prod_{n=1}^N \overline{\psi}^n \times \prod_{n=1}^N \overline{I}^n$, they remain in this state for a time exponentially distributed with mean $\frac{1}{\nu_{\mathbf{k}}}$. We define a transition matrix $\mathbf{P}=(p_{\mathbf{k}\mathbf{k'}})$, $(\mathbf{k},\mathbf{k'})\in\mathcal{K}$, and corresponding intensity vectors $\nu=(\nu_{\mathbf{k}})_{\mathbf{k}}^{\mathbf{T}}$. We assume that $p_{\mathbf{k}\mathbf{k}}=0$, meaning that we cannot come to the same state twice in a row. The infinitesimal generator of the processes can be written as
\begin{align*}
  & r_{\mathbf{k}\mathbf{k'}} = \nu_{\mathbf{k}}p_{\mathbf{k}\mathbf{k'}} \quad \text{if } \mathbf{k}\neq \mathbf{k'} \\
  & r_{\mathbf{k}\mathbf{k}}=-\sum_{\mathbf{k'}\neq \mathbf{k}} r_{\mathbf{k}\mathbf{k'}}=-\nu_{\mathbf{k}}, \text{ otherwise}.
\end{align*}
\begin{remark}
\label{transition}
This general formulation allows us a full coupling between the spread and imbalance of all venues. If one wants a more parsimonious model, the following simplifications could be made. When the spread (imbalance) of the $n$-th venue enters the state $k$, it remains there for an exponentially distributed time with mean $\frac{1}{\nu_k^{n,\psi}}$ ($\frac{1}{\nu_k^{n, I}}$ for the imbalance). Therefore, we define a transition matrix $\mathbf{P}^{n,\psi}=(p^{n,\psi}_{kk'})$, $n\in \{1,\dots,N\}, (k,k')\in \overline{\psi}^n$ such that $p^{n, \psi}_{kk}=0$, and corresponding intensity vectors $\nu^{n, \psi} = (\nu^{n, \psi}_1, \dots, \nu^{n, \psi}_K)^{\mathbf{T}}$. Similarly we define a transition matrix $\mathbf{P}^{n, I}$ for the imbalance. Then, the infinitesimal generator of the processes can be written as
\begin{align*}
  & r^{n, \psi}_{kk'} = \nu_k^{n, \psi}p_{kk'}^{n, \psi} \quad \text{if } k\neq k' \\
  & r^{n, \psi}_{kk}=-\sum_{k'\neq k} r^{n, \psi}_{kk'}=-\nu_k^{n, \psi} \text{ otherwise}.
\end{align*}
This framework will be used in Section \ref{section_results}, where we present the numerical results. 
\end{remark}
In what follows, the trader designs his strategy on the ask side of the market (optimal liquidation problem). The extension to trading on both sides of the market is straightforward and does not cause an increase in the problem's dimensionality. \\

The number of, possibly partially, filled ask orders in the venue $n$ is modeled by a Cox process denoted by $N^{n}, n\in\{1, \dots, N\}$ with intensities $\lambda^{n}\big(\psi_t, I_t, p_t^{n},\ell_t\big)$ where $p_t^{n}\in Q_\psi^n$ represent the limit at which the trader sends a limit order of size $\ell_t^n$, and
\begin{align*}
 & Q_\psi^n=\{0,1\} \text{ if } \psi^n=\delta^n,\text{ and } \{-1,0,1\} \text{ otherwise},\\
 & \mathcal{A}= \Big\{(\ell_t)_{t\in [0,T]}, \mathcal{F}-\text{predictable, s.t for all }t\in [0,T], 0\leq \sum_{n=1}^N \ell_t^n \leq q_t \Big\},
\end{align*}
where $(q_t)_{t\in[0,T]}$ is defined in Equation \eqref{control_problem_trader}. Practically for $n\in \{1,\dots,N\}$, when the spread is equal to the tick size, the trader can post at the first best limit ($p^n=0$) or the second best limit (if $p^n=1$). When the spread is equal to two ticks or more, the trader can either create a new best limit ($p^n=-1$) or post at the best or the second best limit as previously. The arrival intensity of a buy market order at time $t$ on the venue $n\in \{1,\dots,N\}$ at the limit $p\in Q_\psi^n$, given a couple $(\psi_t,I_t)=\mathbf{m}$ of spread and imbalance on each venue, is equal to $\lambda^{n,\mathbf{m},p}>0$. When the trader posts limit orders of volume~$\ell_t^n$ on the $n$-th venue for $n\in \{1,\dots,N\}$, the probability that it is executed is equal to $f^{\lambda}(\ell_t)$, where $f^\lambda(\cdot)\in [0,1]$ is a continuously differentiable function, decreasing with respect to each of its coordinate. Therefore, the arrival intensity of an ask market order filling the buy limit order of the trader on the $n$-th venue at the limit $p_t^n$, given spread and imbalance $(\psi_t,I_t)$ is a multi-regime function defined by
\begin{align*}
  & \lambda^{n}(\psi_t,I_t,p_t^{n},\ell_t)=f^{\lambda}(\ell_t)\sum_{\mathbf{m}\in\mathcal{M},p\in Q_\psi^n} \lambda^{n,\mathbf{m},p}\mathbf{1}_{\{(\psi_t,I_t)\in \mathbf{m}, p_t^{n}=p\}},
\end{align*}
where $\mathcal{M}=\Psi^N \times \mathcal{I}^N$. Moreover, we allow for partial execution, the fact of which we represent by random variables $\epsilon^{n}_t \in [0,1]$. The proportion of executed volume for limit orders in each venue depends on the spread and the imbalance in all $N$ venues, as well as the volume and the limit of the order chosen by the trader. We assume a categorical distribution with $R>0$ different execution proportions $\omega^r, r\in\{1, \ldots, R\}$ for each venue with $\mathbb{P}(\epsilon_t^{n} = \omega^r) = \rho^{n,r}(\psi_t, I_t, p_t^n, \ell_t)$, where
\begin{align*}
  \rho^{n,r}(\psi_t,I_t,p_t^n,\ell_t) = f^\rho (\ell_t)\sum_{\mathbf{m}\in \mathcal{M},p\in Q_\psi^n} \rho^{n,\mathbf{m},p,r}\mathbf{1}_{\{(\psi_t,I_t)\in \mathbf{m},p_t^n=p\}}, 
\end{align*}
where $f^\rho(\cdot)$ is a continuously differentiable function, decreasing with respect to each of its coordinate. 

\begin{remark}
The estimation of this kind of parameters for executed proportions can be quite intricate in practice. To simplify, one can assume that $\rho^{n,r}(\psi_t, I_t, p_t^n, \ell_t) = \rho^{n,r}\in[0,1]$. In practice, this means that there are different execution proportion probabilities inherent by each venue, depending on its toxicity.
\end{remark}
Finally, we allow for the execution of market orders (denoted by a point process $(J_t^n)_{t\in[0,T]}$) on each venue of size $(m_t^n)_{t\in[0,T]}\in [0,\overline{m}]$ where $\overline{m}>0$ and $J_t^n = J_{t^-}^n + 1$. We assume that market orders are always fully executed. \\

The cash process of the trader at time $t \in [0, T]$ is
\begin{align*}
  dX_t=\sum_{n=1}^N \Big(\ell_t^{n}\big(S_t + \frac{\psi^n_t}{2} + p_t^{n}\delta^n\big)\epsilon^{n}_t dN_t^{n} + m_t^{n}\big(S_t-\frac{\psi_t^n}{2}\big)dJ_t^{n}\Big),
\end{align*}
where
\begin{align*}
  dS_t=\mu dt + \sigma dW_t, \quad (\mu,\sigma)\in \mathbb{R}\times \mathbb{R}^+, 
\end{align*}
is the dynamics of the mid-price process. The inventory process of the trader at time $t \in [0, T]$ is defined by
\begin{align}\label{control_problem_trader}
  q_t=q_0-\sum_{n=1}^N \int_0^t \Big(\ell_u^{n}\epsilon_u^{n}dN_u^{n}+\int_0^t m_u^{n}dJ_u^{n}\Big).
\end{align}
We also assume that the trader has a pre-computed trading curve $q^\star$ that he wants to follow (Almgren-Chriss trading curve or VWAP strategy, for example). Then the trader's optimization problem is
\begin{align}
\label{pbmtrader}
  \sup_{(p, \ell, m) \in Q_{\psi}\times \mathcal{A}\times [0,\overline{m}]^N}\mathbb{E}\Big[X_T + q_T S_T - \int_0^T g(q_t-q_t^{\star})dt\Big],
\end{align}
where the function $g$ penalizes deviation from the pre-computed optimal trading curve.

\subsection{The Hamilton-Jacobi-Bellman quasi-variational inequality}
\label{section_HJB_Equation}
The HJBQVI associated with the optimization problem of the trader \eqref{pbmtrader} is the following:
\begin{align}\label{HJBQVI}
\begin{split}
  0= & \min \Bigg\{-\partial_t u(t,x,q,S,\psi,I) + g(q-q_t^{\star}) - \mu\partial_S u - \frac{1}{2}\sigma^2 \partial_{SS}u \\
  & - \sum_{\mathbf{k}\in \mathcal{K}} r_{(\psi,I),(\mathbf{k}^\psi,\mathbf{k}^I)}\big(u(t,x,q,S,\mathbf{k}^\psi,\mathbf{k}^I)-u(t,x,q,S,\psi,I)\big)\\
  & - \sup_{p\in Q_\psi, \ell\in \mathcal{A}} \sum_{n=1}^N \lambda^{n}(\psi,I,p^{n},\ell)\mathbb{E}\Big[u\big(t,x+\epsilon^{n}\ell^{n}(S+\frac{\psi^{n}}{2}+p^{n}\delta^n),q-\ell^{n}\epsilon^{n},S,\psi,I\big) \\
  & - u(t,x,q,S,\psi,I)\Big]; \quad \!\!\sum_{n=1}^N u(t,x,q,S,\psi,I)-\!\!\sup_{m^n\in [0,\overline{m}]}\!\!u\big(t,x+m^n(S-\frac{\psi^n}{2}),q-m^n,S,\psi,I\big)
  \Bigg\},
\end{split}
\end{align}
with terminal condition
\begin{align*}
  u(t,x,q,S,\psi,I)=x+qS,
\end{align*}
where $\psi=(\psi^1,\dots,\psi^N),I=(I^1,\dots,I^N)$. The expectation in \eqref{HJBQVI} is taken over the variables~$\epsilon^n, n\in\{1,\dots,N\}$. We prove the following theorem in \ref{Section Proof HJBQVI}:
\begin{theorem}
  There exists a unique viscosity solution to the HJBQVI \eqref{HJBQVI}, which coincides with the value function of the control problem of the trader \eqref{control_problem_trader}.
\end{theorem}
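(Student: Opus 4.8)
The plan is to follow the classical route for mixed regular/impulse stochastic control problems: reduce the dimension, establish the dynamic programming principle, deduce the viscosity solution property of the value function, and prove a comparison theorem yielding both uniqueness and the identification with \eqref{control_problem_trader}. The reduction exploits the affine dependence of the dynamics on cash and mid-price: seeking the value function of \eqref{pbmtrader} in the form
\[ u(t,x,q,S,\psi,I) = x + qS + \theta(t,q,\psi,I), \]
one verifies that $\partial_S u = q$, $\partial_{SS} u = 0$, and that $S$ cancels in every bracket of \eqref{HJBQVI}, the limit-order cash gain $\epsilon^n\ell^n(S + \psi^n/2 + p^n\delta^n)$ being offset by the inventory revaluation $-\epsilon^n\ell^n S$, and $m^n(S-\psi^n/2)$ by $-m^n S$. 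Hence $\theta$ solves a reduced HJBQVI on the state $(t,q)$ together with the finite regime index $(\psi,I)\in\mathcal{K}$. Since the inventory stays in the bounded range $[0,q_0]$, the controls range over a finite set, a compact subset of $\{\sum_n\ell^n\le q\}$ and $[0,\overline{m}]^N$, and the intensities $\lambda^n$ are bounded, the reduced Hamiltonian is bounded and Lipschitz; I would run the whole analysis on $\theta$, existence, uniqueness and the value-function identification transferring back to $u$ verbatim.

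Next I would establish the dynamic programming principle for $\theta$ over the admissible class $\mathcal A$, via a measurable-selection argument concatenating controls across a stopping time. From it the viscosity sub- and supersolution properties follow in the standard manner: test against a smooth function touching $\theta$, insert a constant control over a short interval (respectively an intervention), apply Dynkin's formula to the jump-diffusion with the finite Markov-chain generator $r_{\mathbf{k}\mathbf{k'}}$, and pass to the limit. Because the regime coupling is a finite sum and the execution proportions follow the categorical law $\rho^{n,r}$, all non-local terms are genuine finite sums and no integrability issue arises.

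The decisive step is the comparison principle: any bounded upper-semicontinuous subsolution lies below any bounded lower-semicontinuous supersolution. I would argue by doubling the variables in $(t,q)$, componentwise over the finite regime set $\mathcal K$ (comparing the maxima over $\mathbf{k}$ and using the sign structure of $r_{\mathbf{k}\mathbf{k'}}$ for the coupling), after the standard perturbation $\theta \mapsto \theta - \eta/(T-t)$ enforcing the terminal condition and rendering one of the functions a strict subsolution. The quasi-variational structure introduced by the market-order obstacle is handled by the usual dichotomy: at a doubling maximum where the subsolution saturates its obstacle, one reduces to a post-intervention point using that each market order strictly lowers inventory, so interventions cannot chatter, which is the monotonicity condition standard in the viscosity theory of quasi-variational inequalities. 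The inventory-shift non-local terms $\theta(q-\ell\epsilon)$ and $\theta(q-m)$ are controlled by the difference of the doubled test functions: the uniform boundedness of jump sizes together with the continuity of $f^\lambda$ and $f^\rho$ makes these contributions Lipschitz in the doubling parameter, hence absorbed in the limit.

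I expect the comparison principle to be the main obstacle, specifically the interaction of the non-local inventory jumps with the impulse obstacle: one must keep the doubling argument from losing the inequality at the interface where the subsolution meets the market-order operator while the non-local terms are evaluated at shifted inventories. Once comparison holds, existence follows from the value function being a solution, uniqueness is immediate, and the identification with the control problem is delivered by the DPP-based verification; undoing the change of variables returns the stated result for $u$.
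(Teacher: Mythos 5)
Your overall architecture (DPP $\Rightarrow$ viscosity property of the value function $\Rightarrow$ comparison $\Rightarrow$ existence, uniqueness and identification) matches the paper's, but the key technical choice is genuinely different. The paper does \emph{not} perform the reduction $u=x+qS+\theta$ inside the proof: it only discards the cash variable, keeps the mid-price $S$ as a state, and proves the sub/supersolution properties by contradiction with the DPP via It\^o's formula up to an exit time. Its comparison principle is then the full second-order machinery: an $e^{\rho t}$ rescaling to generate a strict zeroth-order term, doubling in the \emph{continuous, unbounded} variable $S$ with a polynomial-growth penalization $\epsilon e^{-\tilde\kappa t}(1+\|S\|_2^{2p}+\|R\|_2^{2p})$, sub/superjets and Ishii's lemma to handle $\partial_{SS}$, and maximality over $(q,\psi,I)$ to absorb the non-local inventory and regime-switching terms. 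Your route --- reduce first to \eqref{vfansatz}, whose state is $(t,q)$ plus a finite regime index, with no second-order term and a compact inventory range --- buys a much lighter comparison argument: no Ishii's lemma, no growth condition, and the doubling in $q$ is essentially cosmetic since the reduced equation contains no $q$-derivatives. Your $-\eta/(T-t)$ perturbation plays the role of the paper's $\rho w$ term; both are standard.

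However, there is a genuine gap in the sentence ``existence, uniqueness and the value-function identification transferring back to $u$ verbatim.'' Existence and the identification do transfer: if $\theta$ solves the reduced problem then $x+qS+\theta$ is a viscosity solution of \eqref{HJBQVI} (the substitution is smooth and affine), and the value function of \eqref{pbmtrader} has this form. Uniqueness does \emph{not} transfer. A comparison principle for the reduced equation only yields uniqueness within the class of solutions of the affine form $x+qS+\theta(t,q,\psi,I)$; an arbitrary viscosity solution of \eqref{HJBQVI} with terminal datum $x+qS$ need not be of that form a priori, and the $S$-cancellation you invoke holds only \emph{after} assuming the ansatz, so you cannot map a general solution of \eqref{HJBQVI} to a solution of the reduced equation. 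Asserting that every solution is affine because the terminal condition is affine presupposes exactly the uniqueness you are trying to prove. To close the theorem as stated you must prove comparison for the equation still carrying $S$ (in some growth class), which is precisely the part of the paper's proof --- doubling in $S$, polynomial localization, Ishii's lemma --- that your reduction was designed to avoid. Secondary, smaller points: the DPP and the measurable-selection argument are taken for granted in the paper as well, and both you and the paper treat the impulse/obstacle branch rather briskly, so I would not count your ``no chattering since market orders strictly lower inventory'' remark against you, though it deserves a line of justification (interventions at $m^n=0$ do not lower inventory, so strictness must come from the obstacle being attained with $m^n>0$).
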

The proof of existence and uniqueness of the viscosity solution mainly relies on adaptations of the theory of the second order viscosity solution with jumps, see \cite{barles2008second}, for example. \\

The value function has to be linear with respect to the cash process and the mark-to-market value of the trader's inventory due to the form of the terminal condition. Therefore we use the following ansatz: 
\begin{align*}
  u(t,x,q,S,\psi,I)=x+qS + v(t,q,\psi,I).
\end{align*}
The HJBQVI then becomes a system of ODEs with $2N+1$ state variables:
\begin{align}
\label{vfansatz}
\begin{split}
  0= & \min \Bigg\{-\partial_t v(t,q,\psi,I) + g(q-q_t^{\star}) - \mu q \\
  & - \sum_{\mathbf{k}\in \mathcal{K}} r_{(\psi,I),(\mathbf{k}^\psi,\mathbf{k}^I)}\big(v(t,q,S,\mathbf{k}^\psi,\mathbf{k}^I)-v(t,q,S,\psi,I)\big) \\
  & - \sup_{p\in Q_\psi, \ell\in \mathcal{A}} \sum_{n=1}^N \lambda^{n}(\psi,I,p^{n},\ell)\mathbb{E}\Big[\epsilon^{n}\ell^{n}(\frac{\psi^{n}}{2}+p^{n}\delta^n)+v\big(t,q-\ell^{n}\epsilon^{n},\psi,I\big) \\
  & - v(t,q,\psi,I)\Big] ; \quad \sum_{n=1}^N v(t,q,\psi,I)-\sup_{m^n\in [0,\overline{m}]}-m^n\frac{\psi^n}{2} + v\big(t,q-m^n,\psi,I\big)
  \Bigg\}, 
\end{split}
\end{align}
with terminal condition $v(T,q,\psi,I)=0$. \\

Conditionally on the market parameters such as the transition matrix of both the spread and the imbalance processes, the drift and volatility of the underlying asset and the execution proportion probabilities, solving Equation \eqref{vfansatz} is done using simple finite difference schemes and the optimal splitting of volumes as well as the optimal limits can be computed in advance. \\

If one want to incorporate directly Bayesian learning of the parameters in the control problem, the result would be a very high number of state variables, which makes the problem intractable in practice. For example, if we want to update continuously the value of the processes $\lambda^n$ for $n\in \{1,\dots,N\}$ we need to add the counting processes $(N_t^n)_{t\in [0,T]}$ to the state variables, which increases the dimension of the HJBQVI \eqref{vfansatz} by $N$. What we propose in the following section is a practical way to update the market parameters according to trader's observations in a Bayesian way. This method, which is performed separately from the optimization procedure, allows to update, at the end of a slice, the trading strategy according to changing market conditions.

\section{Adaptive trading strategies with Bayesian update}\label{section bayesian model}
The framework presented in the above section allows to choose generic parametric forms for the state variables prior distributions (transition matrix of spreads and imbalances, intensities of orders' arrival on each venue) suitable to the use of conjugate Bayesian updates.
\subsection{Bayesian update of the model parameters}

In this section, we present the conjugate Bayesian update of the market parameters and how to choose the prior distributions. 

\subsubsection{Update of the intensities} \label{section_learn_intensities}
Let us recall the form of the intensities for counterpart market orders' arrival:
\begin{align*}
  & \lambda^{n}(\psi_t,I_t,p_t^{n},\ell_t)=f^\lambda(\ell_t)\sum_{\mathbf{m}\in\mathcal{M},p\in Q_\psi^n} \lambda^{n,\mathbf{m},p}\mathbf{1}_{\{(\psi_t,I_t)\in \mathbf{m}, p^n_t=p\}}.
\end{align*}
In the vast majority of optimal liquidation models, the probability of execution $\lambda^{n,\mathbf{m},p}$ is estimated empirically. We propose to put a prior law $\Gamma(\alpha^{n,\mathbf{m},p},\beta^{n,\mathbf{m},p})$ on the arrival rate, and to update a prior belief at the end of each slice of execution. The parameters $\alpha^{n,\mathbf{m},p},\beta^{n,\mathbf{m},p}$ are chosen by the trader according to his vision of the market before he starts to trade. Up to time $t\in [0,T]$ the trader observes the processes
\begin{align*}
  N_t^{n,\mathbf{m},p}=\int_0^t \mathbf{1}_{\{(\psi_s,I_s)\in \mathbf{m}, p_s^{n}=p\} }dN_s^{n},
\end{align*}
which represent the number of executed orders on each venue for every spread-imbalance zone $\mathbf{m}$. The posterior distribution of $\lambda^{n,\mathbf{m},p}$ for $n \in \{1, \ldots, N\}$ is then given by
\begin{align*}
  \lambda^{n,\mathbf{m}, p} | N_t^{n,\mathbf{m},p} \sim \Gamma\big(\alpha^{n, \mathbf{m}, p} + N_t^{n,\mathbf{m},p}, \beta^{n, \mathbf{m}, p} + \int_0^t f^\lambda(\ell_s)ds\big),
\end{align*}
and at time $t$, our best estimate of the filling ratio becomes 
\begin{align*}
\lambda^{n,\mathbf{m},p}(t,N_t^{n,\mathbf{m},p},\ell_t) = \mathbb{E}\Big[\lambda^{n,\mathbf{m},p} | N_t^{n,\mathbf{m},p} \Big]= \frac{\alpha^{n,\mathbf{m},p} + N_t^{n,\mathbf{m},p}}{\beta^{n,\mathbf{m},p} + \int_0^t f^\lambda(\ell_s)ds}.  
\end{align*}
The posterior estimate of the intensity $\lambda^{n}(\psi_t,I_t,p_t^{n},\ell_t)$ becomes 
\begin{align*}
  \hat{\lambda}^{n}(\psi_t,I_t,p_t^{n},\ell_t)=f^\lambda(\ell_t)\sum_{\mathbf{m}\in\mathcal{M},p\in Q_\psi^n,} \frac{\alpha^{n,\mathbf{m},p} + N_t^{n,\mathbf{m},p}}{\beta^{n,\mathbf{m},p} +\int_0^t f^\lambda(\ell_s)ds}\mathbf{1}_{\{(\psi_t,I_t)\in \mathbf{m}, p_t^{n}=p\}}.
\end{align*}

As the convergence of the prior parameters toward the true market specification follows from the central limit theorem, the convergence rate equals to $\frac{1}{\sqrt{o^\mathbf{m}}}$ where $o^{\mathbf{m}}$ is the number of observations of filled limit orders on the spread-imbalance zone $\mathbf{m}$. If we consider even a quite parsimonious model, for example two venues, two regimes of spread and three regimes of imbalance, we have $\#\mathcal{M}=36$ different zones. This means that we need a sufficiently large amount of observations (large number of executed orders) to get an accurate approximation of the market behavior. \\

If the trader anticipates that the number of observations he will have is not adequate to obtain a suitable approximation of the ``true'' market parameters (in the case of a mid to low frequency strategy with only a few number of trades throughout the day), he might choose at the beginning the couples $(\alpha^{n,\mathbf{m},p},\beta^{n,\mathbf{m},p})$ such that $\frac{\alpha^{n,\mathbf{m},p}}{\beta^{n,\mathbf{m},p}} >> \frac{N_t^{n,\mathbf{m},p}}{\int_0^t f^\lambda(\ell_s)ds}$. That way, his prior will not be sensitive to a small number of observations, and with sufficient number of observations the prior will have less influence and the estimation will be less biased.

\subsubsection{Update of the executed proportion}

We propose to use the Dirichlet prior distribution on the executed proportion parameters so that $\rho^{n, \mathbf{m}, p} \sim \text{Dirichlet}(\alpha^{\epsilon, n, \mathbf{m}, p})$ where $\alpha^{\epsilon, n, \mathbf{m}, p} = (\alpha^{\epsilon, n, \mathbf{m}, p, 1}, \dots, \alpha^{\epsilon, n, \mathbf{m}, p, R})$ for all $(n, \mathbf{m}, p, r)\in \{1,\dots,N\}\times \mathcal{M}\times Q_\psi\times \{1,\dots,R\}$. Given observations of $\epsilon_t^{n}$, the executed proportion parameters have Dirichlet posterior distribution 
\begin{align*}
  \rho^{n, \mathbf{m}, p} \sim \text{Dirichlet} (\alpha^{\epsilon, n, \mathbf{m}, p} + c_t^{n, \mathbf{m}, p}),
\end{align*}
where $c_t^{n, \mathbf{m}, p} = (c_t^{n, \mathbf{m}, p, 1}, \dots, c_t^{n, \mathbf{m}, p, R})$ and $c_t^{n, \mathbf{m}, p, r } = \sum_{s\leq t} \mathbf{1}_{\{\epsilon_s^{n} = \omega^r, (\psi_s, I_s)\in \mathbf{m}, p_s^n = p,N_s^n-N_{s^-}^n=1\}}$ is the number of observations before time $t$ in zone \textbf{m} for a limit $p$ in the venue $n$. Therefore, the $\epsilon_t^{i}$ have the following posterior distribution:
\begin{align*}
  \hat{\rho}^{n,r}(\psi_t, I_t, p_t^n, \ell_t) = f^{\rho}(\ell_t)\sum_{\mathbf{m}\in \mathcal{M}, p\in Q_\psi}\frac{\alpha^{\epsilon, n, \mathbf{m}, p, r}+c_t^{n, \mathbf{m}, p, r}}{\sum_{r = 1}^R (\alpha^{\epsilon, n, \mathbf{m}, p, r}+c_t^{n, \mathbf{m}, p, r})}\mathbf{1}_{\{(\psi_t, I_t)\in \mathbf{m}, p^n = p\}}.
\end{align*}
This Bayesian update is linked to the filling of limit orders of the trader: the proportion executed is updated only if the limit order is (partially) executed. If one chooses the parametrization independent of the spread-imbalance zones and the order volume, that is execution proportion depends only on the venue, the speed of convergence is much faster as the same amount of gathered information is used to update a much smaller number of parameters. Using this more parsimonious parametrization the trader can rely on the observations more than on his prior. 

\subsubsection{Update of the characteristics of the venues}

We observe the states of the Markov chains $\psi_{t_d}, I_{t_d}, d\in \{0,\dots,D\}$ and the times $t_d$ of the $D > 0$ transitions. The likelihood function for the spread and the imbalance processes is
\begin{align*}
  \mathcal{L}(\mathbf{P},\nu|\psi_{t\leq t_D}, I_{t\leq t_D}) &= \prod_{d=1}^D \nu_{t_{d-1}} \exp\big(- \nu_{t_{d-1}} (t_{d} - t_{d-1})\big)p_{({\psi_t}_{d-1}, {I_t}_{d-1})({\psi_t}_{d}, {I_t}_{d})} \\
  & \propto \prod_{\mathbf{k}\in\mathcal{K}} (\nu_{\mathbf{k}})^{n_{\mathbf{k}\cdot}}\exp(-\nu_{\mathbf{k}} T_{\mathbf{k}})\prod_{\mathbf{k}'\in \mathcal{K}} (p_{\mathbf{k}\mathbf{k}'})^{n_{\mathbf{k}\mathbf{k}'}},
\end{align*}
where $n_{\mathbf{k}\mathbf{k}'}$ is the number of observed transitions from state $\mathbf{k}$ to $\mathbf{k}'$ for $(\mathbf{k},\mathbf{k}')\in \mathcal{K}$, $T_{\mathbf{k}}$ is the total time spent in state $\mathbf{k}$, and $n_{\mathbf{k}\cdot}=\sum_{\mathbf{k}'\in \mathcal{K}} n_{\mathbf{k}\mathbf{k}'}$ is the total number of transitions out of state $\mathbf{k}$. \\

Given independent prior distributions for $\mathbf{P},\nu$, the posterior distributions will also be independent. We can carry out Bayesian inference separately on the probability matrix and the intensity vectors of the Markov chains. We assume the following priors: 
\begin{align*}
  & \nu_{\mathbf{k}} \sim \Gamma(a_{\mathbf{k}},b_{\mathbf{k}}), \\
  & \mathbf{p}_{\mathbf{k}} = (p_{\mathbf{k}\mathbf{k}'})_{\mathbf{k}'\in \mathcal{K}} \sim \text{Dirichlet}(\alpha_{\mathbf{k}}), \text{ where } \alpha_{\mathbf{k}}=(\alpha_{\mathbf{k}\mathbf{k}'})_{\mathbf{k}'\in \mathcal{K}}.
\end{align*}
Given these conjugate priors, our best estimators of $\nu_{\mathbf{k}},\mathbf{p}_{\mathbf{k}}$ are 
\begin{align*}
  & \hat{\nu}_{\mathbf{k}} = \frac{a_{\mathbf{k}} + n_{\mathbf{k} \cdot}-1}{b_{\mathbf{k}} + T_{\mathbf{k}} },\\
  & \hat{p}_{\mathbf{k}\mathbf{k}'}=\frac{\alpha_{\mathbf{k}\mathbf{k}'}+n_{\mathbf{k}\mathbf{k}'}}{\sum_{\mathbf{l}\neq k}(\alpha_{\mathbf{k}\mathbf{l}}+n_{\mathbf{k}\mathbf{l}})}.
\end{align*}
Then the posterior transition matrix is 
\begin{align*}
  & \hat{r}_{\mathbf{k}\mathbf{k}'} = \hat{\nu}_{\mathbf{k}}\hat{p}_{\mathbf{k}\mathbf{k}'},\quad \mathbf{k}\neq \mathbf{k}', \\
  & \hat{r}_{\mathbf{k}\mathbf{k}}=-\hat{\nu}_{\mathbf{k}}.
\end{align*}
This update aims at finding the ``true'' behavior of the imbalance and spread processes of each venue. This is of particular importance if an event (for instance, an announcement or news) happens in the market. More specifically, if one event occurs in a particular platform (if a metaorder is executed in one specific platform, for example), this helps to discriminate one venue from the others and to redirect the orders to the less toxic liquidity platforms. Given the large number of observations (transitions from one state of imbalance or spread to another occur fast), the trader does not necessarily need to be confident about his prior distributions. 

\begin{remark}
If one wants to use a more parsimonious model as in Remark \ref{transition}, the same methodology applies. In particular for $k \in \overline{\psi}^n$, we assume the following prior: 
\begin{align*}
  & \nu^{n,\psi}_k \sim \Gamma(a^{n, \psi}_k,b^{n, \psi}_k), \\
  & \mathbf{p}^{n, \psi}_k = (p^{n, \psi}_{kk'})_{k'\in \overline{\psi}^n} \sim \text{Dirichlet}(\alpha_k^{n,\psi}), \text{ where } \alpha_k^{n,\psi}=(\alpha_{kk'}^{n,\psi})_{k'\in \overline{\psi}^n}.
\end{align*}
Given these conjugates priors, our best estimators of $\nu^{n, \psi}_k, \mathbf{p}^{n, \psi}_k$ are 
\begin{align*}
  & \hat{\nu}^{n, \psi}_k = \frac{a^{n, \psi}_k + n^{n, \psi}_{k \cdot}-1}{b^{n, \psi}_k + T^{n, \psi}_k },\\
  & \hat{p}^{n, \psi}_{kk'}=\frac{\alpha_{kk'}+n^{n, \psi}_{kk'}}{\sum_{l\neq k}(\alpha_{kl}+n^{n, \psi}_{kl})}.
\end{align*}
The posterior transition matrix is given by
\begin{align*}
  & \hat{r}^{n, \psi}_{kk'} = \hat{\nu}_k^{n, \psi}\hat{p}_{kk'}^{n, \psi},\quad k\neq k', \\
  & \hat{r}^{n, \psi}_{kk}=-\hat{\nu}_k^{n, \psi}.
\end{align*}
Similar formulae apply for $\nu^{n, I}_k, \mathbf{p}^{n, I}_k$.

\end{remark}

\subsubsection{Update of the mid-price}

We recall that the price process has the following dynamics:
\begin{align*}
  dS_t=\mu dt + \sigma dW_t, 
\end{align*}
so that $(S_t-S_0|\mu,\sigma) \sim \mathcal{N}(\mu t,\sigma^2 t)$. We assume that the couple $(\mu,\sigma^2)$ follows a Normal-Inverse-Gamma prior distribution $NIG(\mu_0,\nu,\alpha^s,\beta^s)$, where $(\mu_0,\nu,\alpha^s,\beta^s)\in \mathbb{R}\times\mathbb{R}^3_+$. Therefore the posterior distribution has the following form:
\begin{align*}
  (\mu, \sigma^2 | S_t-S_0) \sim NIG\Big(\frac{(S_t-S_0)+\mu_0 \nu}{\nu + t},\nu+t,\alpha^s + \frac{t}{2}, \beta^s + \frac{t\nu}{\nu+t}\frac{(\frac{S_t-S_0}{t}-\mu_0)^2}{2}\Big).
\end{align*}
Given our observations of the stock price up to time $t$, the best approximation of the drift and volatility are given by
\begin{align*}
  \mu(t,S_t)=\mathbb{E}[\mu|S_t-S_0]=\frac{(S_t-S_0) + \mu_0 \nu}{\nu + t}, \quad \sigma^2(t,S_t)=\mathbb{E}[\sigma^2|S_t-S_0]= \frac{\beta^s + \frac{t\nu}{\nu+t}\frac{(\frac{S_t-S_0}{t}-\mu_0)^2}{2}}{\alpha^s + \frac{t}{2}-1}.
\end{align*}
The volatility $\sigma$ does not appear explicitly in the HJBQVI \eqref{HJBQVI}. However, it is taken into account when the trader computes his trading curve $q^\star$. \\

In the case where the trader is confident with his estimation of $\sigma$, one can use a Normal prior distribution on $\mu$ such that $\mu\sim \mathcal{N}(\mu_0,\nu^2)$. Then, the best approximation of the drift is given by
\begin{align}\label{update_mu_only}
  \mu(t,S_t)=\mathbb{E}[\mu|S_t-S_0]=\frac{ \mu_0 \sigma^2 + \nu^2 (S_t-S_0)}{\sigma^2 + \nu^2 t}.
\end{align}
If the trader firmly believes in the a priori parameter estimation, he can set $\nu$ close to $0$ so that he mostly relies on his prior. On the contrary, if he sets $\nu$ high enough, his estimation comes mostly from market information. Given the large amount of data coming from the market (each time step corresponding to one new observation), convergence to the real value of the drift is fast. 

\begin{remark}

One can argue about the use of a frequentist estimator of the model parameters, which would actually lead to quite similar formulae. However the original problem, that is continuous update of market parameters in the control problem, is of Bayesian nature. Moreover, in our approach, the formulae for posterior distribution of market parameters are as explicit as in the frequentist approach. 

\end{remark}

\subsection{Algorithm description}
\label{section_quasi_stationary}

We now present the use of the Bayesian updates in order to obtain adaptive trading strategies in practice. We emphasize that the procedure is decoupled from the optimization problem \eqref{vfansatz}, so that we do not perform Bayesian optimization but rather a Bayesian update of the parameters of an optimization problem. \\

Number of time steps is an important parameter of the optimization problem because its choice is a trade-off between computation time and computation precision. To address this problem, we use the trading algorithm with fixed market parameters over a short period of time (a couple of seconds up to a few minutes), which we call a slice. Let us consider $\mathcal{V}>0$ slices $\mathcal{T}_v=[T_v,T_{v+1}], v=0,\dots,\mathcal{V}-1,$ such that $T_0=0,T_{\mathcal{V}}=T$. We define for each slice $v\in \mathcal{V}$ a set of market parameters
\begin{align*}
  \mathbf{\theta}^m_v = (r,\rho^n,\lambda^{n,\mathbf{m},p},\mu,\sigma)_{\left\{n\in\{1,\dots,N\}, \mathbf{m}\in\mathcal{M},p\in Q_\psi \right\}}.
\end{align*}
At each time slice $v \in \{0, \mathcal{V}-1\}$ starting from $v=0$ we perform the following algorithm: 
\begin{enumerate}
  \item Take the best estimation of market parameters $\theta^m_v$ from the prior distribution for the current slice $v$. 
  \item Compute the optimal trading strategy on $\mathcal{T}_v$ using the set of parameters $\theta^m_v$.
  \item Observe market events during the current slice (executions, changes of the state).
  \item At $T_{v + 1}$, update the parameters $\theta^m_{v + 1}$ following the Bayes rules described in Section \ref{section bayesian model}.
\end{enumerate}

To summarize, we use the output of the control model (the optimal volumes and limits in each venue) over a slice of execution and then run the model again with the updated market parameters. This method, which is clearly time inconsistent, is common practice when one applies optimal control with online parameter estimation, see for example \cite{baldacci2019algorithmic}. We now present some possible extensions of the presented model.

\section{Model extensions}\label{sec_extensions}
In this section we describe different potential model extensions and their impact on the problem's dimensionality. 
\subsection{Extension 1: Incorporation of signals in the price process}

\subsubsection{Short-term price signals}

The two main sources of signals at the microstructural level are the imbalance and the bid-ask spread. Therefore, one can assume a parametric dependence $f^{\text{short}}(\psi_t, I_t)$ of the price process on these two sources, such that the price process becomes
\begin{align*}
  dS_t = \big(\mu + f^{\text{short}}(\psi_t, I_t)\big)dt + \sigma dW_t. 
\end{align*}
In a modified stochastic control problem the term $\mu q$ in the HJBQVI is replaced by $(\mu + f^{\text{short}}(\psi, I))$, which causes no increase in the dimensionality of the state process. 

\subsubsection{Mid/Long term and path-dependent price signals}

When trading on longer time horizon, one can incorporate mid- or long-term signals such as Bollinger bands, moving average or cointegration ratio. For example, consider a signal taking into account the moving average and the maximum of the price process $S_t$, that is 
\begin{align*}
  \overline{S}_t = \frac{1}{t}\int_0^t S_t dt, \quad S^{\star}_t =\max_{s\leq t} S_s.
\end{align*}
The triplet $(\overline{S}_t,S_t^\star,S_t)$ is Markovian. Therefore, we can add a long term signal $f^{\text{long}}(S_t,\overline{S}_t,S^{\star}_t)$ into the asset's drift:
\begin{align*}
  dS_t = \big(\mu + f^{\text{long}}(S_t,\overline{S}_t,S^{\star}_t)\big)dt + \sigma dW_t. 
\end{align*}
The HJBQVI then becomes:
\begin{align*}
  0= & \min \Bigg\{-\partial_t u(t,q,S,\overline{S},S^\star,\psi,I) + g(q-q_t^{\star}) - \big(\mu+f^{\text{long}}(S,\overline{S},S^{\star})\big)\partial_S u - \frac{S-\overline{S}}{t}\partial_{\overline{S}}u - \frac{1}{2}\sigma^2 \partial_{SS}u \\
  & - \sum_{\mathbf{k}\in \mathcal{K}} r_{(\psi,I),(\mathbf{k}^\psi,\mathbf{k}^I)}\big(u(t,q,S,\overline{S},S^\star,\mathbf{k}^\psi,\mathbf{k}^I)-u(t,q,S,\overline{S},S^\star,\psi,I)\big) \\
  & - \sup_{p\in Q_\psi, \ell\in \mathcal{A}}\sum_{n=1}^N\lambda^{n}(\psi,I,p^{n},\ell)\mathbb{E}\Big[\epsilon^{n}\ell^{n}(S+\frac{\psi^{n}}{2}+p^{n}\delta^n)+u\big(t,q-\ell^{n}\epsilon^{n},S,\overline{S},S^\star,\psi,I\big) \\
  & - u(t,q,S,\overline{S},S^\star,\psi,I)\Big]; \quad \!\!\!\!\!\sum_{n=1}^N u(t,q,S,\overline{S},S^\star,\psi,I)-\!\!\!\!\!\sup_{m^n\in [0,\overline{m}]}\!\!\!\!\!m^n(S-\frac{\psi^n}{2}) + u\big(t,q-m^n,S,\overline{S},S^\star,\psi,I\big)
  \Bigg\},
\end{align*}
for $S\leq S^\star$, with $\partial_{S}u = 0$ for $S=S^\star$. To obtain this equation we just use a change of variable $v(t,x,q,S,\overline{S},S^\star,\psi,I)= x+ u(t,q,S,\overline{S},S^\star,\psi,I)$, linear with respect to the cash process $X_t$. We end up with a $2N+4$ dimensional HJBQVI, that we can still solve using our deep reinforcement learning algorithm (but unlikely with finite differences). \\

More generally, adding a path-dependent state variable that gives information on the price trend adds one dimension to the HJBQVI (in the example above $(\overline{S}_t, S_t^\star, S_t)$ add one dimension each). 

\subsection{Extension 2: Market impact}

So far we assumed no market impact on the price process. It is common knowledge that cost of market impact can cut down a large proportion of the trading strategy's profit. Therefore, we can use a simple permanent-temporary market impact model, inspired by \cite{almgren2005h}. \\

The impacted mid-price process can be modeled as follows:
\begin{align*}
  dS_t= \big(\mu + h(\ell_t) \big) dt + \sigma dW_t + \sum_{n=1}^N \big( \xi^{n,l}(t,\ell_t^{n}) dN_t^{n} + \xi^{n,m}(t,\ell_t^{n}) dJ_t^{n} \big),
\end{align*}
where the functions $h,\xi^{n,l},\xi^{n,m}$ are the permanent and temporary market impact functions. Following~\cite{gatheral2010no}, we assume linear permanent market impact, that is 
\begin{align*}
   h(\ell_t) = \sum_{n=1}^N \kappa^{n,\text{per}} \ell_t^n, \quad \kappa^{n,\text{per}} >0 \text{ for all } n\in \{1,\dots,N\}.
\end{align*}
For the temporary market impact, we can follow the well-known ``square-root law'' and set
\begin{align*}
  \xi^{n,l}(t,\ell_t^{n}) = \kappa^{n,l} (\ell_t^n)^{\gamma^{n,l}}, \quad \xi^{n,m}(t,\ell_t^{n}) = \kappa^{n,m} (\ell_t^n)^{\gamma^{n,m}}, 
\end{align*}
where $\kappa^{n,l},\kappa^{n,m},\gamma^{n,l},\gamma^{n,m} >0$ and $\gamma^{n,l},\gamma^{n,m} \approx 1/2$. On the other hand, in order to take into account the transient part of the impact, we can set the following form for $S_t$:
\begin{align}\label{price impacted}
  S_t = S_0 + \int_0^t \mu + h(\ell_s) ds + \sigma W_t + \sum_{n=1}^N \int_0^t \xi^{n,l}(t-s)\tilde{\xi}^{n,l}(\ell_s^i)dN_s^n + \xi^{n,m}(t-s)\tilde{\xi}^{n,m}(\ell_s)dJ_s^n, 
\end{align}
where $\xi^{n,l}, \xi^{n,m}$ are decreasing kernels, and $\tilde{\xi}^{n,l}, \tilde{\xi}^{n,m}$ are decreasing functions of the posted volume. It is well known that by taking an exponentially decreasing kernel, Equation \eqref{price impacted} admits a Markovian representation as the couples $\big(N_t^{n},\int_0^t \xi^{n,\{l,m\}}(t-s)dN_s^n\big)_{t\in [0,T]}$ are Markovian. Practically, this will add $2N$ dimension to the HJBQVI. \\

Functions $h, \xi^{n,l}, \xi^{n,m}$ could also be approximated by neural networks. Determination of a cross-impact function between liquidity pools can lead to possible arbitrage detection across liquidity venues.

\subsection{Extension 3: Hidden liquidity}

Hidden liquidity represents a great proportion of the liquidity especially in the US markets, see for example \cite{jain2017hidden}. Therefore, if one wants to design trading tactics for assets cross-listed in a European and an American market, taking into account the hidden part of the liquidity is crucial. \\

Assume that the $n$-th venue is a US liquidity pool. Borrowing the notations of \cite{avellaneda2011forecasting}, we denote by $H^n$ the hidden liquidity of the $n$-th venue at the first limit of the order book. Therefore, the corresponding imbalance process represented by the continuous-time Markov chain $I^n$ can be rewritten as $\frac{N_t^{n, a, m} - N_t^{n, b, m}}{N_t^{n,a, m} + N_t^{n, b, m} + 2H^n},$ where $N_t^{n, b, m}, N_t^{n, a, m}$ are the bid and ask market order flow processes on the $n$-th venue. Empirical estimation of the prior parameters for the transition matrix of $I^n$ have to take into account this additional term in the imbalance processes. Furthermore, incorporating the imbalance process with hidden liquidity into trading signals allows to detect arbitrage opportunities between different venues. This does not increase the dimensionality of Equation \eqref{HJBQVI}.

\section{Numerical results}\label{section_results}

\subsection{Global parameters}

We take the example of a trader acting on a stock cross-listed on $2$ different venues ($N=2$), with the following global parameters:
\begin{itemize}
  \item $\overline{\psi}^n = \{\delta, 2\delta\}$: the processes $(\psi_t^n)_{t\in [0, T]}$ can take two values, which correspond to a low or high spread regimes, and the tick size is $\delta=0.05$.
  \item $\overline{I}^n = \{-0.5, 0, 0.5\}$: the processes $(I_t^n)_{t\in [0, T]}$ can take three values, which correspond to a negative, neutral or positive imbalance regime. 
  \item $R=2, (\omega^1, \omega^2)=(0.5,1)$: the processes $(\epsilon_t^n)_{t\in [0,T]}$ can take two values, which correspond to a total or half-execution of the posted volume $(\ell_t^n)_{t\in [0,T]}$.
  \item $q_0=5\times 10^4$: initial inventory of the trader.
  \item $\mathcal{T}_v = [v, v+\Delta_v]$, where $\Delta_v = 1$ min, which means that each slice lasts one minute, with $\mathcal{V}=10$ slices and $T=10$ min. 
  \item $\Delta_t = 0.1$: we take 10 time steps in each slice, that is the agent takes $10$ trading decisions during each slice. 
\end{itemize}
The pre-computed trading curve is borrowed from an implementation shortfall execution using market orders, that is:
\begin{align*}
 q^\star_t = q_0 \frac{\sinh\Big(\sqrt{\frac{\gamma\sigma^2 V}{2\eta}}(T-t) \Big)}{\sinh\Big(\sqrt{\frac{\gamma\sigma^2 V}{2\eta}}T\Big)}. 
\end{align*}
%\item VWAP target: $q^\star_t = q_0(1-\frac{t}{T}) - q_0\frac{k}{\gamma\sigma^2 T}\sinh\Big(\sqrt{\frac{\gamma\sigma^2 V}{2\eta}}t\Big)\bigg(\sinh\Big(\sqrt{\frac{\gamma\sigma^2 V}{2\eta}}\frac{T}{2}\Big)-\sinh\Big(\sqrt{\frac{\gamma\sigma^2 V}{2\eta}}\frac{t}{2}\Big) \bigg)$

with the following set of parameters
\begin{itemize}
  \item $\eta = 0.1$: coefficient of quadratic costs.
  \item $V=1\times 10^8$: average market volume. 
  \item $\gamma = 1\times 10^{-6}$: risk aversion of the trader using a CARA utility function.
  \item $\sigma = 0.05$: volatility of the asset.
  \item $f^\lambda(\ell_t)=\exp(-\kappa \sum_{n=1}^N \ell_t^n)$ with $\kappa = 2.5 \times 10^{-5}$: sensitivity of the execution with respect to the total volume posted. 
  \item $f^\rho(\ell_t)=1$: no sensitivity of the executed proportion with respect to the total volume posted. 
\end{itemize}

For this numerical experiment for the sake of clarity of interpretations we consider the trader sending only limit orders.

\subsection{Numerical methods}

\subsubsection{Finite differences}

To find optimal strategy for limit orders we consider the following equation:
\begin{align*}
0= & -\partial_t v(t,q,\psi,I) + g(q-q_t^{\star}) - \mu q \\
  & - \sum_{n=1}^N \sum_{j=1}^J r^{n,\psi}_{\psi,j\delta}\big(v(t,q,\psi_{j\delta}^{-n},I)-v(t,q,\psi,I)\big) - \sum_{n=1}^N \sum_{k=1}^K r^{n,I}_{I,I_k}\big(v(t,q,\psi,I_{I_k}^{-n})-v(t,q,\psi,I)\big) \\
  & - \sup_{p\in Q_\psi, \ell\in \mathcal{A}} \sum_{n=1}^N \lambda^{n}(\psi,I,p^{n},\ell)\mathbb{E}\Big[\epsilon^{n}\ell^{n}(\frac{\psi^{n}}{2}+p^{n}\delta^n)+v\big(t,q-\ell^{n}\epsilon^{n},\psi,I\big) - v(t,q,\psi,I)\Big],
\end{align*}
where
\begin{align*}
  \psi_{j\delta}^{-n} = (\psi^1,\dots,\psi^{n-1},j\delta,\psi^{n+1},\dots), \quad I_{I_k}^{-n} = (I^1,\dots,I^{n-1},I_k,I^{n+1},\dots).
\end{align*}

In order to apply the finite difference method we introduce the discretization of time and state space. For inventories we have $\mathfrak{Q} = \{q_1=0 < \ldots < q_{\#\mathfrak{Q}} = q_0\}$. Time discretization in the slice is $\mathfrak{T} = \{t_0=0 < t_1 = t_0 + \Delta_t < \ldots < t_{\#\mathfrak{T}} = \Delta_v\}$. We also discretize the order volumes the trader can send $\mathfrak{L} = \{l_1=0 < \ldots < l_{\#\mathfrak{L}} = q_0\}$. \\

Using the first difference for the value function derivative with respect to time we can rewrite the above equation as $\forall i \in \{0,\ldots, \#\mathfrak{T}-1\}, \forall q \in \mathfrak{Q}, \forall (\psi, I) \in \mathcal{M}$
\begin{align*}
v(t_{i+1}, q, \psi, I) = & v(t_i, q, \psi, I) -\Delta_t\Bigg( g(q - q_t^{\star}) - \mu q \\
  & - \sum_{n=1}^N \sum_{j=1}^J r^{n,\psi}_{\psi,j\delta}\big(v(t,q,\psi_{j\delta}^{-n},I)-v(t,q,\psi,I)\big) - \sum_{n=1}^N \sum_{k=1}^K r^{n,I}_{I,I_k}\big(v(t,q,\psi,I_{I_k}^{-n})-v(t,q,\psi,I)\big) \\
  & -\!\!\!\!\!\!\!\!\!\! \sup_{p\in \{-1, 0, 1\}^N, \ell\in \mathfrak{L}^N} \sum_{n=1}^N \lambda^{n}(\psi,I,p^{n},\ell)\mathbb{E}\Big[\epsilon^{n}\ell^{n}(\frac{\psi^{n}}{2}+p^{n}\delta^n)+v\big(t,q-\ell^{n}\epsilon^{n},\psi,I\big) - v(t,q,\psi,I)\Big]\Bigg),
\end{align*}
with terminal condition $v(T,q,\psi,I)=0$.\\ 

In terms of calculations the most demanding part is obviously the search of the supremum which is needed to be performed on the dimension $3^N \times \#\mathfrak{Q} \times \#\mathfrak{L}^N \times \#\mathcal{M}$ for each time step. From what follows that finite differences can be applied to solve the problem of optimal orders posting for the stock cross-listed in $N=2$ venues with reasonable precision and calculation time. However, if we introduce more venues finite differences are not going to be any more efficient because the complexity is growing exponentially.\\

For our numerical example, we used the discretization with $\#\mathfrak{Q} = 101$ and $\#\mathfrak{L} = 51$ which assures the calculation time (on a simple PC) around 1min for the whole slice. 
\subsubsection{Neural networks}
In this section, we briefly introduce the method using neural networks to solve HJB equations. In this paper, we used a method which can be referred to as Actor-Critic method to approximate optimal controls and corresponding value function for the problem. Applications of this approach have shown to be fruitful, especially when we talk about equations in high dimension, more elaborate description of the method can be found for example in \cite{bachouch2018deep, baldacci2019market, gueant2019deep, hure2018deep}. \\

The core of this approach is to represent the strategy of the trader with a neural network as well as the corresponding value function. Then one needs to formalize the target functions for both neural networks and to perform the gradient descent on the parameters (weights) of these networks. This procedure needs to be done for every time step, and so one ends up with $2\#\mathfrak{T}$ networks.\\

Let us start from the description of the value function approximation. We consider the neural networks taking as an input the spreads and the imbalances in the venues of interest and the inventory of the trader giving as an output the value function at this point. As in the finite difference method we solve our problem backward, starting from $t_{\#\mathfrak{T} - 1}=\Delta_v-\Delta_t$, because the value function at the end of the slice is known from the terminal condition. To calculate the value function at time $t_i, \forall i \in \{0,\ldots, \#\mathfrak{T}-1\}$ we use the minimization of the mean-squared error between values given by the neural network and the target values calculated with the use of the value function approximation for time $t_{i+1}$ and the network for the controls at the current step. Let us assume that we have the controls $\ell^*, p^*$ (obtained via neural networks, for example) for time $t_i$, then the target for the value function can be found as
\begin{align*}
v^{\text{target}}(t_{i-1}, q, \psi, &I) = v[\theta^v_i](t_i, q, \psi, I) +\Delta_t\bigg( g(q - q_t^{\star}) - \mu q \\
& - \sum_{n=1}^N \sum_{j=1}^J r^{n,\psi}_{\psi,j\delta}\big(v[\theta^v_i](t,q,\psi_{j\delta}^{-n},I)-v[\theta^v_i](t,q,\psi,I)\big) \\
&- \sum_{n=1}^N \sum_{k=1}^K r^{n,I}_{I,I_k}\big(v[\theta^v_i](t,q,\psi,I_{I_k}^{-n}) - v[\theta^v_i](t,q,\psi,I)\big) \\
 & - \sum_{n=1}^N \lambda^{n}(\psi,I,p^{*n},\ell^*)\mathbb{E}\Big[\epsilon^{n}\ell^{*n}(\frac{\psi^{n}}{2}+p^{*n}\delta^n)+v[\theta^v_i]\big(t,q-\ell^{*n}\epsilon^{n},\psi,I\big) - v[\theta^v_i](t,q,\psi,I)\Big]\bigg), 
\end{align*}
with $v[\theta^v_{\#\mathfrak{T}}](t_{\#\mathfrak{T}}, q, \psi, I) = 0$ and where $[\theta^v_i]$ stands for the weights of the neural network for the value function at time $t_i$.\\

The trader's inventory is of continuous nature, however, spread and imbalance are categorical, so we need to verify if we should use some special techniques to ensure better fitting in this case.
\vspace{-4mm}
\begin{figure}[H]
  \begin{center}
      \includegraphics[width=0.45\textwidth]{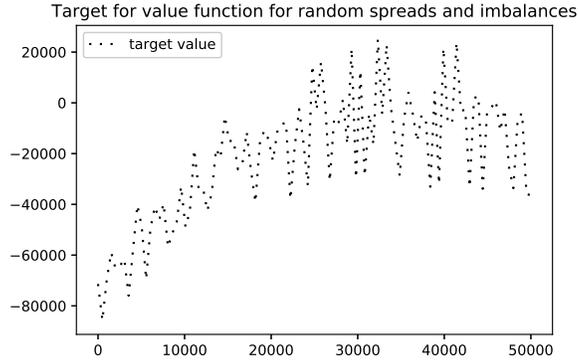}
      \vspace{-5mm}
      \caption{Target value function for increasing inventory and random market states.}
      \label{target_vf}
    \end{center}
\end{figure}
\vspace{-6mm}
Let us see first in Figure \ref{target_vf} the example of the target value function of the trader for $q\in[0, q_0]$ at different spreads and imbalances. We see considerable changes in the value function level depending on the market state which we would like to capture by our approximation.\\

Now, let us compare the fitting of the value function parametrization taking as inputs raw spread and imbalance values with the parametrization working with encoded values of the spread and the imbalance. Here we are going to use the so-called one-hot encoding for categorical variables, which consists in the representation of different values of the variable by a one-hot vector $e_\psi^i \in \{0, 1\}^{\#\Psi}$ for the spread and $e_I^i \in \{0, 1\}^{\#\mathcal{I}}$ for the imbalance. And $e^i$ (both for $e_\psi^i$ and $e_I^i$) are such that that $e^i_j = 0, \forall j\neq i,$ and $e^i_i=1$ otherwise. 
\vspace{-3mm}
\begin{figure}[H]
\begin{minipage}[c]{.49\linewidth}
  \begin{center}
      \includegraphics[width=0.9\textwidth]{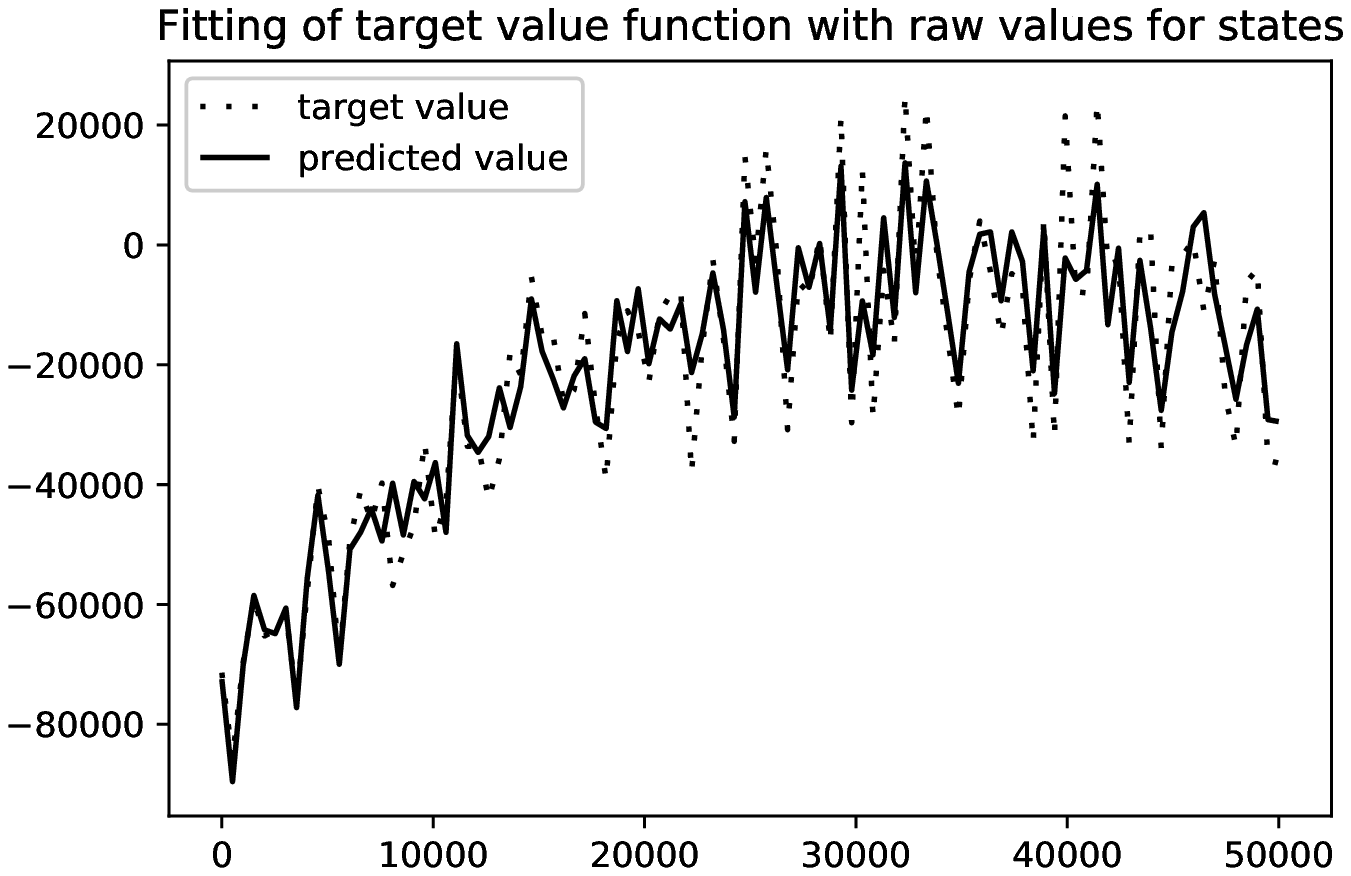}
      \vspace{-3mm}
      \caption{Comparison of the target value with approximation continuous in spread and imbalance.}
      \label{fit_cont}
    \end{center}
\end{minipage} \hfill
\begin{minipage}[c]{.48\linewidth}
   \begin{center}
       \includegraphics[width=0.9\textwidth]{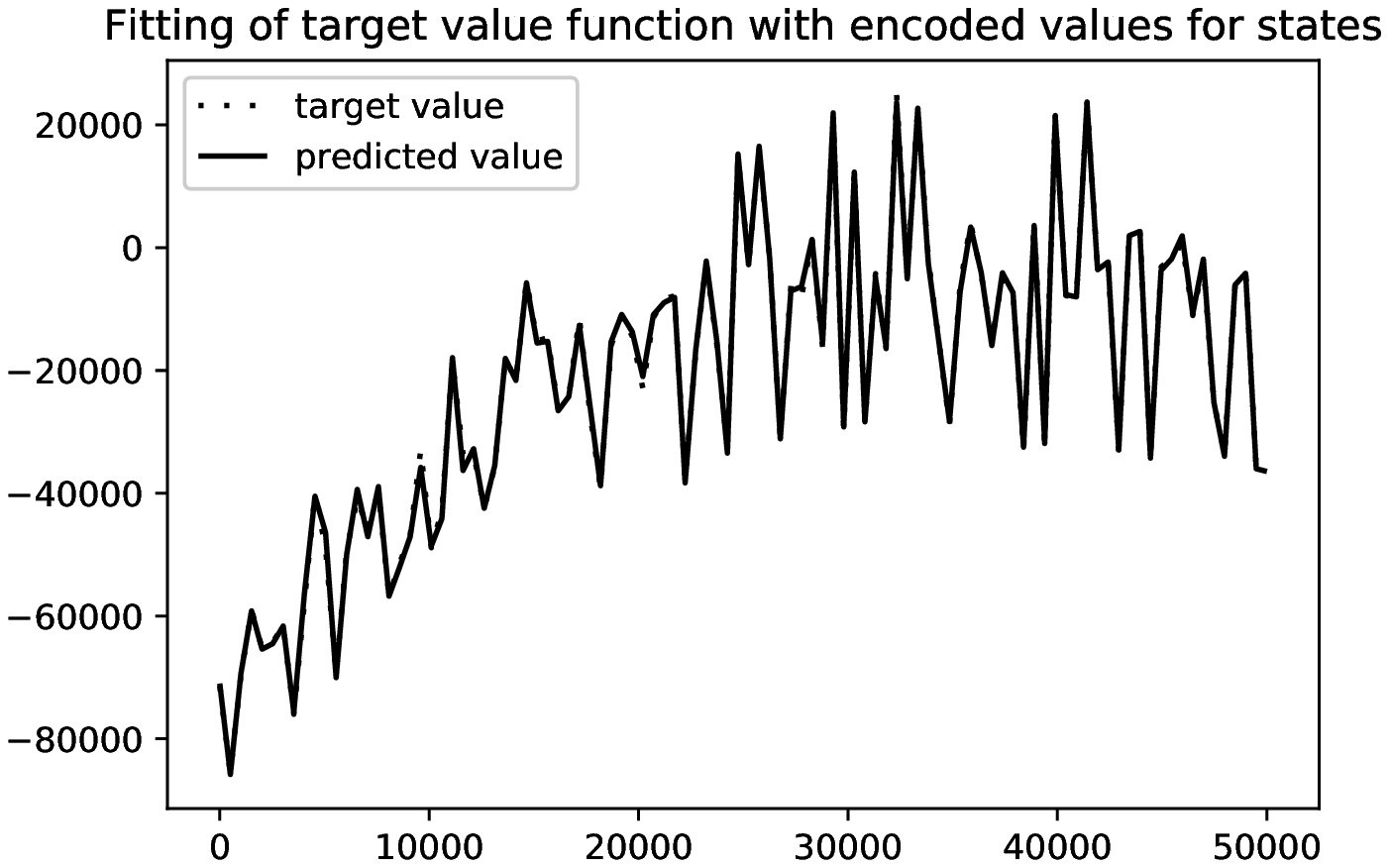}
       \vspace{-3mm}
       \caption{Comparison of the target value with approximation discrete in spread and imbalance.}
       \label{fit_cat}
     \end{center}
  \end{minipage} 
\end{figure}
\vspace{-3mm}
In Figures \ref{fit_cont} and \ref{fit_cat} we see the comparison between values predicted by two parametrizations with target values for the same number of learning epochs. There is a considerable gain in precision when the parametrization takes into account the categorical nature of market states. Therefore we apply it for both value function network approximation and the strategy neural network approximation.\\

Now, let us describe the learning procedure for the strategy. First of all, the inputs of the strategy network are the same as for the value function network, \textit{i.e.} the trader's inventory, spreads and imbalances for both venues. As an output, we need to have volumes of the orders and limits on which the trader needs to send his orders. Volumes to send to each venue are bounded by the current inventory because we do not want the trader to execute more shares than he possesses. Limits should equal $-1$, $0$, or $1$, but as soon as we want to use the tools of automatic differentiation, we need to represent them by differentiable function. The softmax activation function serves well to this purpose, so we represent the limits for each venue by the probabilities to send an order to each precise limit. In practice, the trader can choose the maximum of the three to perform his action.\\

The optimization criterium used for the strategy neural network is the function under supremum from the HJBQVI \eqref{vfansatz}, with limit probabilities taken into account (let us denote them by $\mathbb{P}(p = a)$, for $a \in \{-1, 0, 1\}$) we need to maximize with respect to $\theta^\ell_i, i \in \{0, \ldots, \#\mathfrak{T} - 1\}$:
\begin{align*}
\sum_{n=1}^N \sum_{a \in \{-1, 0, 1\}}\mathbb{P}[\theta^\ell_i](p^{n} = a)\lambda^{n}(\psi,I,a,\ell[\theta^\ell_i])\mathbb{E}\Big[&\epsilon^{n}\ell[\theta^\ell_i]^{n}\Big(\frac{\psi^{n}}{2}+a\delta^n\Big)\\
&+v[\theta^\ell_{i+1}]\big(t_i, q-\ell[\theta^\ell_i]^{n}\epsilon^{n}, \psi, I\big) - v[\theta^\ell_{i+1}](t_i, q, \psi, I)\Big], 
\end{align*}
where $\theta^\ell_i$ stand for the weights of the neural network of controls at time $t_i$. So we want to maximize this function for all possible values of market states and inventories. To avoid the dimensionality trap we need to optimize this function on some subset of possible values, which we are going to draw randomly.\\

When optimizing neural networks approximations, it is important to normalize the data, to have if possible a universal set of hyperparameters. First of all, the inventory entering as an input of the value function neural network and of the strategy neural network is normalized by $q_0$ to always stay in $[0, 1]$. Also, we are going to learn not the target value function itself, but the target value function normalized by $q_0$, which sufficiently reduces the order of values. For strategy network, we are going to learn the proportion of the inventory to be sent and not the volume itself. And finally, we can notice that for high inventories the difference between value functions (which are quadratic in the inventory) in the supremum can become much more important than the profit of the trader coming from the tick (which is not more than linear in inventory). This fact can hinder us from finding optimal values for the limit to which the trader should send his order, especially for small inventories. We normalize the values of the optimized function for different inventories to make small inventories more important by multiplying all values by $\frac{1}{q}$. However, this latter normalization is used when we optimize over the part of the strategy responsible for the limits only, leaving volume updates untouched.\\

To summarize in Figures \ref{model_l} and \ref{model_v}, we presented the structures of the neural networks used to represent the approximators for the strategy and the value function. Another feature worth mentioning here is the separation of market state and inventory inputs for some layers, both for the strategy and the value function. This allows capturing features of the market state independently of the inventory. Also, we separated some layers preceding the outputs of the strategy network to be able to perform the learning process with different learning rates for volumes and limits of limit orders. \\
\vspace{-3mm}
\begin{figure}[H]
\begin{minipage}[c]{.46\linewidth}
  \begin{center}
      \includegraphics[width=\textwidth]{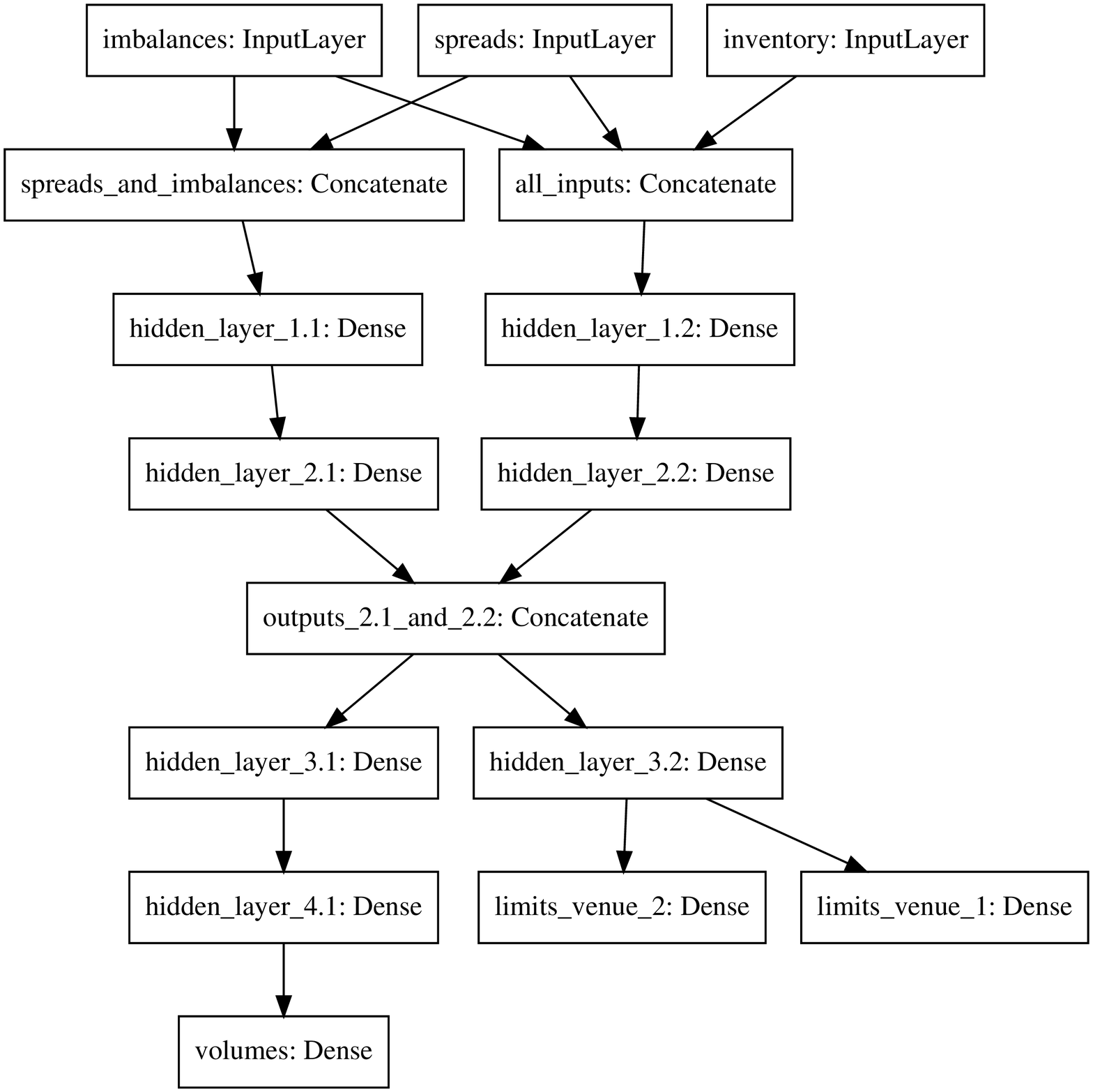}
      \vspace{-3mm}
      \caption{Neural network structure for the trader's strategy.}
      \label{model_l}
    \end{center}
\end{minipage} \hfill
\begin{minipage}[c]{.46\linewidth}
   \begin{center}
       \includegraphics[width=\textwidth]{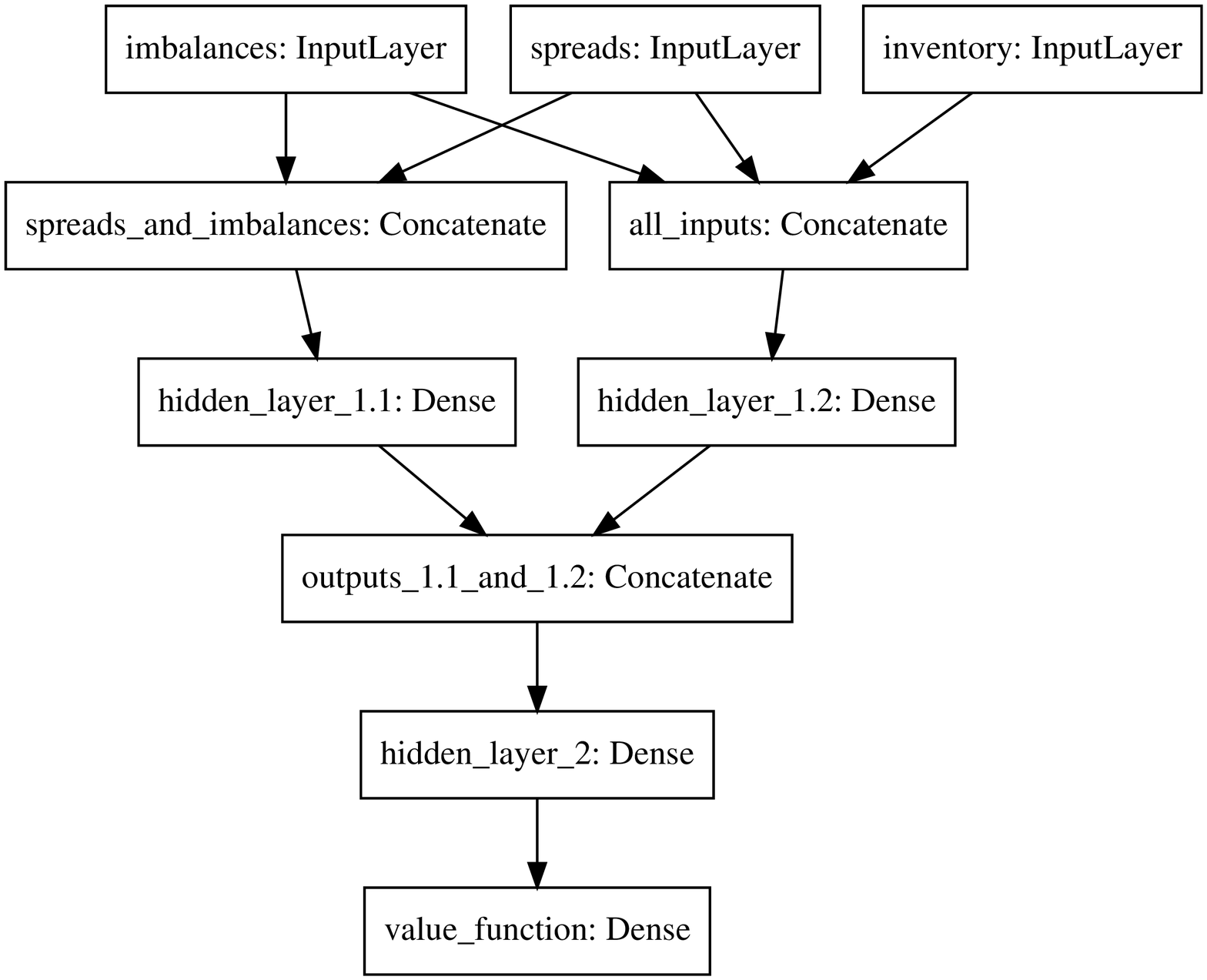}
       \vspace{-3mm}
       \caption{Neural network structure for the value function.}
       \label{model_v}
     \end{center}
  \end{minipage} 
\end{figure}
\vspace{-3mm}
While the finite difference schemes must complete the entire recalculation of values for the whole grid every time the trader wants to adapt his strategy using the updated market parameters, neural networks can be adapted progressively, starting from some pre-trained strategy, for example, the one corresponding to the previous parameters. In practice, a pre-trained model can be reused for different problem settings due to normalization. Therefore a long and elaborate training procedure should be done only once. The resulting model can be ameliorated by small adjustment trainings which take only 1 minute on the simplest instance of the AWS platform (2CPU, no GPU), and have great speed-up potential when performed on more complex infrastructures. 

\subsection{Two identical venues}

We assume that the trader is confident about his estimation of $\sigma$. Therefore he uses Bayesian update only on the drift $\mu$ of the asset. The venues share identical parameters, which will be inferred by the trader through time.

\subsubsection{Value function}

We first plot in Figures \ref{value_funcions_04} and \ref{value_funcions_59} the evolution through time of the value function of the trader in the state $\psi^1=\psi^2=1$ and $I^1=I^2=0$ during a slice of execution, obtained through finite difference method.
\vspace{-3mm}
\begin{figure}[H]
\begin{minipage}[c]{.46\linewidth}
  \begin{center}
      \includegraphics[width=0.9\textwidth]{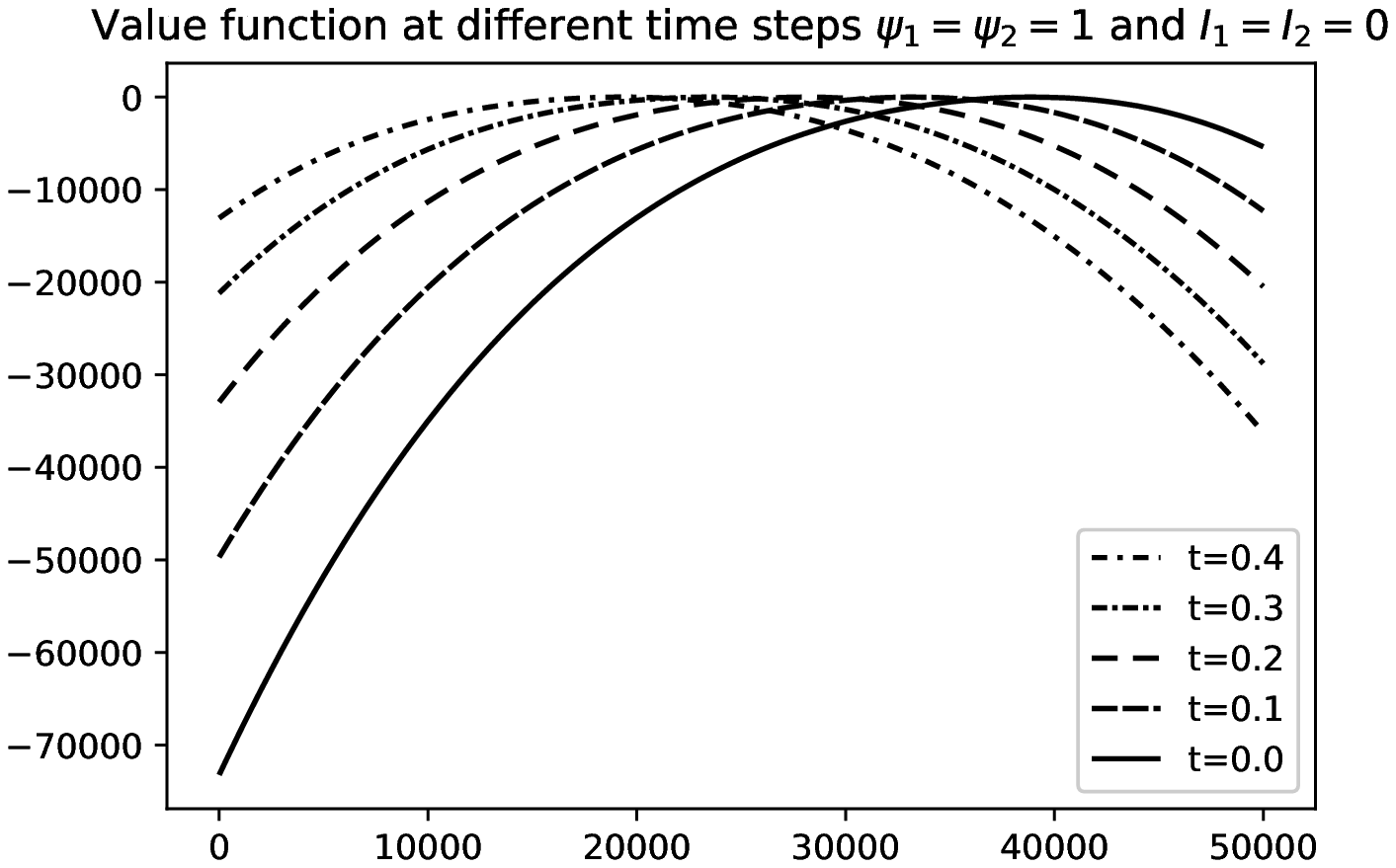}
      \vspace{-3mm}
      \caption{Value function with respect to the inventory between $t=0$ and $t=0.4$.}\label{value_funcions_04}
    \end{center}
\end{minipage} \hfill
\begin{minipage}[c]{.46\linewidth}
   \begin{center}
       \includegraphics[width=0.9\textwidth]{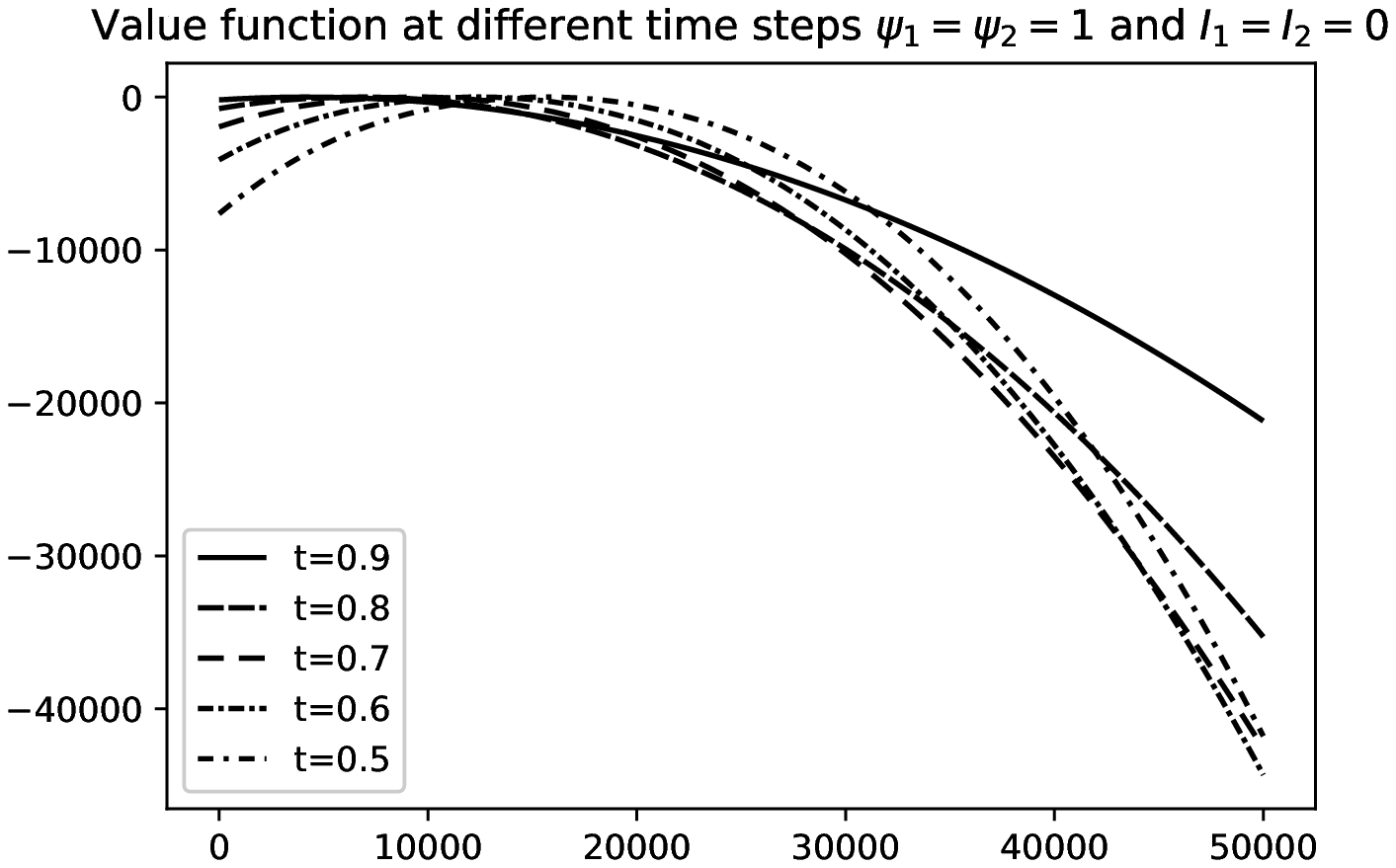}
       \vspace{-3mm}
       \caption{Evolution of the value function $v$ between $t=0.5$ and $t=0.9$.}\label{value_funcions_59}
     \end{center}
  \end{minipage} 
\end{figure}
\vspace{-3mm}
The parabolic form of the value function comes from the term $g(q-q_t^\star)$ in \eqref{HJBQVI}. The maximum value indicates the optimal inventory for the next step in the slice. When $t$ increases, the maximum shifts toward zero, which means that the trader wants to finish the execution at the end of the slice. \\

We plot in Figures \ref{value_funcions_04_nets} and \ref{value_funcions_59_nets} the value function of \eqref{HJBQVI} obtained using neural networks. We can see that the neural networks approximate accurately the value function. 
\vspace{-3mm}
\begin{figure}[H]
\begin{minipage}[c]{.46\linewidth}
  \begin{center}
      \includegraphics[width=0.9\textwidth]{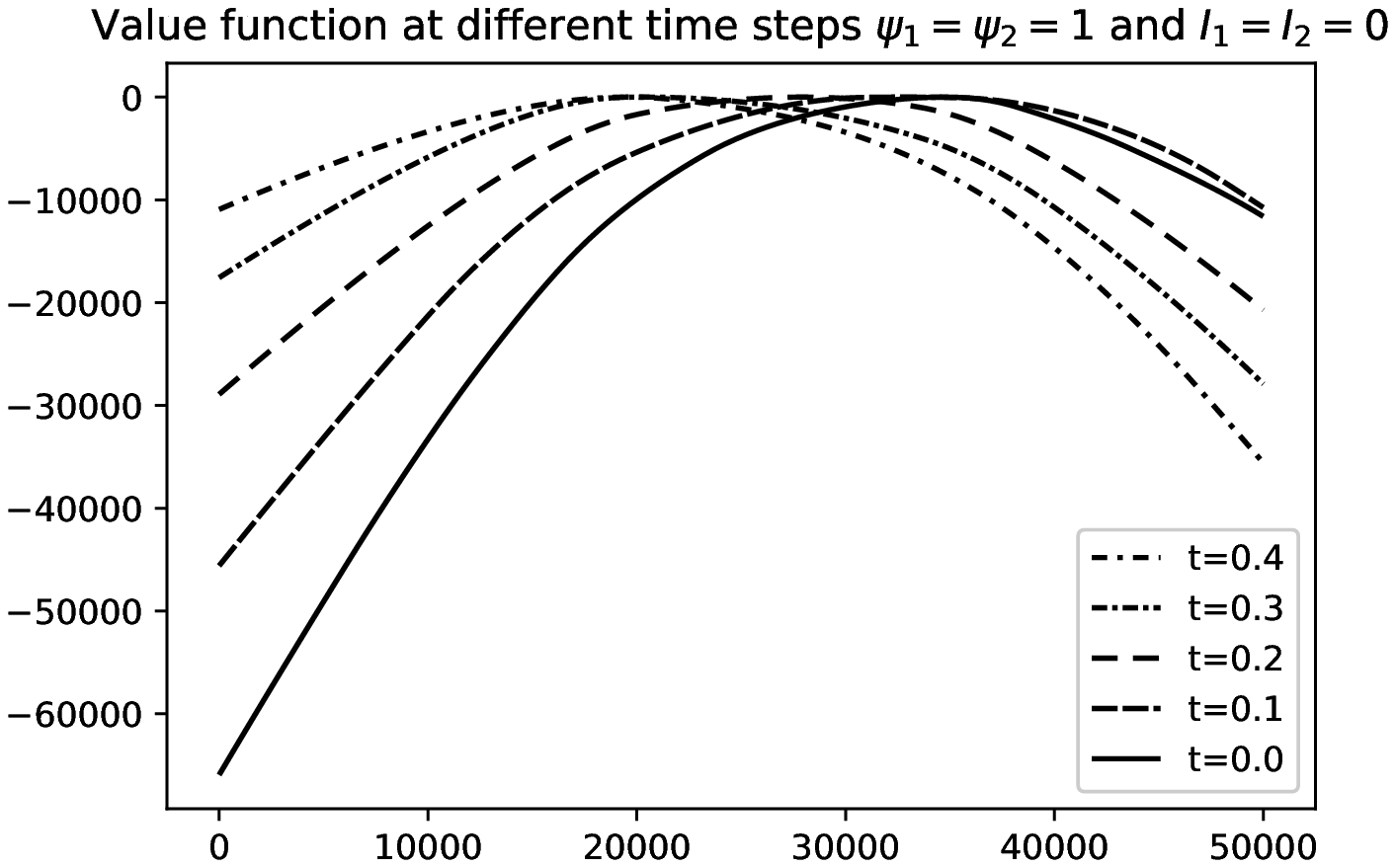}
      \vspace{-3mm}
      \caption{Evolution of the value function $v$ between $t=0$ and $t=0.4$ using neural networks.}\label{value_funcions_04_nets}
    \end{center}
\end{minipage} \hfill
\begin{minipage}[c]{.47\linewidth}
   \begin{center}
       \includegraphics[width=0.88\textwidth]{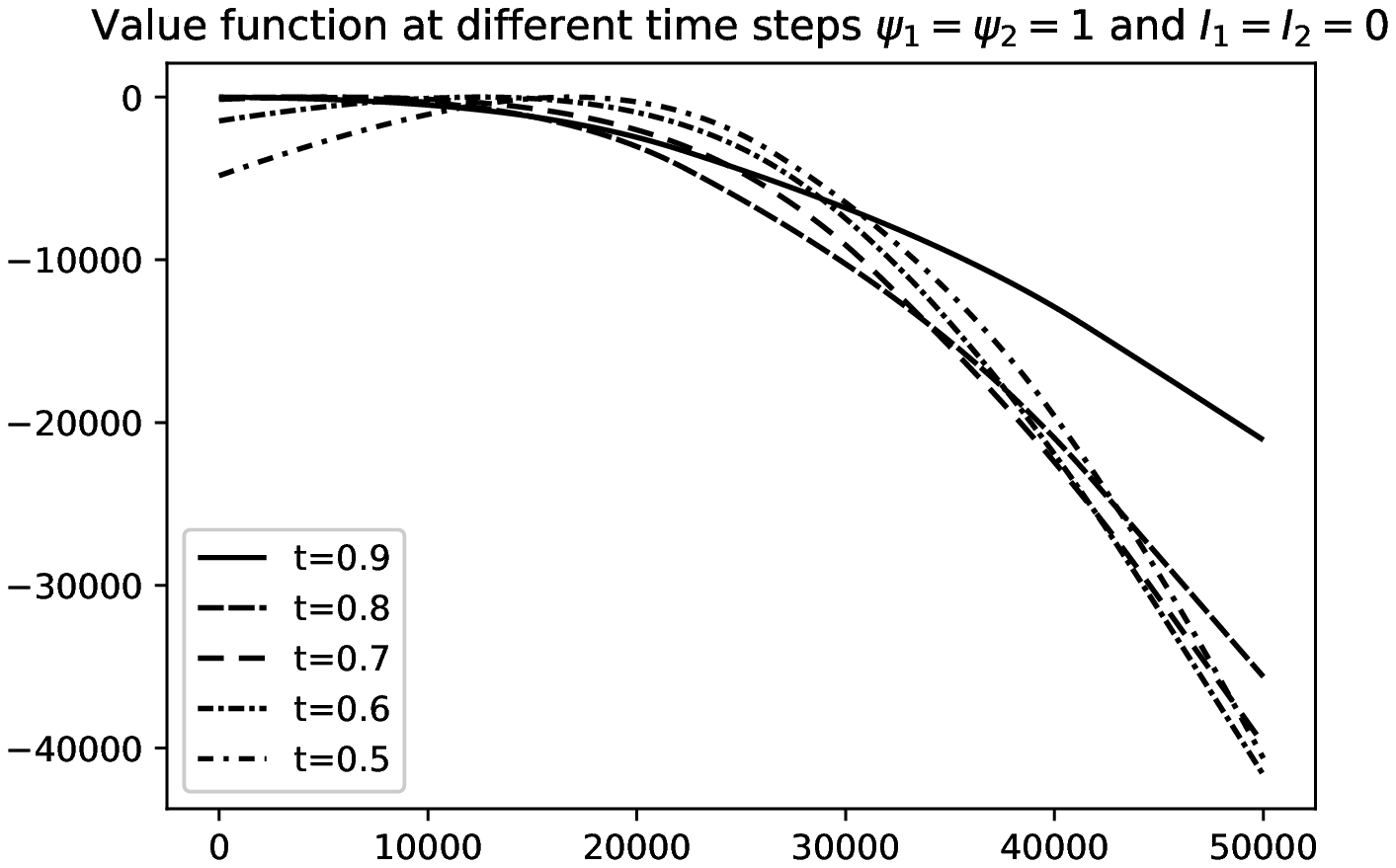}
       \vspace{-3mm}
       \caption{Evolution of the value function $v$ between $t=0.5$ and $t=0.9$ using neural networks. }\label{value_funcions_59_nets}
     \end{center}
  \end{minipage} 
\end{figure}
\vspace{-3mm}

Next, we plot the strategy (in terms of limits and volumes) of the trader in both venues, using finite difference schemes.

\subsubsection{Strategy: limit orders and volumes with finite difference schemes}

In Figures \ref{limits_venue1} and \ref{limits_venue2}, we plot the limits at which the trader posts his limit orders in the two venues, given equal spread and imbalance processes.
\begin{figure}[H]

\begin{minipage}[c]{.46\linewidth}
  \begin{center}
      \includegraphics[width=0.9\textwidth]{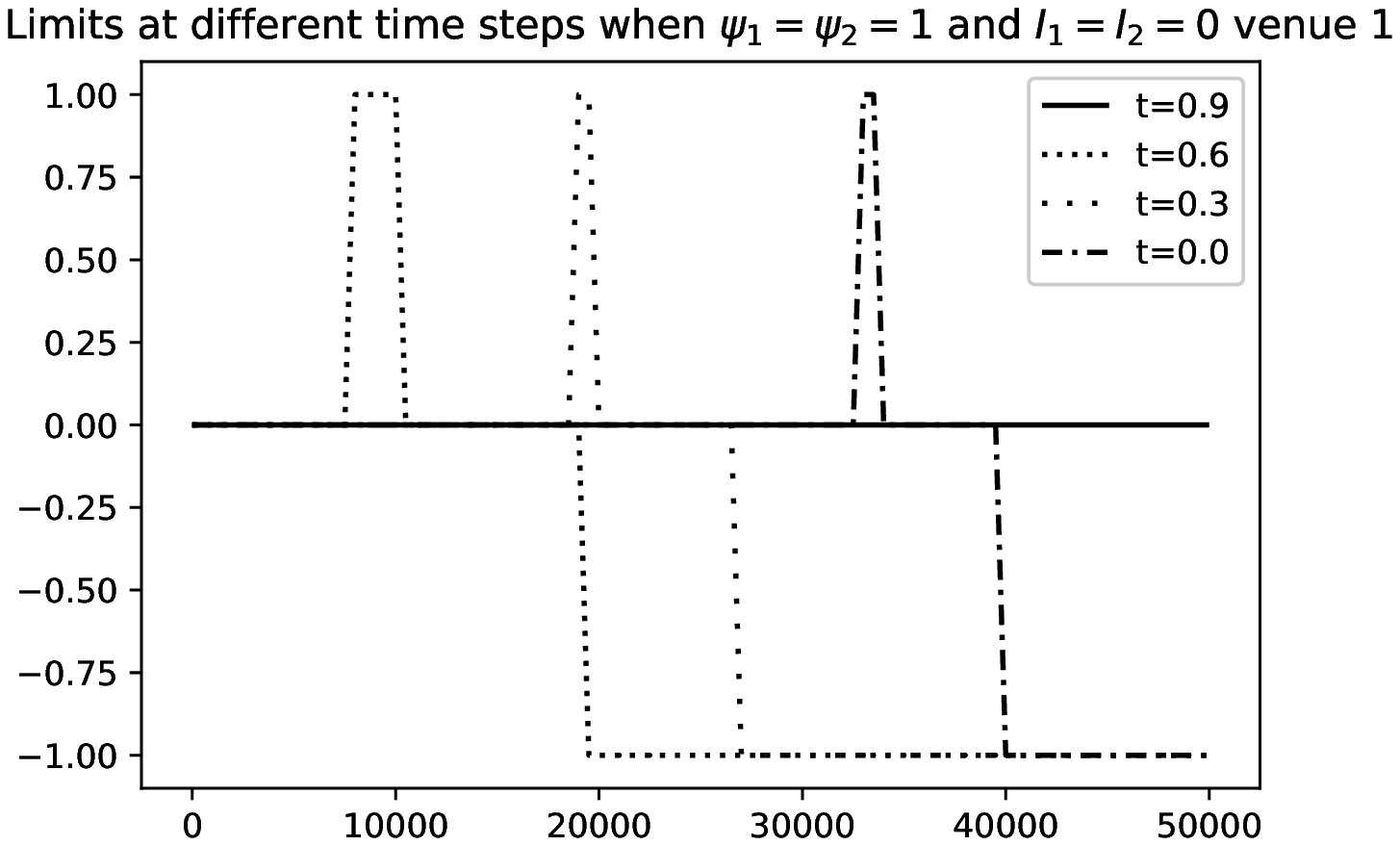}
      \vspace{-3mm}
      \caption{Limit strategy in the first venue, $\psi^1 = \psi^2 = \delta, I^1 = I^2 = 0$.}\label{limits_venue1}
    \end{center}
\end{minipage} \hfill
\begin{minipage}[c]{.46\linewidth}
   \begin{center}
       \includegraphics[width=0.9\textwidth]{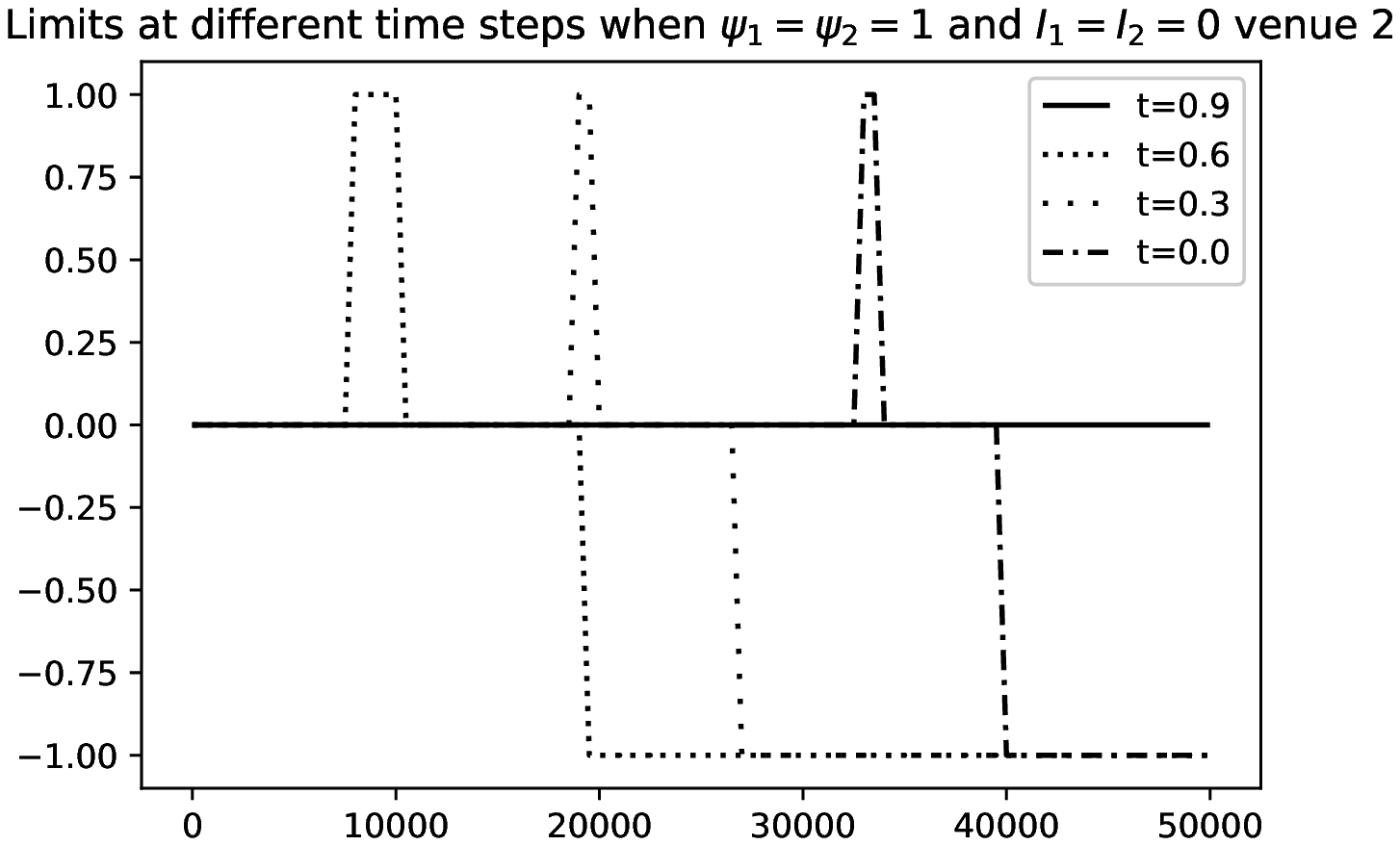}
       \vspace{-3mm}
       \caption{Limit strategy in the second venue, $\psi^1 = \psi^2 = \delta, I^1 = I^2 = 0$.}\label{limits_venue2}
     \end{center}
  \end{minipage} 
\end{figure}

As the trader has the same prior distribution in the two venues, his strategy is the same in both venues. At the beginning of the slice, \textit{i.e.} at $t=0$, the maximum of the value function is near $q=32000$. Therefore, if the trader has a lower inventory, he does not post any orders and wait for the next time step. If he has a higher inventory, he tries to reach $q=32000$ inventory. For $q\in[32000,34000]$, being sufficiently close to the next step optimal inventory, he posts limit orders on the second best limit to collect an additional tick. For $q\in[34000,40000]$, he posts at the first best limit to increase his probability of execution. If he has $q>40000$, he creates a new best limit and accepts to loose one tick in order to be executed faster and reach the optimal inventory at the following time step. We can see in this behavior the trade-off between the possibility of being executed at a more favorable price and the necessity to complete the execution. \\

For the sake of homogeneity (for all $\mathcal{M}$ market states, the trader faces similar trade-off), we considered the same set of controls for the limit where the trader can send his order. For this reason, we can see that even for the spread equal to $\delta$ the trader can submit an order to the limit $p=-1$, which in practice can obviously be treated as $p=0$ due to piecewise monotonous nature of the optimal limit strategy (which is, in fact, monotonous, though it cannot be reflected by finite differences when the optimal volume equals to $0$). \\

When the trader is near the end of the slice, he starts posting limit orders earlier (can be seen if both volumes and limits are considered). For example if $t=0.6$, he begins to trade at the second best limit when $q\in [8000,11000]$, at the first best limit when $q\in [11000,19000]$, and creates a new best limit when $q\in [19000,50000]$. Therefore, if the trader still has a very positive inventory at the end of the slice, he prefers to sacrifice one tick at the first best limit in order to complete his execution at this step. \\

It is important to highlight the fact that, when $t=0.9$, the trader does not rush to liquidate his inventory completely. This comes from the absence of a terminal penalty, often used in optimal liquidation problem to guarantee the complete execution of the inventory. It enables in some sense to ``relax'' the optimal execution framework on a slice, as the part of the inventory that has not been executed during one slice is split between the remaining ones. \\

We plot in Figures \ref{volumes_venue1} and \ref{volumes_venue2} the volumes posted in both venues, for the same spread and imbalance. We see that, at the beginning of the slice, the trader begins to post a nonzero volume only when $q>32000$. Moreover, he posts a higher volume when he is near the end of the slice. 
\vspace{-3mm}
\begin{figure}[H]
\begin{minipage}[c]{.48\linewidth}
  \begin{center}
      \includegraphics[width=0.9\textwidth]{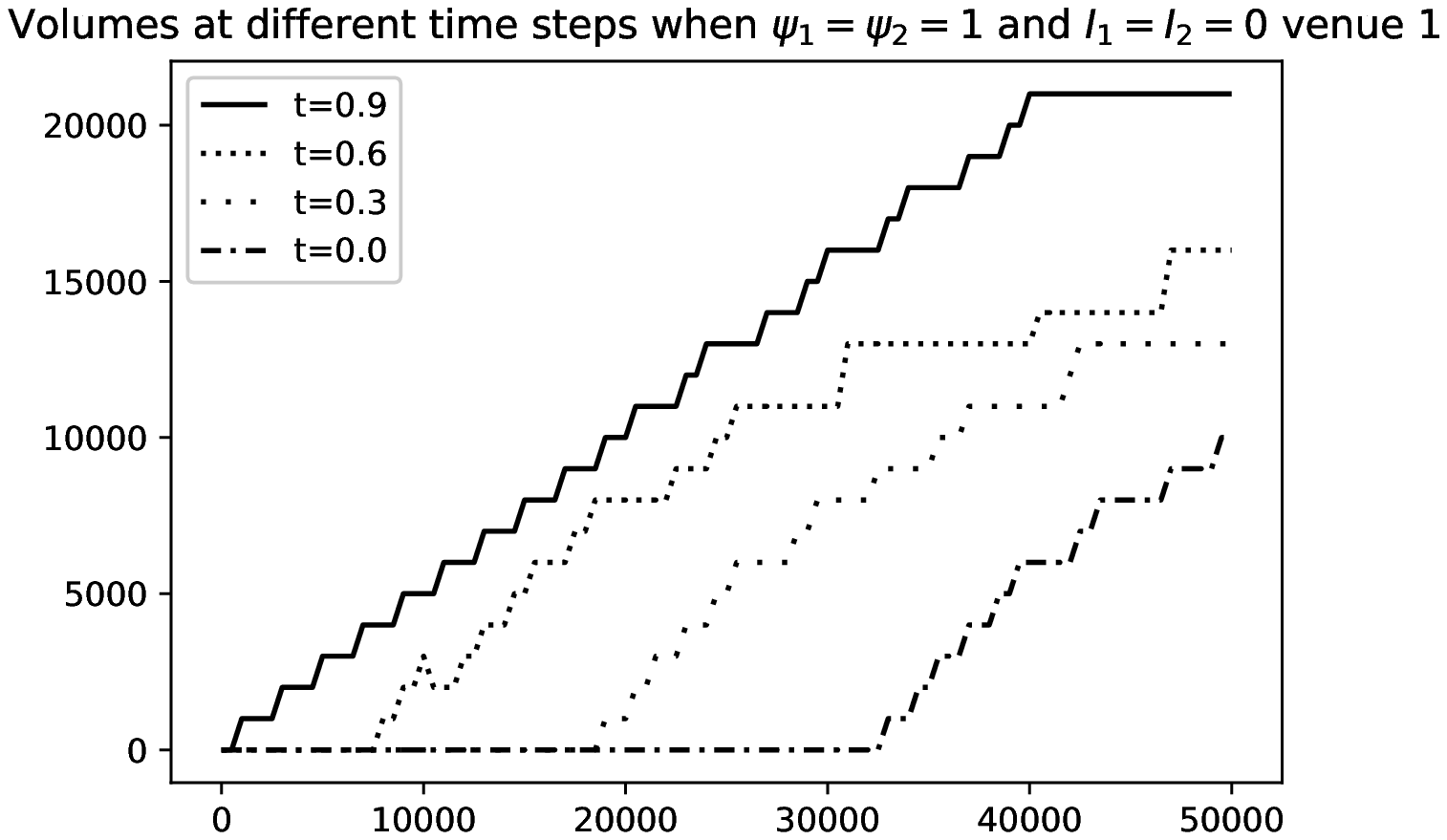}
      \vspace{-3mm}
      \caption{Volume sent to the first venue, \protect\\ $\psi^1 = \psi^2 = \delta, I^1 = I^2 = 0$.}\label{volumes_venue1}
    \end{center}
\end{minipage} \hfill
\begin{minipage}[c]{.48\linewidth}
   \begin{center}
       \includegraphics[width=0.9\textwidth]{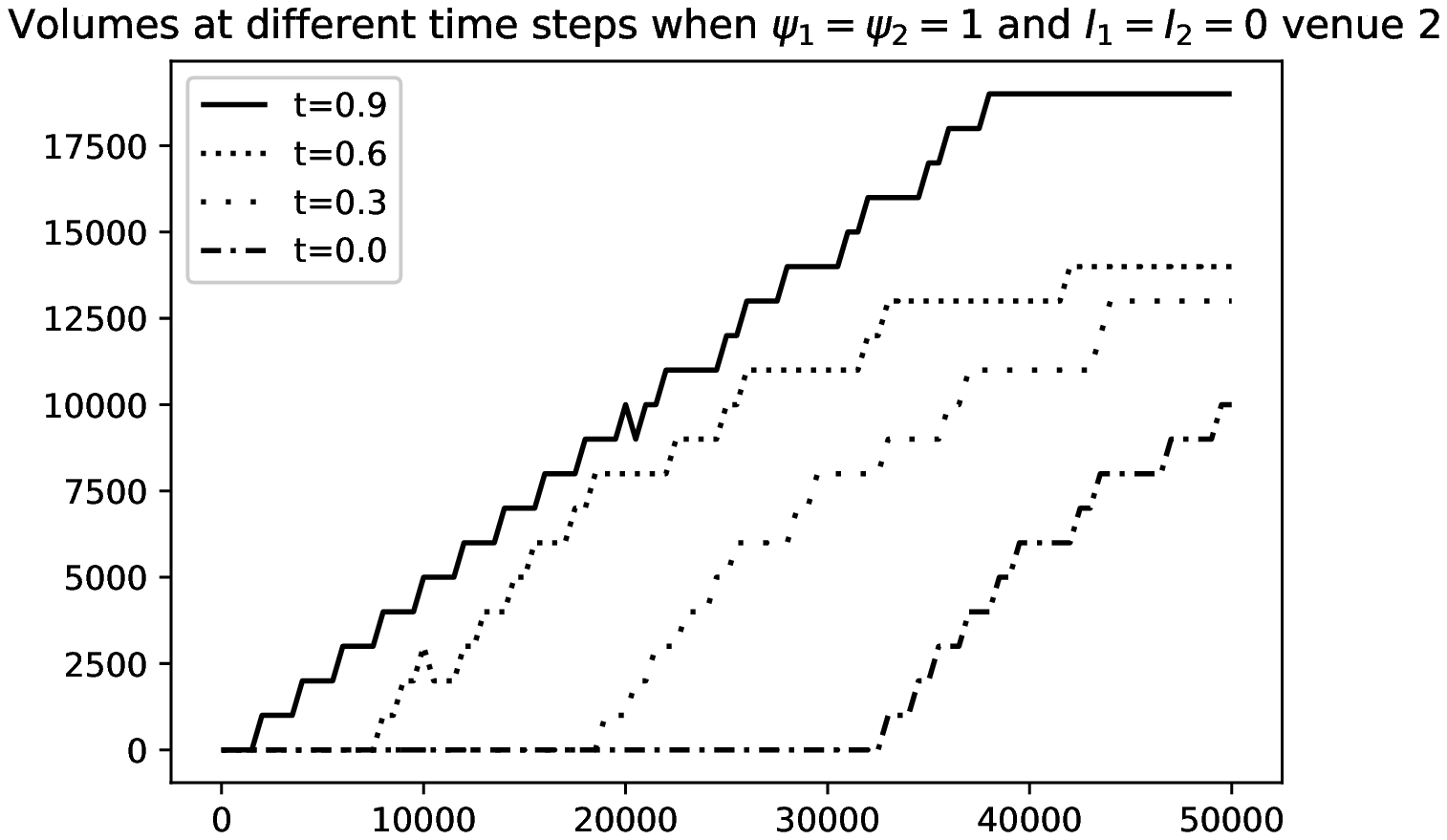}
       \vspace{-3mm}
       \caption{Volume sent to the second venue, $\psi^1 = \psi^2 = \delta, I^1 = I^2 = 0$.}\label{volumes_venue2}
     \end{center}
  \end{minipage} 
\end{figure}
\vspace{-3mm}
s
When the second venue has a higher spread, we plot the strategy of the trader in both venues in Figures \ref{limits_venue_spr} and \ref{volumes_venue_spr}. 

\begin{figure}[H]
\begin{minipage}[c]{.47\linewidth}
  \begin{center}
      \includegraphics[width=0.88\textwidth]{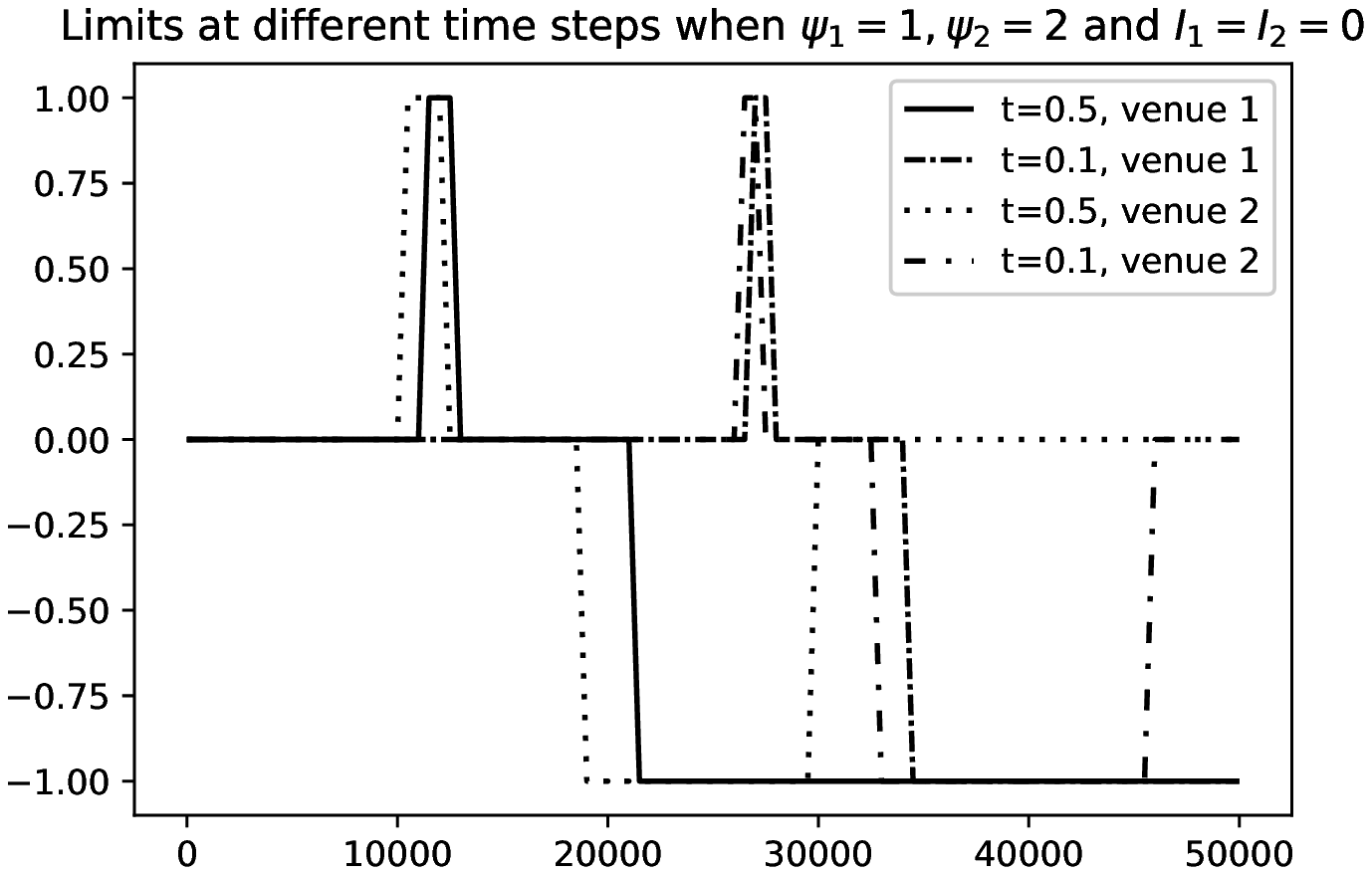}
      \vspace{-3mm}
      \caption{Limit strategy, $\psi^1=\delta, \psi^2 =~2\delta,$ \protect\\ $I^1 = I^2 = 0$.}\label{limits_venue_spr}
    \end{center}
\end{minipage} \hfill
\begin{minipage}[c]{.46\linewidth}
   \begin{center}
       \includegraphics[width=0.9\textwidth]{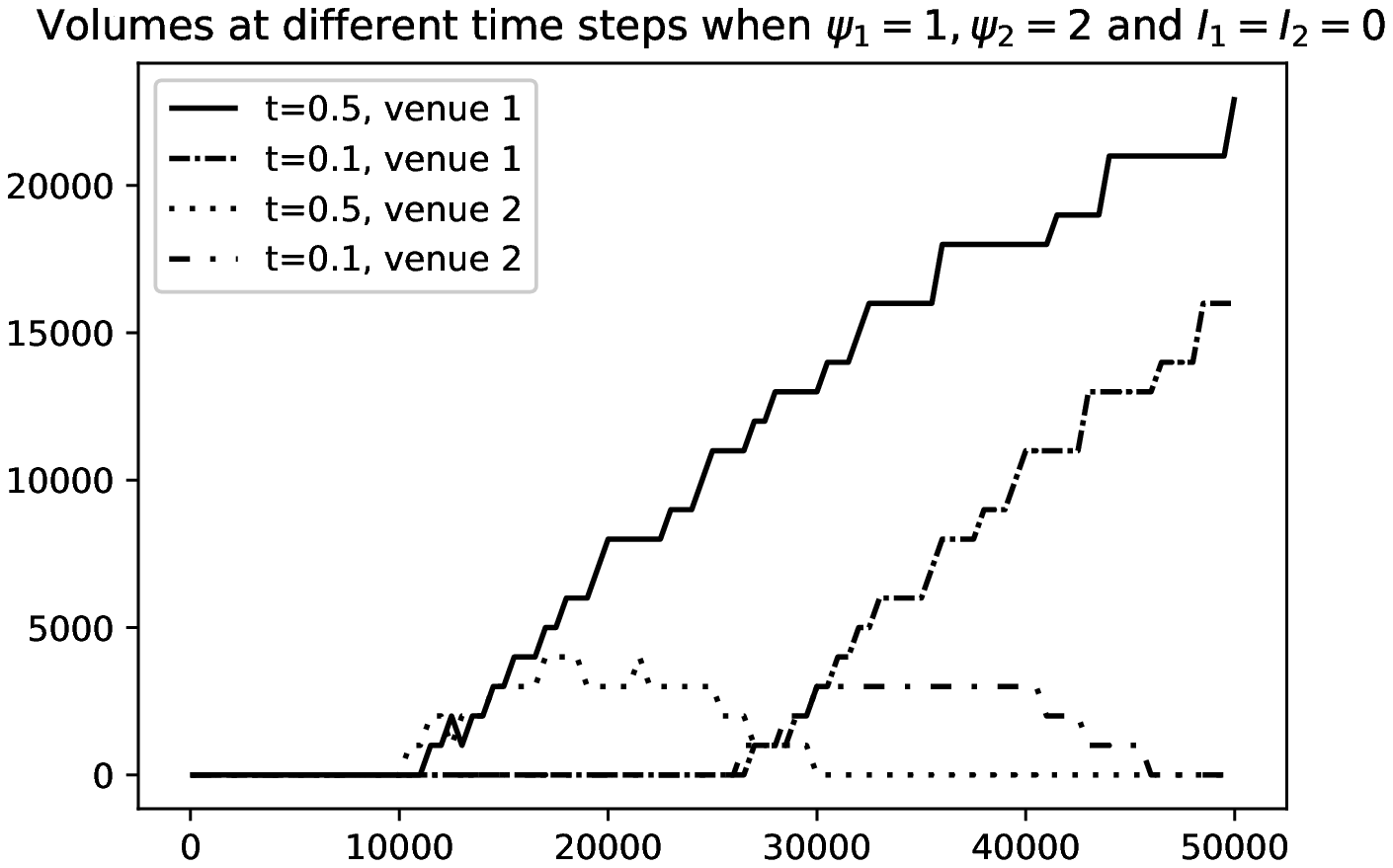}
       \vspace{-3mm}
       \caption{Volume strategy, $\psi^1 = \delta, \psi^2 = 2\delta,$ \protect\\ $I^1 = I^2 = 0$.}\label{volumes_venue_spr}
     \end{center}
  \end{minipage} 
\end{figure}

For $t=0.5$, we see in Figure \ref{limits_venue_spr} that the trader starts to post at the second best limit in the second venue when $q=10000$ and in the first when $q=11000$. For $q\in[18000,21000]$, he creates a new best limit in the second venue to execute his inventory faster but keeps posting at the best limit in the first venue in order to collect a higher spread. Finally, for $q\in[30000,50000]$, he stops posting in the second venue in order to consume more liquidity in the first one where the probability of getting his order filled is higher. Similar interpretations apply for $t=0.1$. \\

In Figure \ref{volumes_venue_spr}, we see that the trader posts a higher volume in the first venue compared to the second one. For $t=0.5$, he starts to trade at $q=10000$ for the second venue and at $q=11000$ for the first one. The volume posted in the first venue increases almost linearly with respect to the inventory. In contrast, the volume posted in the second venue increases until an inventory of $q=22000$, then stays constant until $q=30000$ and decreases to zero afterward. This means that for $q\in[10000,30000]$, the trader prefers to collect the spread from both venues. When $q>30000$, he prefers to stop posting in the second venue, the one with a higher spread, in order to maximize his chances of being executed in the first one. Similar interpretations apply for $t=0.1$. \\

In Figures \ref{limits_venue_imb} and \ref{volumes_venue_imb}, we show the choice of limits and volumes of the trader if the imbalance is more favorable in the second venue. in Figure \ref{limits_venue_imb}, we observe for $t=0.5$ that the trader posts in the first venue at the second best limit for $q\in [12000,15000]$, at the first limit for $q\in [15000,20000]$ and at a new best limit for $q\in [20000,32000]$. At the same time, he posts in the first limit of the second venue when $q\in [12000,22000]$ and at a new best limit for $q\in [22000,50000]$. We see that the trader prefers to post at a higher limit in the second venue because of the higher probability of execution due to a more favorable imbalance. For large inventories, he stops posting in the first venue in order to increase his probability of execution using limit orders in the second venue at a new best limit. Same results hold for $t=0.1$. \\

In Figure \ref{volumes_venue_imb}, we see that the trader posts a majority of his volume in the second venue due to a more favorable imbalance. When his inventory is not too high, he collects the spread from both venues. However, when his inventory is relatively high, he sends all the volume to the first venue in order to increase the probability of filling. 

\begin{figure}[H]

\begin{minipage}[c]{.46\linewidth}
  \begin{center}
      \includegraphics[width=0.9\textwidth]{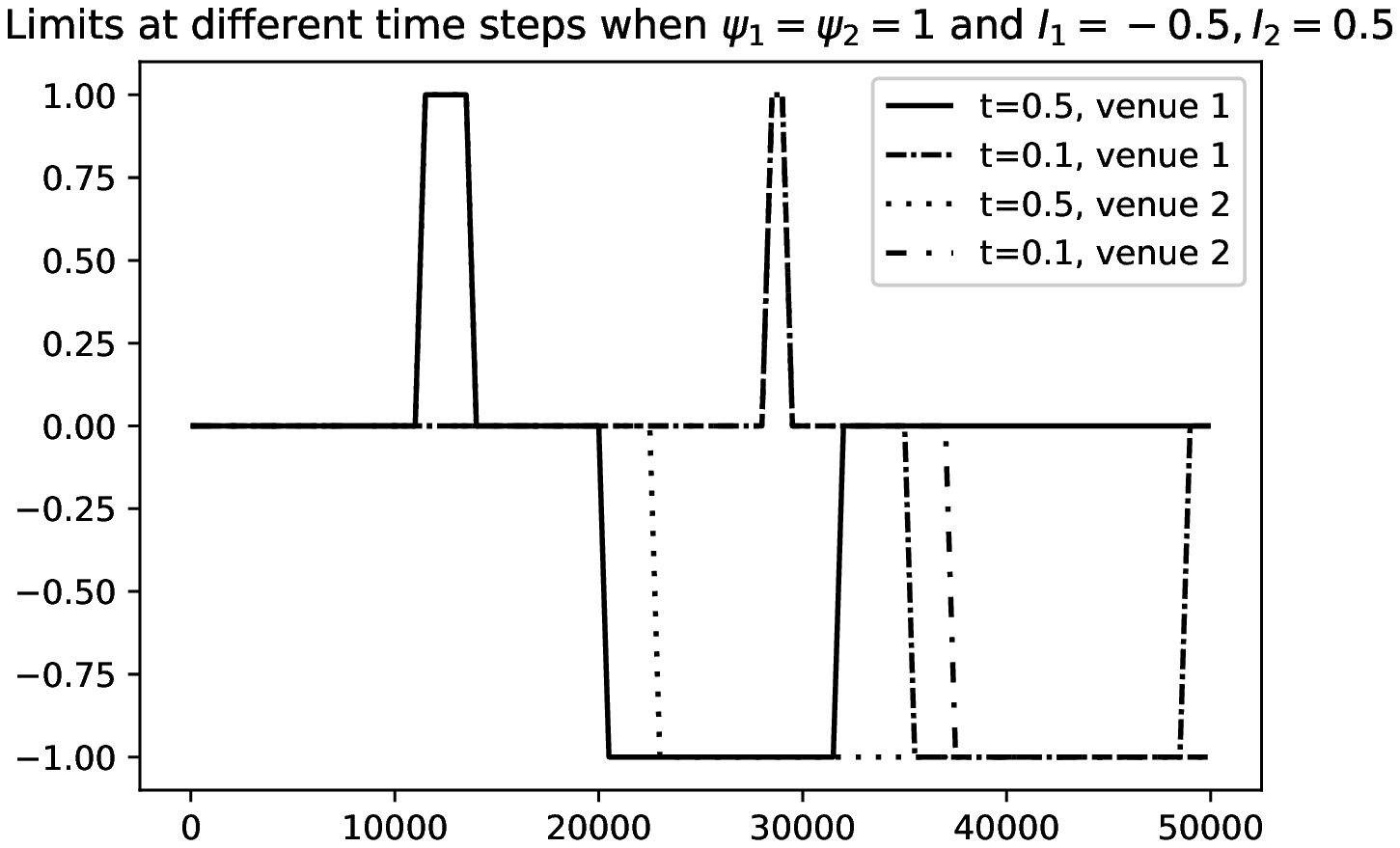}
      \vspace{-7mm}
      \caption{Limit order strategy, $\psi^1 =\psi^2 = \delta,$ \protect\\$ I^1 =-0.5, I^2 = 0.5$.}\label{limits_venue_imb}
    \end{center}
\end{minipage} \hfill
\begin{minipage}[c]{.46\linewidth}
   \begin{center}
       \includegraphics[width=0.9\textwidth]{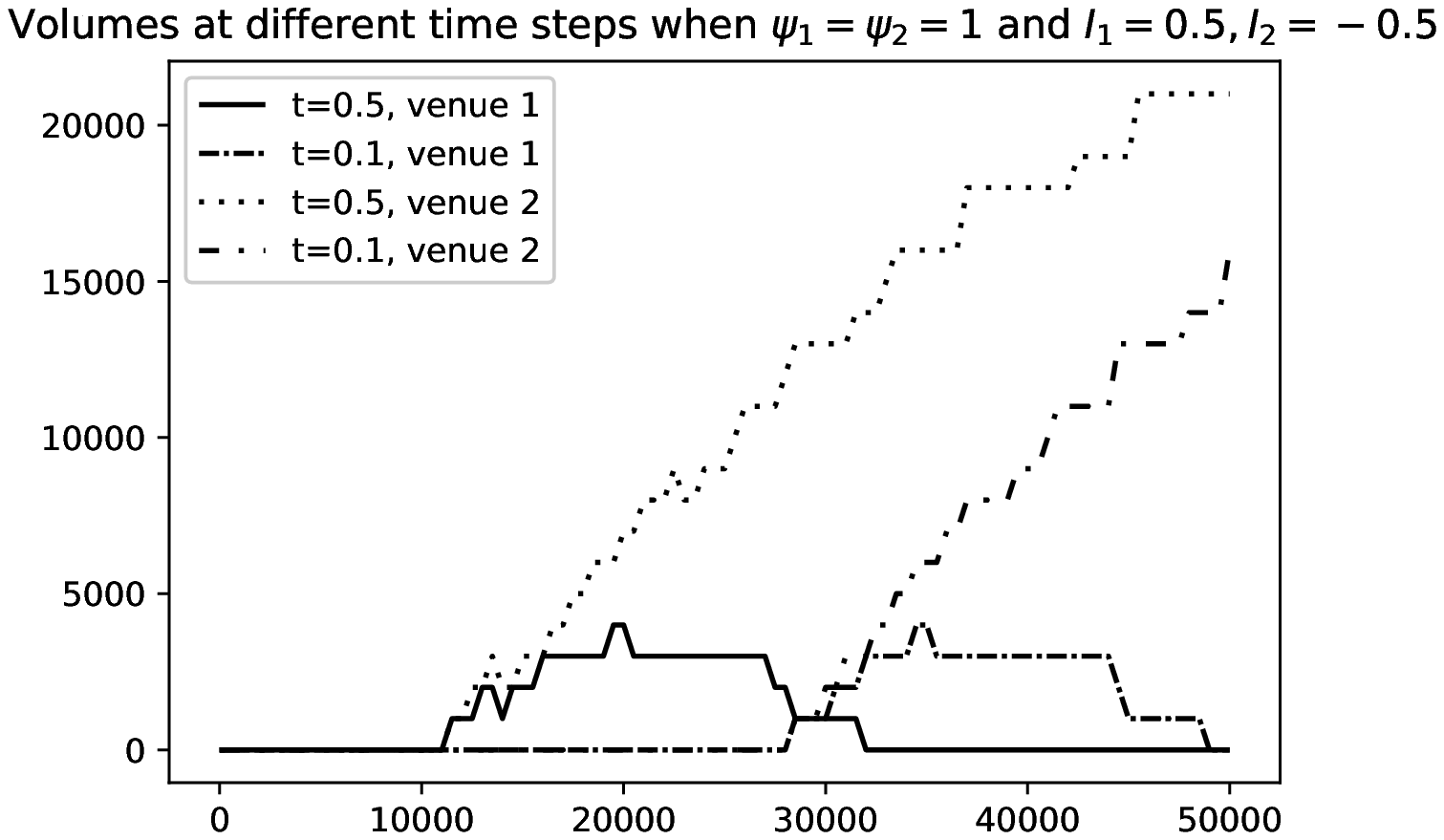}
       \vspace{-7mm}
       \caption{Volume strategy, $\psi^1 =\psi^2 = \delta,$\protect\\$ I^1 =-0.5, I^2 = 0.5$.}\label{volumes_venue_imb}
     \end{center}
  \end{minipage} 
\end{figure}

We now describe the strategies on the limits and the volumes obtained by a reinforcement learning approach.

\subsubsection{Strategy: limit orders and volumes with neural networks}

We plot in Figures \ref{limits_venue1_nets} and \ref{limits_venue2_nets} the strategies on the limits used by the trader. As soon as limits are represented by probabilities to send an order to each precise limit, for graphical representation, we plot the limit corresponding to the highest of the three probabilities. We see that the choice of the limits is in line with the ones of Figures \ref{limits_venue1} and \ref{limits_venue2} up to states where optimal order volume is at $0$ (in this case limit values are indistinguishable for finite differences). When the trader is at the beginning of the slice, for a small inventory, he prefers to collect a higher spread by being executed at the second best limit. When he is near the end of the slice, he prefers to be filled at a less favorable price, at the best or new best limit, in order to lower his execution risk. We can also see that neural networks preserve the monotonicity of the optimal limit function.
\vspace{-3mm}
\begin{figure}[H]
\begin{minipage}[c]{.48\linewidth}
  \begin{center}
      \includegraphics[width=0.88\textwidth]{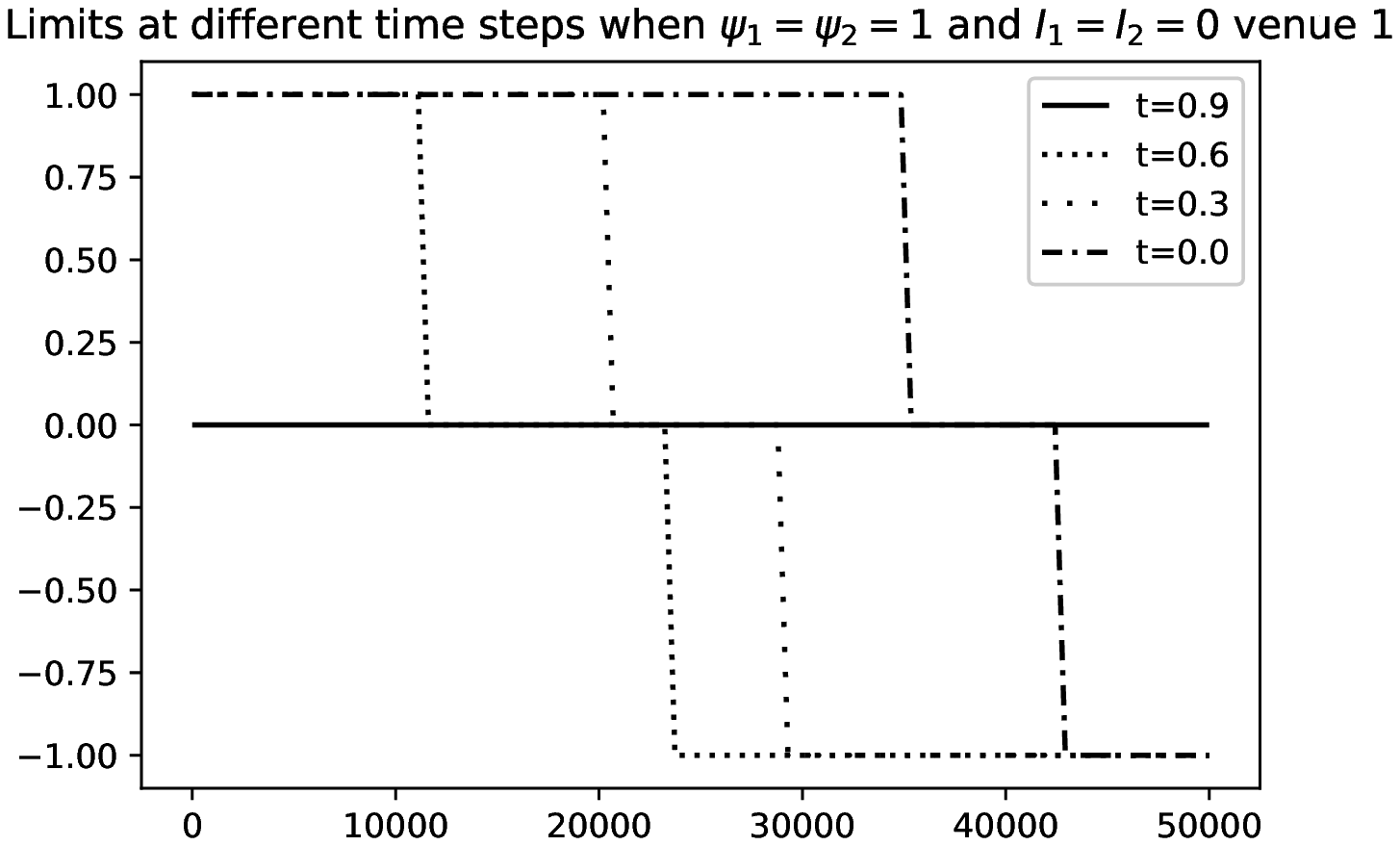}
      \vspace{-3mm}
      \caption{Limit order strategy in the first \protect\\venue, $\psi^1 = \psi^2 = \delta, I^1 = I^2 = 0$ using neural \protect\\ networks.}\label{limits_venue1_nets}
    \end{center}
\end{minipage} \hfill
\begin{minipage}[c]{.46\linewidth}
   \begin{center}
       \includegraphics[width=0.9\textwidth]{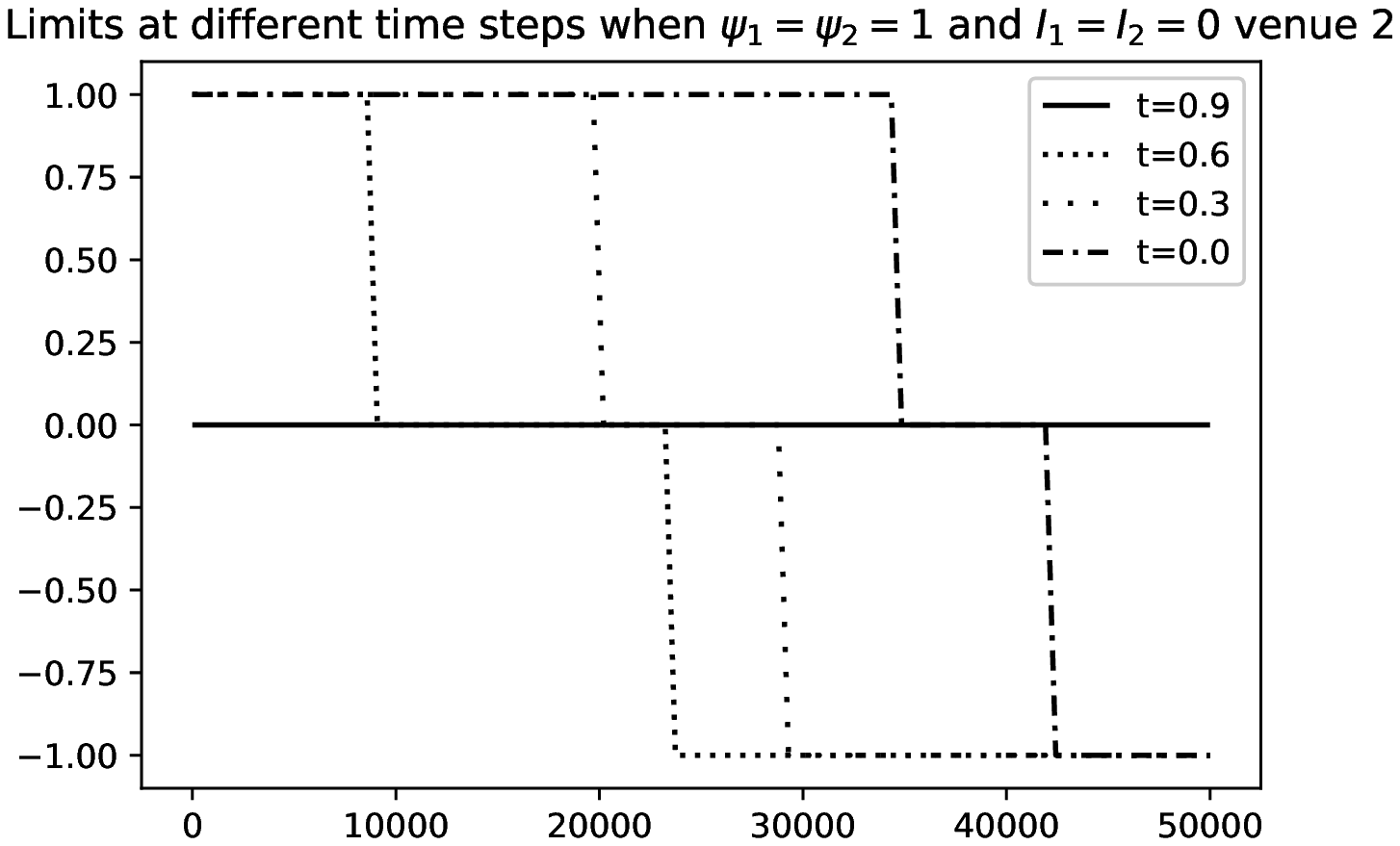}
       \vspace{-3mm}
       \caption{Limit order strategy in the second venue, $\psi^1 = \psi^2 = \delta, I^1 = I^2 = 0$ using neural networks.}\label{limits_venue2_nets}
     \end{center}
  \end{minipage} 
\end{figure}
\vspace{-3mm}
In Figures \ref{volumes_venue1_nets} and \ref{volumes_venue2_nets}, we plot the posted volumes of the trader in both venues for the same spread and imbalance. We see that the strategy is a smoothed approximation of the one obtained using finite differences in Figures \ref{volumes_venue1} and \ref{volumes_venue2}. We see that at the very beginning of the slice, the trader is not going to trade if his inventory is already small enough. The strategy in both venues is the same up to some negligible numerical effects.
\begin{figure}[H]

\begin{minipage}[c]{.46\linewidth}
  \begin{center}
      \includegraphics[width=0.9\textwidth]{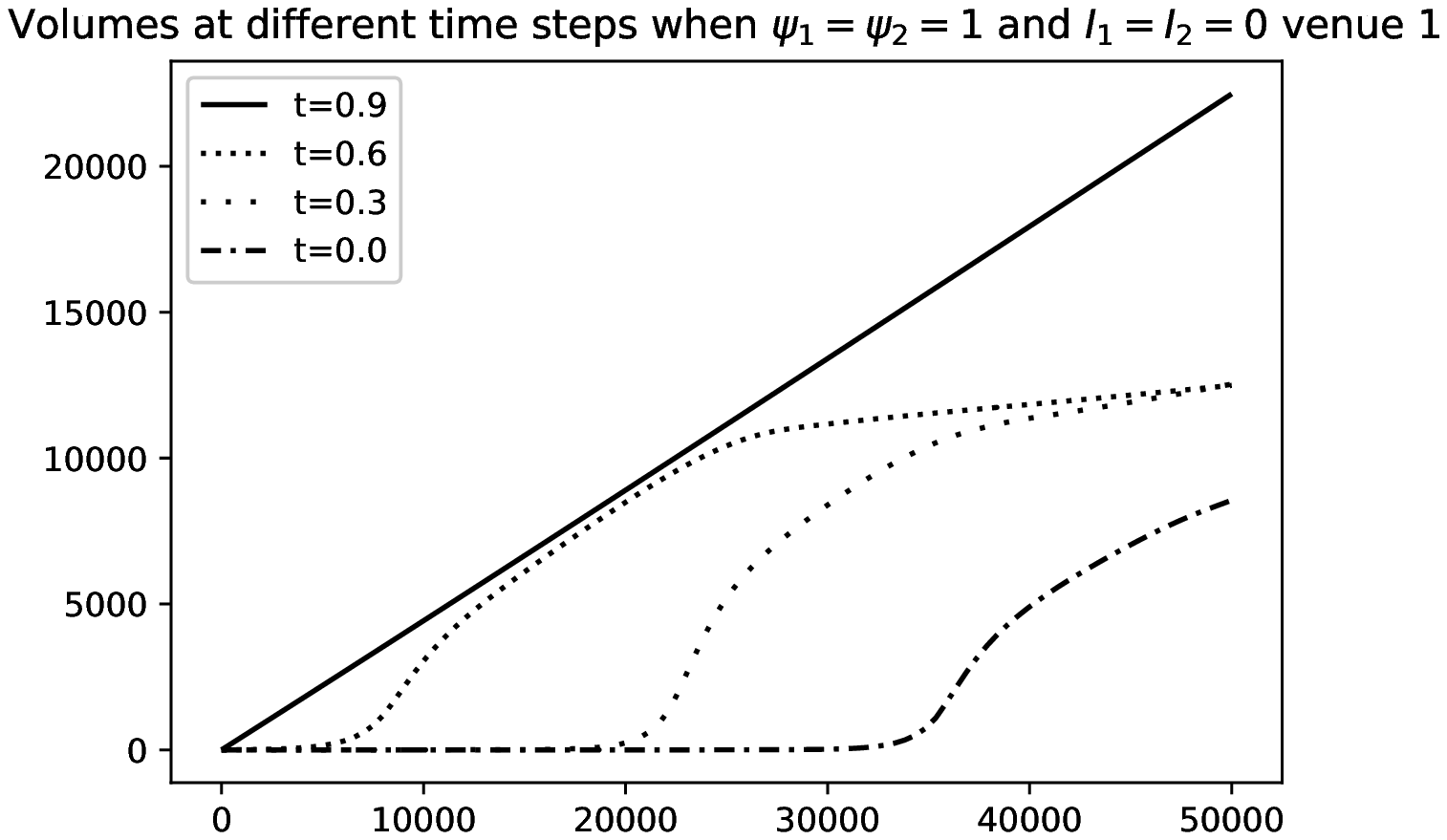}
      \vspace{-3mm}
      \caption{Volume posted in the first venue, $\psi^1 = \psi^2 = \delta, I^1 = I^2 = 0$ using neural networks.}\label{volumes_venue1_nets}
    \end{center}
\end{minipage} \hfill
\begin{minipage}[c]{.46\linewidth}
   \begin{center}
       \includegraphics[width=0.9\textwidth]{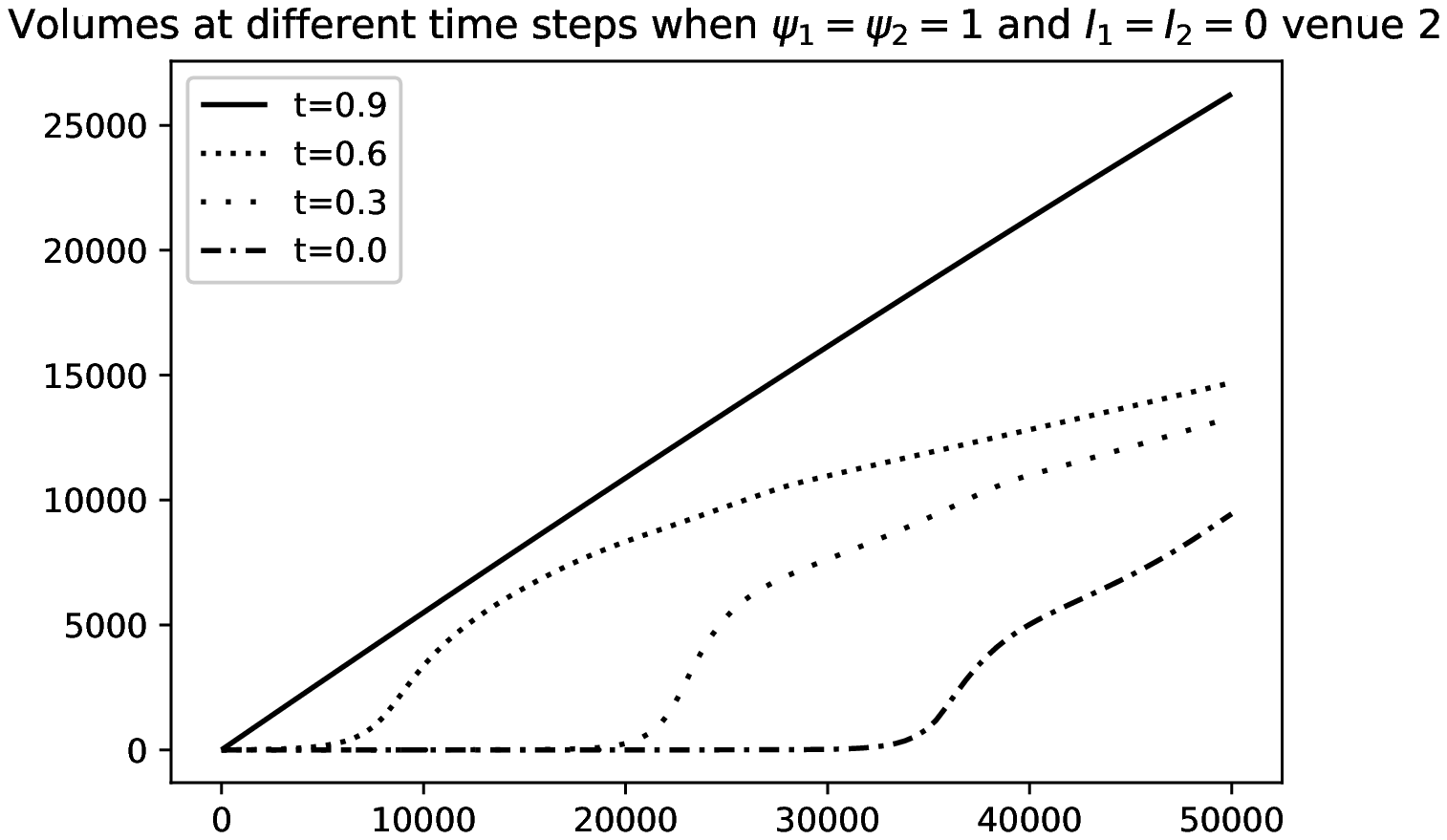}
       \vspace{-3mm}
       \caption{Volume posted in the second venue, $\psi^1 = \psi^2 = \delta, I^1 = I^2 = 0$ using neural networks.}\label{volumes_venue2_nets}
     \end{center}
  \end{minipage} 
\end{figure}

If the spread of the second venue is higher, we see in Figure \ref{limits_venue_spr_nets} that the strategy with the limits is the same as in Figure \ref{limits_venue_spr}. It is interesting to note in Figure \ref{volumes_venue_spr_nets} that the trader does not stop posting in the second venue, as in Figure \ref{volumes_venue_spr}, again because of the approximation coming from neural networks. However, this behavior enables to perform some exploration of the venue parameters. For example, if the trader follows the strategy given by finite differences in Figure \ref{volumes_venue_spr}, he posts a volume equal to $0$ in the second venue when $q>32000$ for $t=0.5$. However, if the trader underestimates the prior on the filling probability in the second venue $\hat \lambda^2$, he will keep sending orders in the first venue, neglecting the possibility of splitting his orders which can potentially improve his execution. Moreover, Figures \ref{value_funcions_04_nets} and~\ref{value_funcions_59_nets} show that this slight difference in the obtained controls does not change drastically the performance of the trader in terms of the value function. 

\begin{figure}[H]

\begin{minipage}[c]{.48\linewidth}
  \begin{center}
      \includegraphics[width=0.88\textwidth]{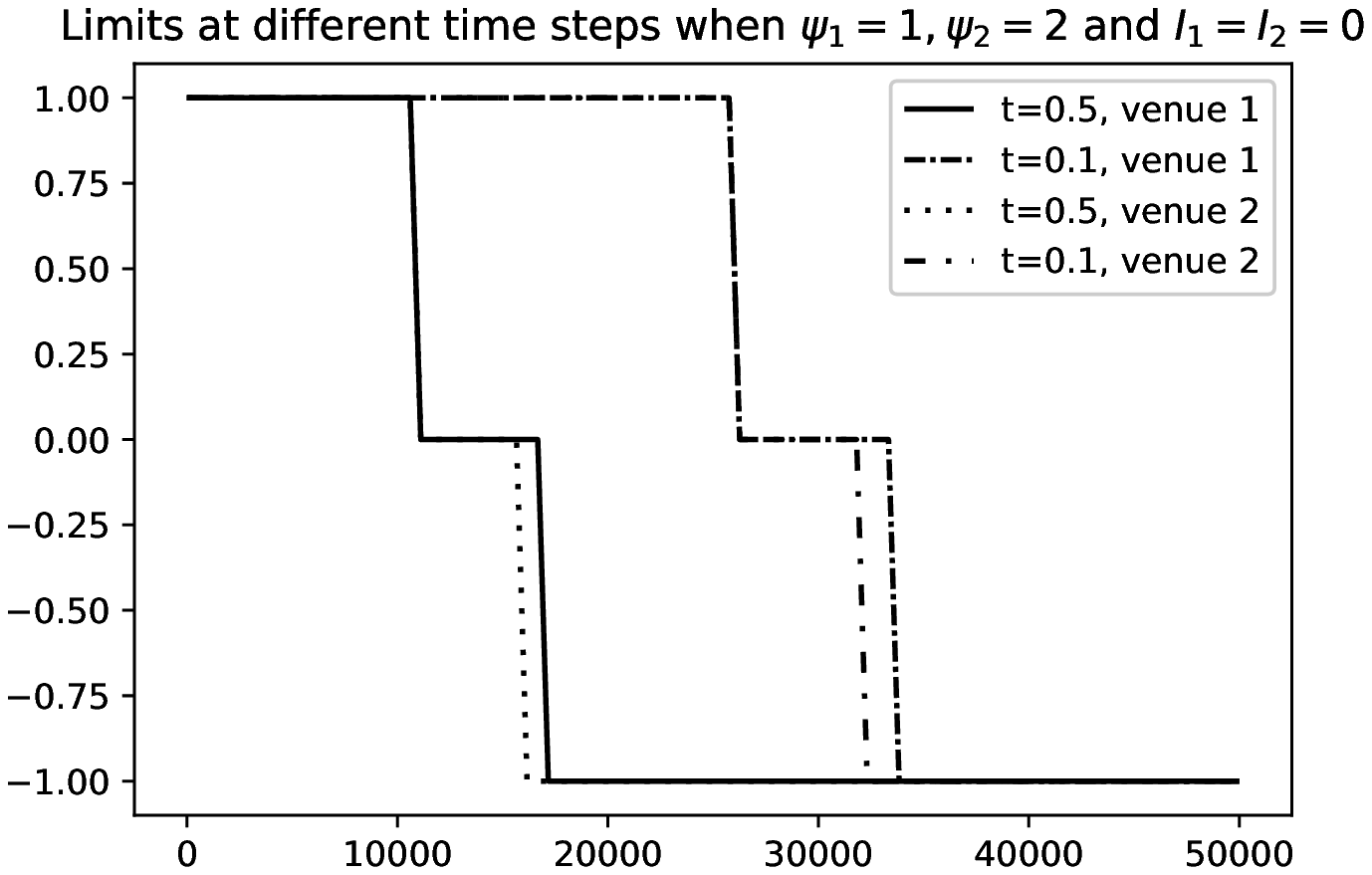}
      \vspace{-3mm}
      \caption{Limit order strategy, $\psi^1 =\delta, \psi^2 = 2\delta,$\protect\\ $I^1 = I^2 = 0$ using neural networks.}\label{limits_venue_spr_nets}
    \end{center}
\end{minipage} \hfill
\begin{minipage}[c]{.46\linewidth}
   \begin{center}
       \includegraphics[width=0.9\textwidth]{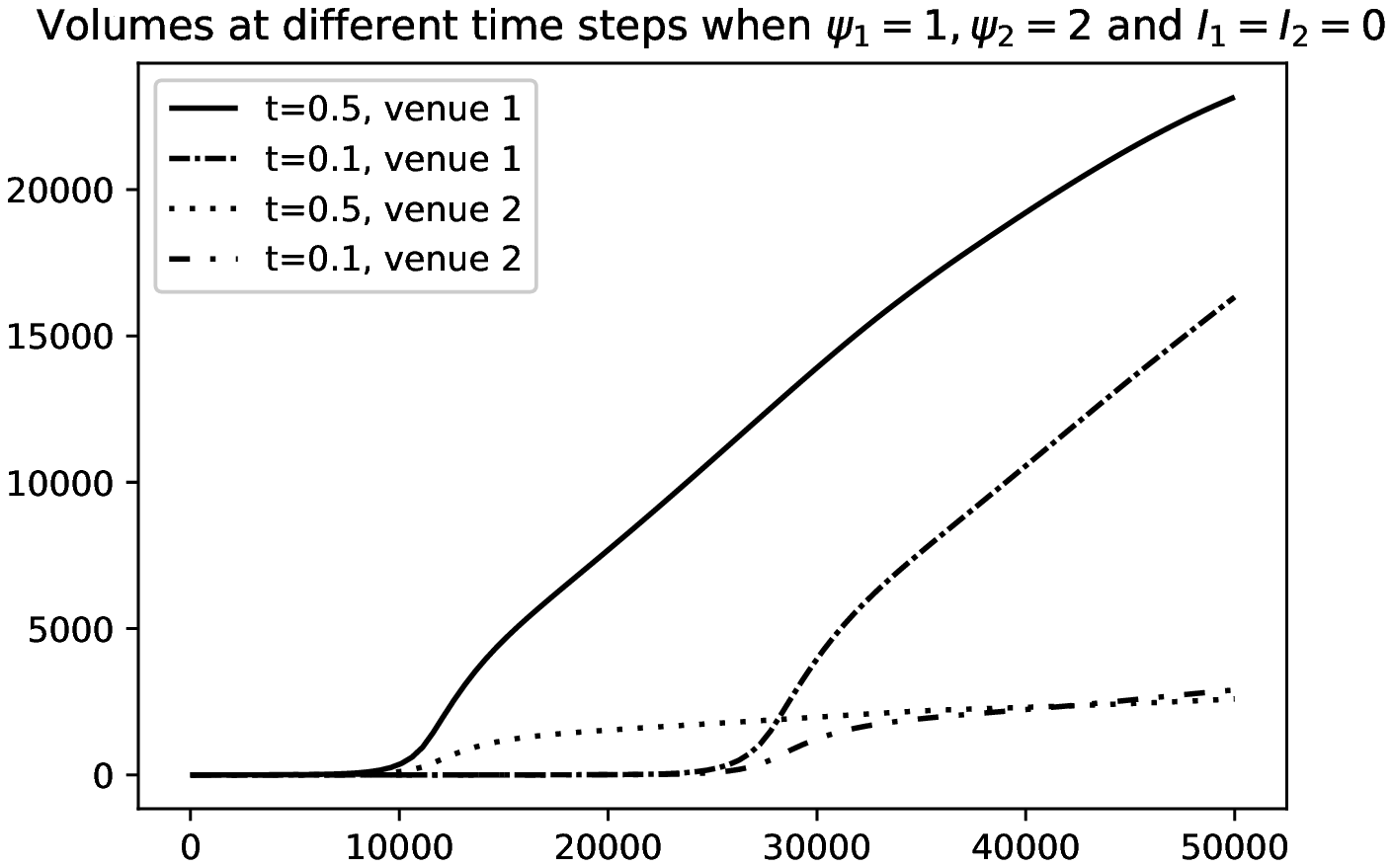}
       \vspace{-3mm}
       \caption{Volume strategy, $\psi^1 = \delta, \psi^2 = 2\delta,$\protect\\ $I^1 = I^2 = 0$ using neural networks.}\label{volumes_venue_spr_nets}
     \end{center}
  \end{minipage} 
\end{figure}

The same comments apply to Figures \ref{limits_venue_imb_nets} and \ref{volumes_venue_imb_nets}, where we see that the trader posts a small but nonzero volume in the first venue with a less favorable imbalance which potentially allows to perform exploration in this venue and faster improve parameter estimations. 
\begin{figure}[H]

\begin{minipage}[c]{.46\linewidth}
  \begin{center}
      \includegraphics[width=0.9\textwidth]{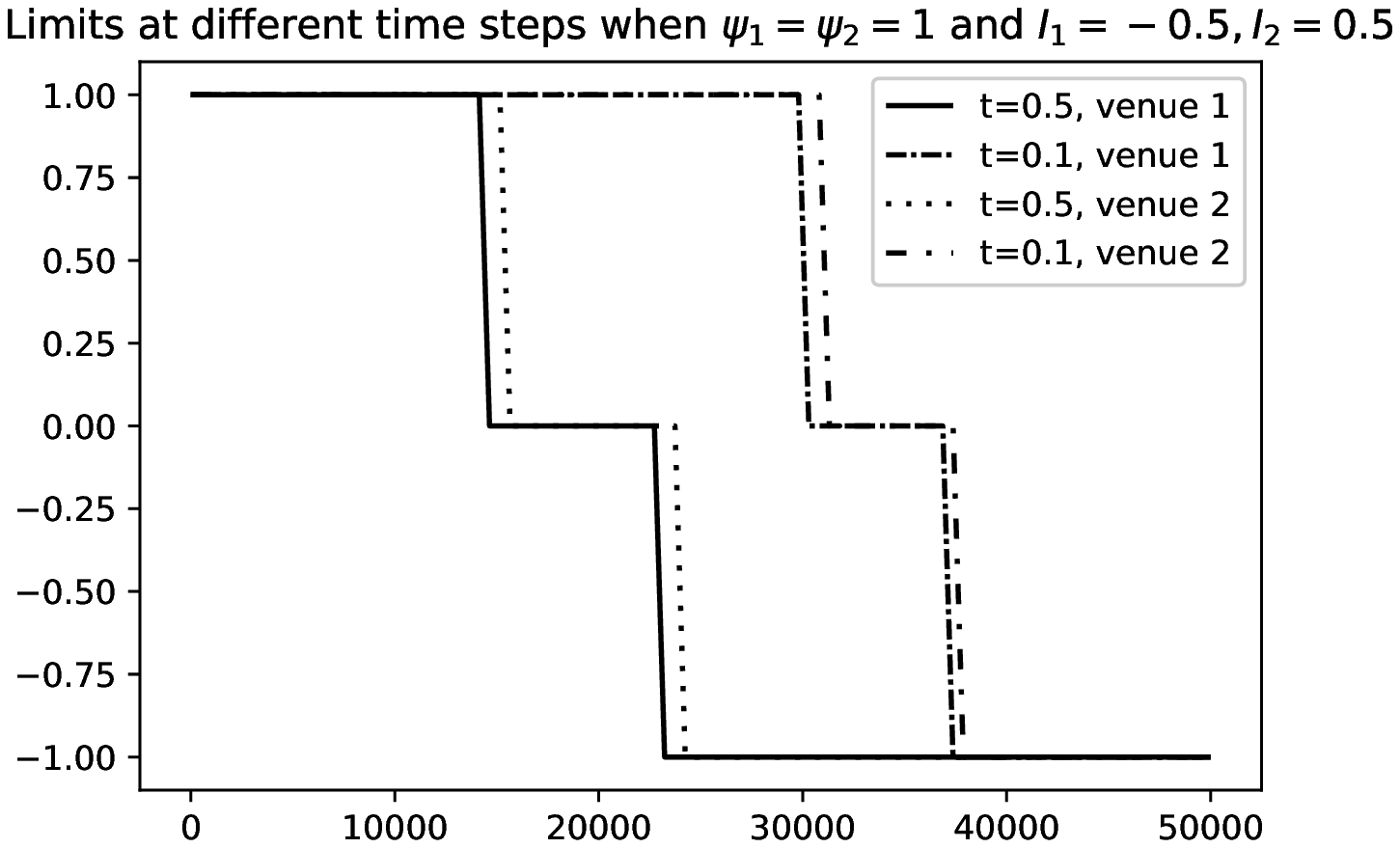}
      \vspace{-3mm}
      \caption{Limit order strategy, $\psi^1 =\psi^2 = \delta,$\protect\\ $I^1 =-0.5, I^2 = 0.5$ using neural networks.}\label{limits_venue_imb_nets}
    \end{center}
\end{minipage} \hfill
\begin{minipage}[c]{.46\linewidth}
   \begin{center}
       \includegraphics[width=0.9\textwidth]{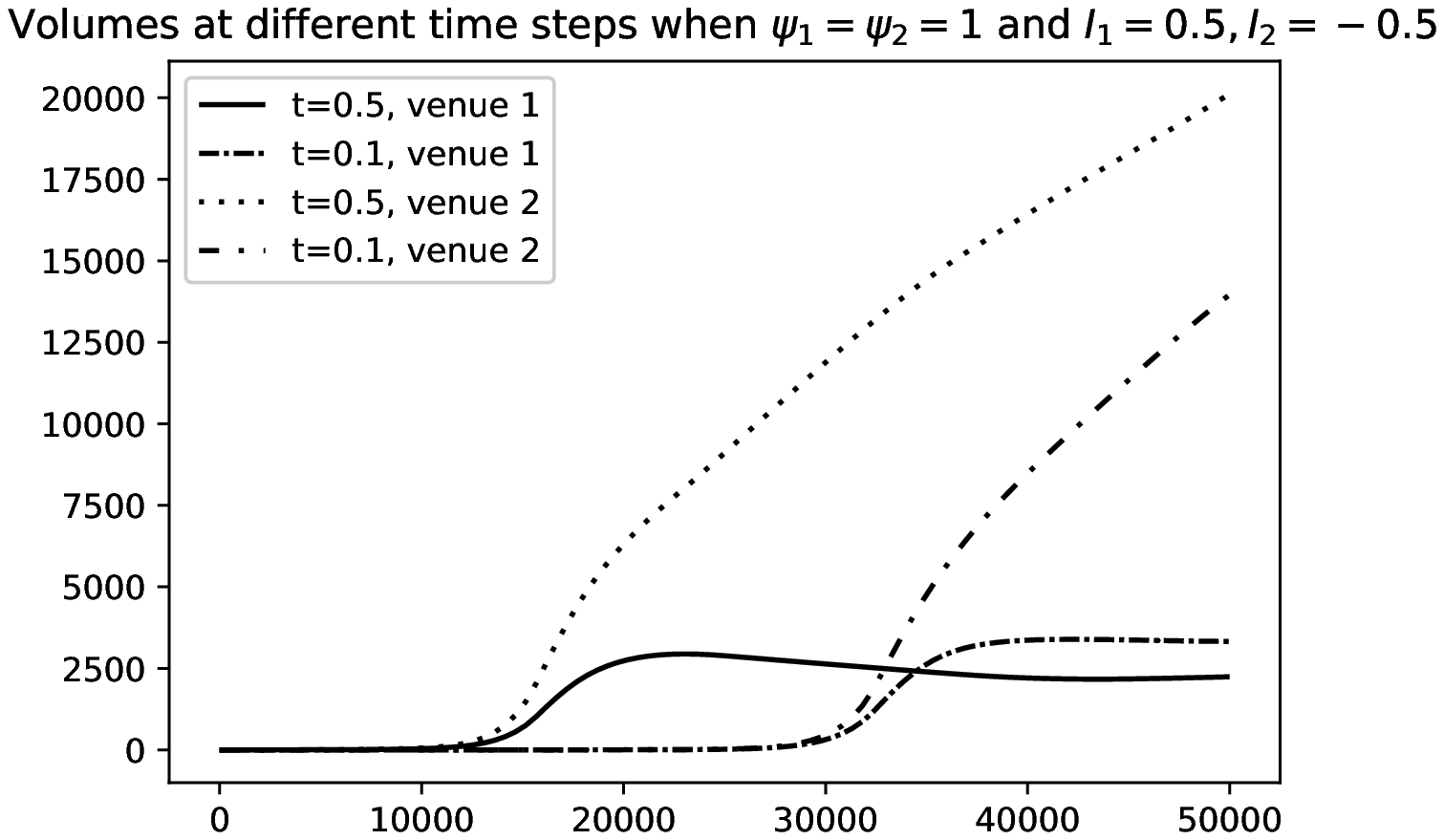}
       \vspace{-3mm}
       \caption{Volume strategy, $\psi^1 =\psi^2 = \delta,$\protect\\ $I^1 =-0.5, I^2 = 0.5$ using neural networks.}\label{volumes_venue_imb_nets}
     \end{center}
  \end{minipage} 
\end{figure}

\subsection{Two different venues}

In this section, we analyze the behavior of the trader believing that the first venue is better than the second venue in terms of filling rate. We compare the solutions obtained via finite difference schemes and neural networks.

\subsubsection{Value function}

We show in Figures \ref{value_funcions_04diff} and \ref{value_funcions_59diff} the evolution of the value function of the trader during a slice of execution, obtained through the finite difference method.

\begin{figure}[H]

\begin{minipage}[c]{.46\linewidth}
  \begin{center}
      \includegraphics[width=0.9\textwidth]{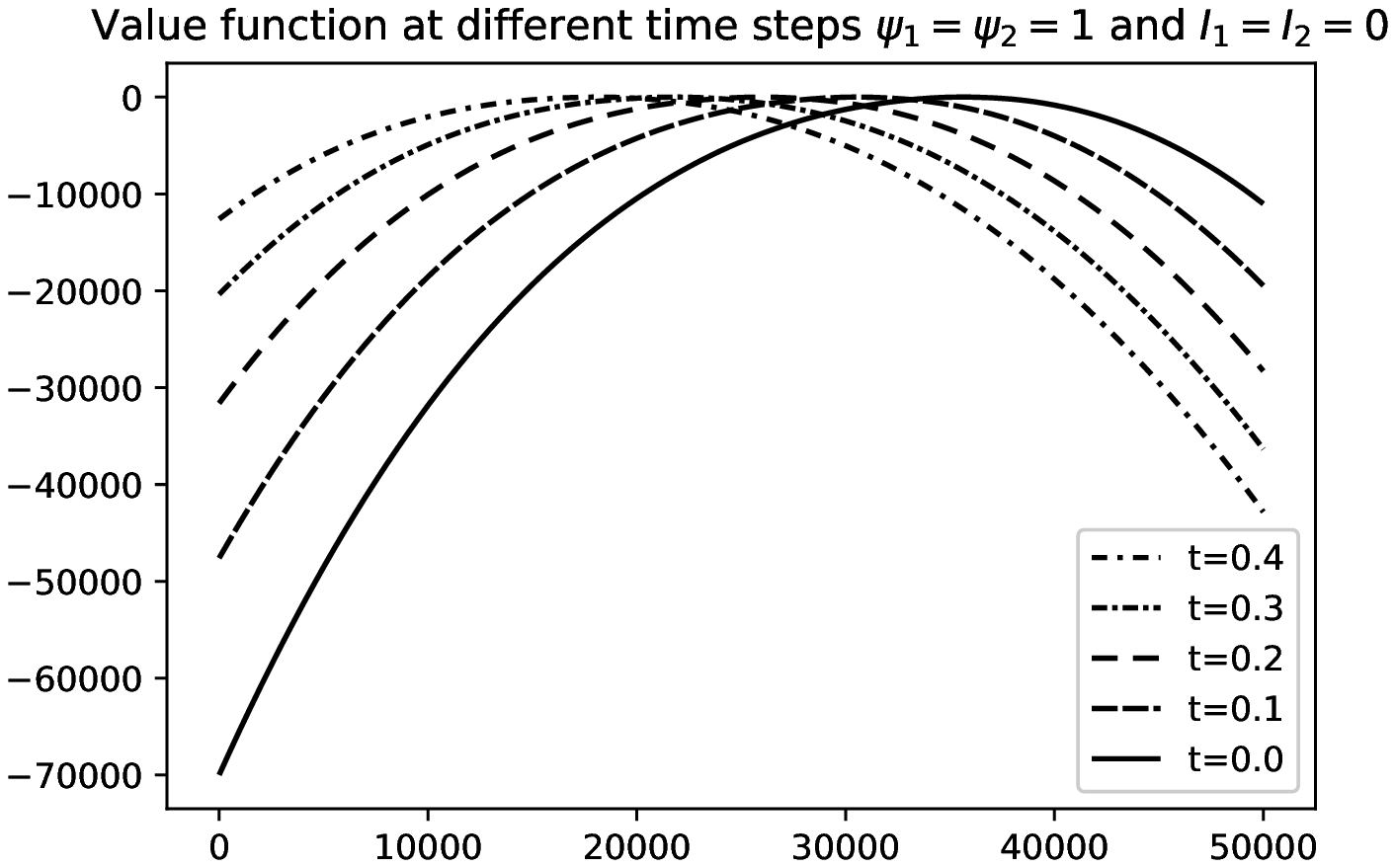}
      \vspace{-3mm}
      \caption{Evolution of the value function $v$ between $t=0$ and $t=0.4$.}\label{value_funcions_04diff}
    \end{center}
\end{minipage} \hfill
\begin{minipage}[c]{.46\linewidth}
   \begin{center}
       \includegraphics[width=0.9\textwidth]{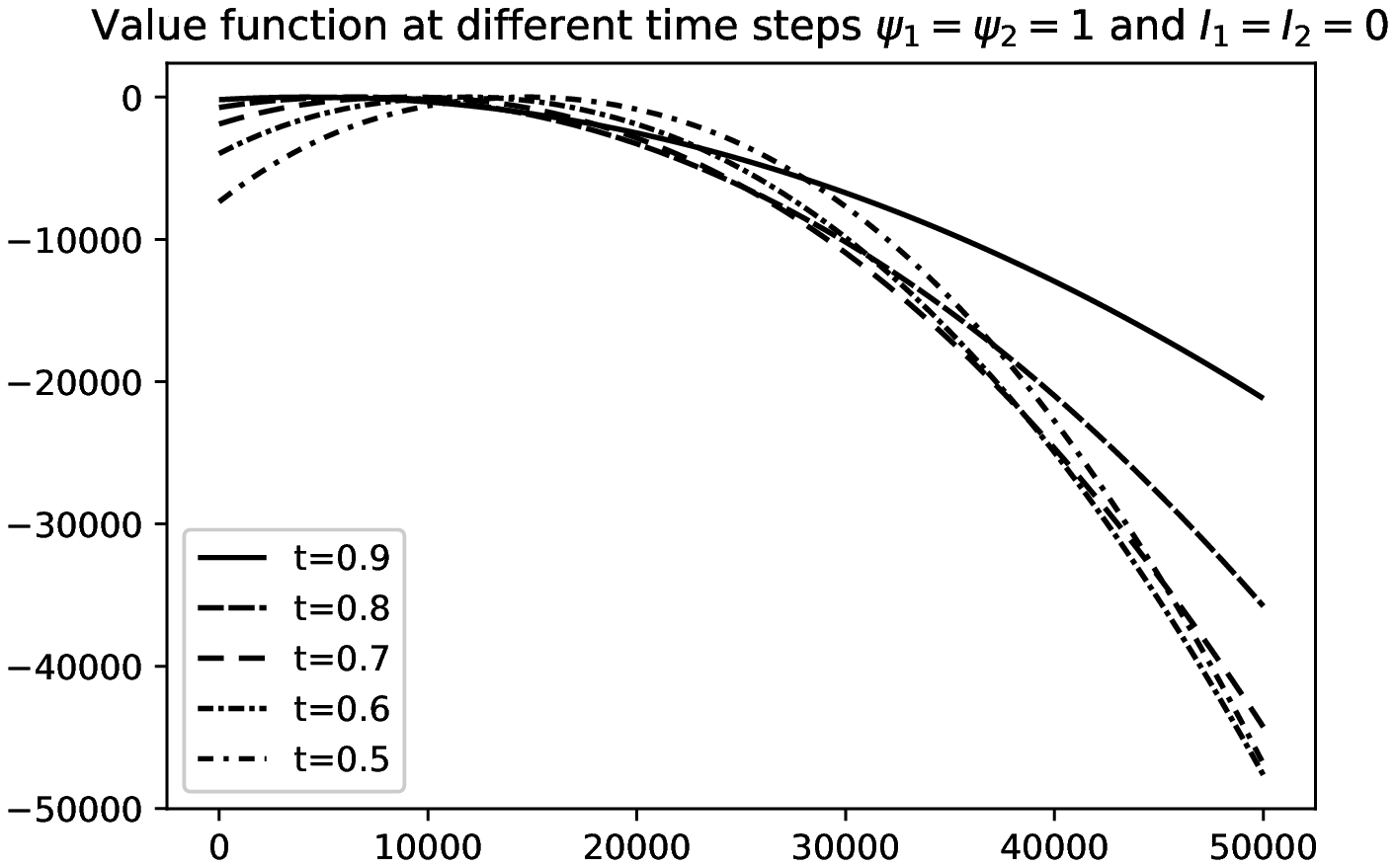}
       \vspace{-3mm}
       \caption{Evolution of the value function $v$ between $t=0.5$ and $t=0.9$.}\label{value_funcions_59diff}
     \end{center}
  \end{minipage} 
\end{figure}

One can see that the value function deteriorates compared to the previous example, which is predictable in view of the fact that one of the venues is exactly like in the above example, and another one is worse in terms of filling ratio. For example in Figure \ref{value_funcions_59diff}, the minimum of the function $v$ at $t=0.5$ when $q=50000$ is $-49000$ compared to a minimum of $-45000$ in the example above. This is a natural consequence of a worse prior distribution on the filling ratio of the second venue while keeping the prior on the first venue unchanged.

\begin{figure}[H]

\begin{minipage}[c]{.46\linewidth}
  \begin{center}
      \includegraphics[width=0.9\textwidth]{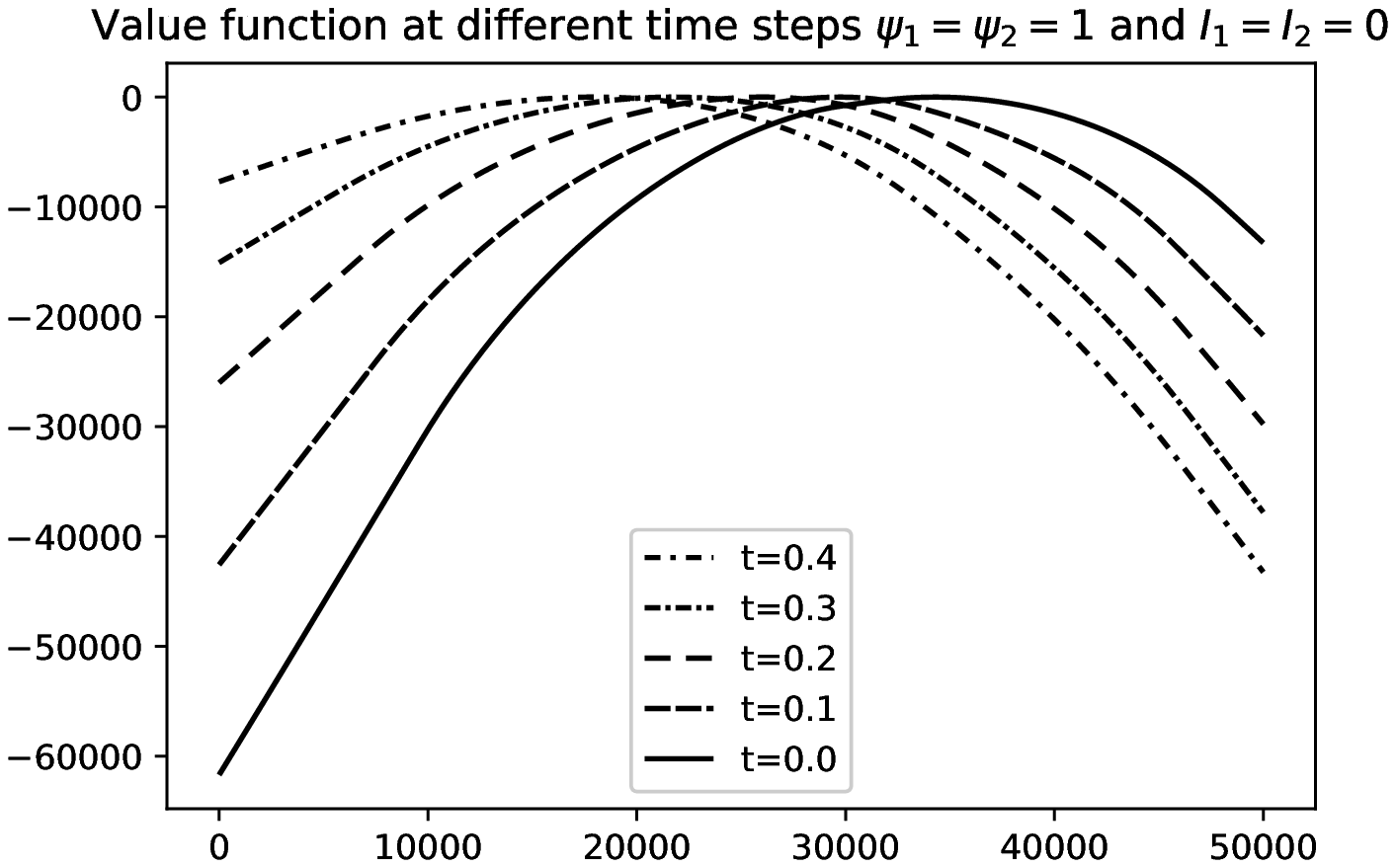}
      \vspace{-3mm}
      \caption{Evolution of the value function $v$ between $t=0$ and $t=0.4$ using neural networks.}\label{value_funcions_04diff_nets}
    \end{center}
\end{minipage} \hfill
\begin{minipage}[c]{.46\linewidth}
   \begin{center}
       \includegraphics[width=0.9\textwidth]{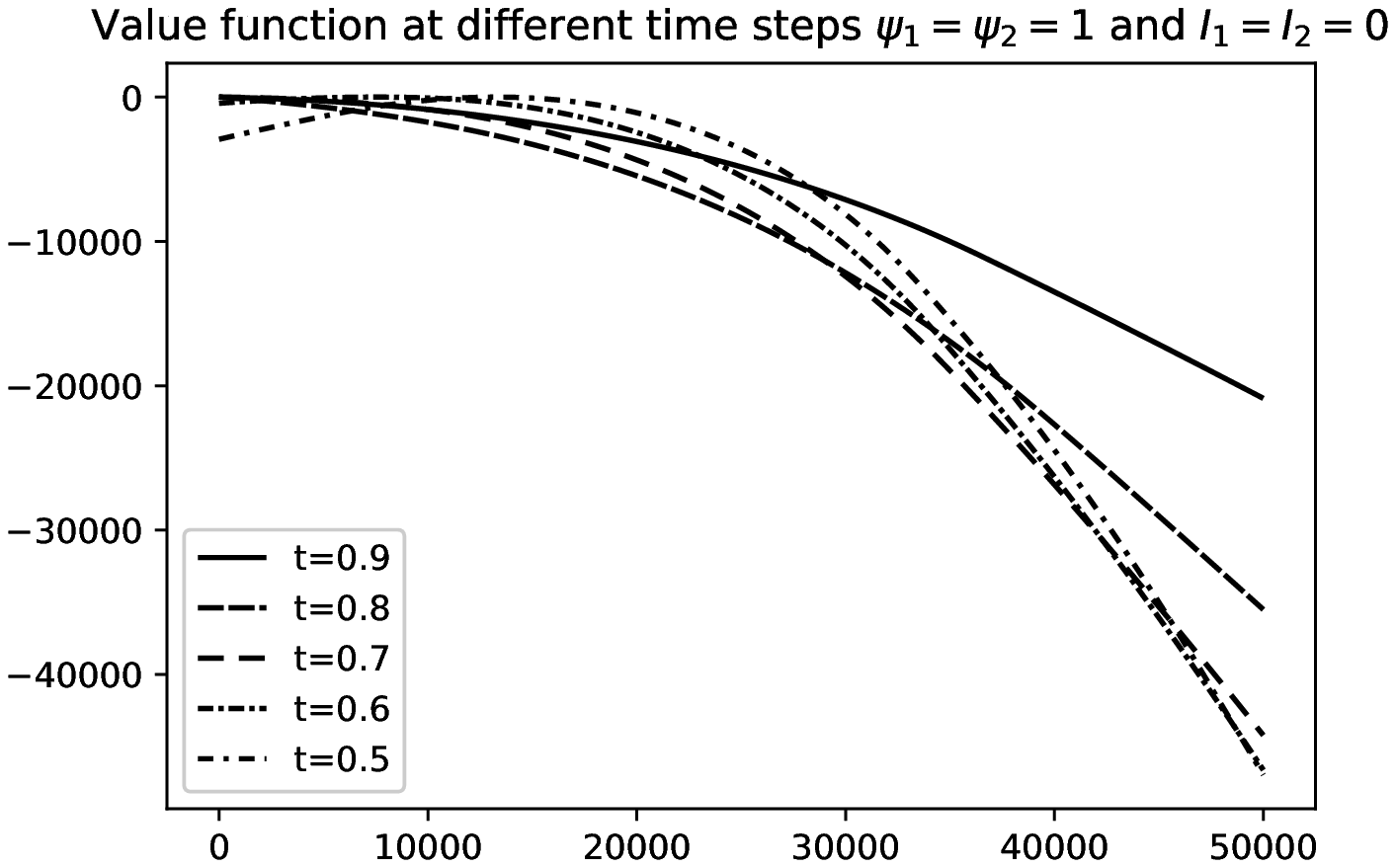}
       \vspace{-3mm}
       \caption{Evolution of the value function $v$ between $t=0.5$ and $t=0.9$ using neural networks.}
       \label{value_funcions_59diff_nets}
     \end{center}
  \end{minipage} 
\end{figure}

We check in Figures \ref{value_funcions_04diff_nets} and \ref{value_funcions_59diff_nets} that we obtain a similar shape for the value function using neural networks.\\

We now describe the strategy of the trader on the limits and the posted volumes and compare it to the case of two identical venues.

\subsubsection{Strategy: limit orders and volumes with finite difference schemes}

In Figures \ref{limits_venue1_diff} and \ref{limits_venue2_diff}, we show the limit order strategy of the trader in the two venues for the same spreads and imbalances. As the second venue is less favorable for execution, the trader prefers to create a new best limit for smaller inventories. For example, when $t=0.6$, he posts an order on the new best limit starting from $q=19000$, and in the second venue, he prefers to create a new limit starting from $q=18000$. Generally, either at the beginning or at the end of the slice, the trader prefers to post at a lower limit in the second venue in order to increase his execution rate there, sacrificing the spread that could have been collected. 

\begin{figure}[H]

\begin{minipage}[c]{.46\linewidth}
  \begin{center}
      \includegraphics[width=0.9\textwidth]{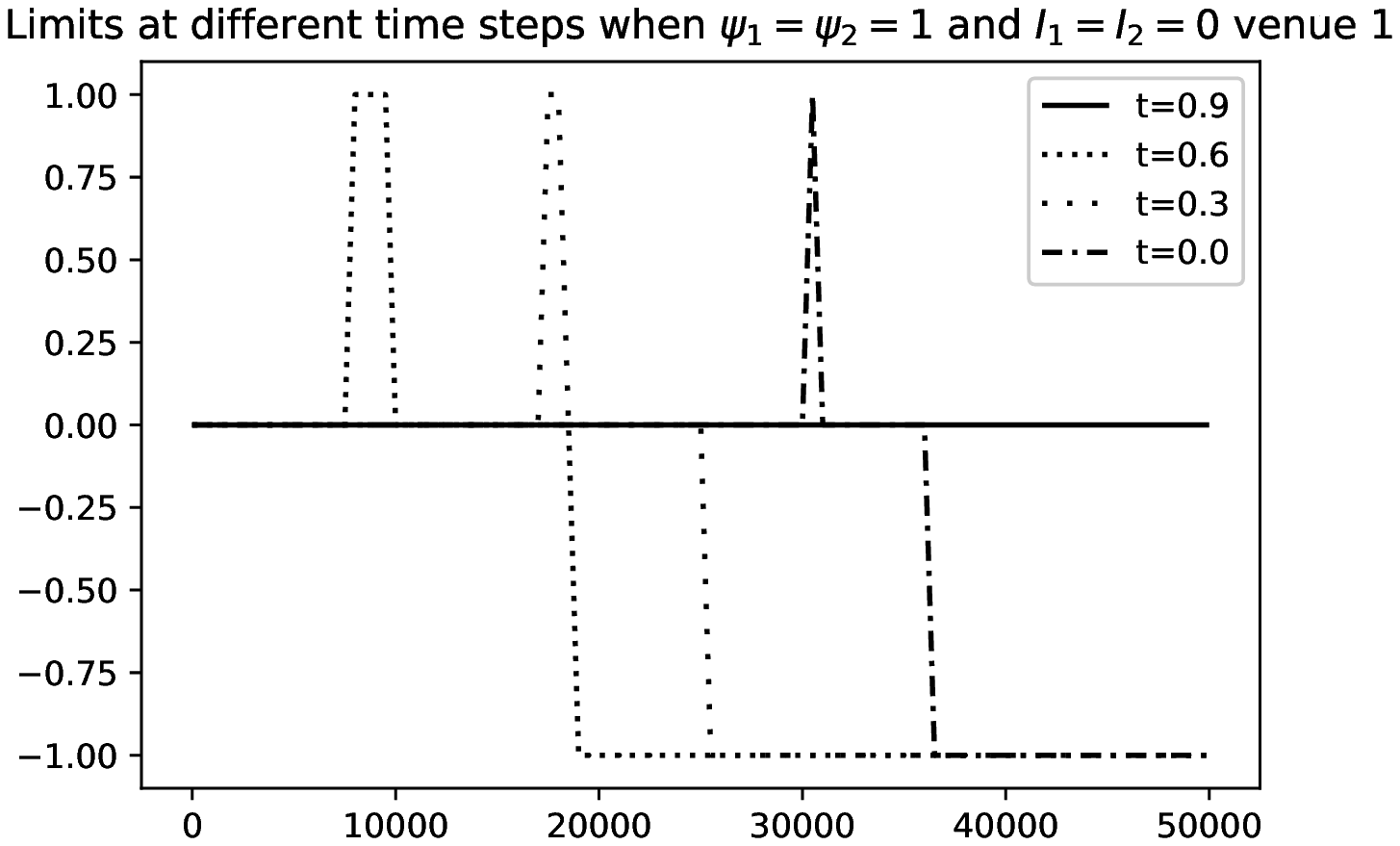}
      \vspace{-3mm}
      \caption{Limit order strategy in the first venue, $\psi^1 = \psi^2 = \delta, I^1 = I^2 = 0$.}\label{limits_venue1_diff}
    \end{center}
\end{minipage} \hfill
\begin{minipage}[c]{.46\linewidth}
   \begin{center}
       \includegraphics[width=0.9\textwidth]{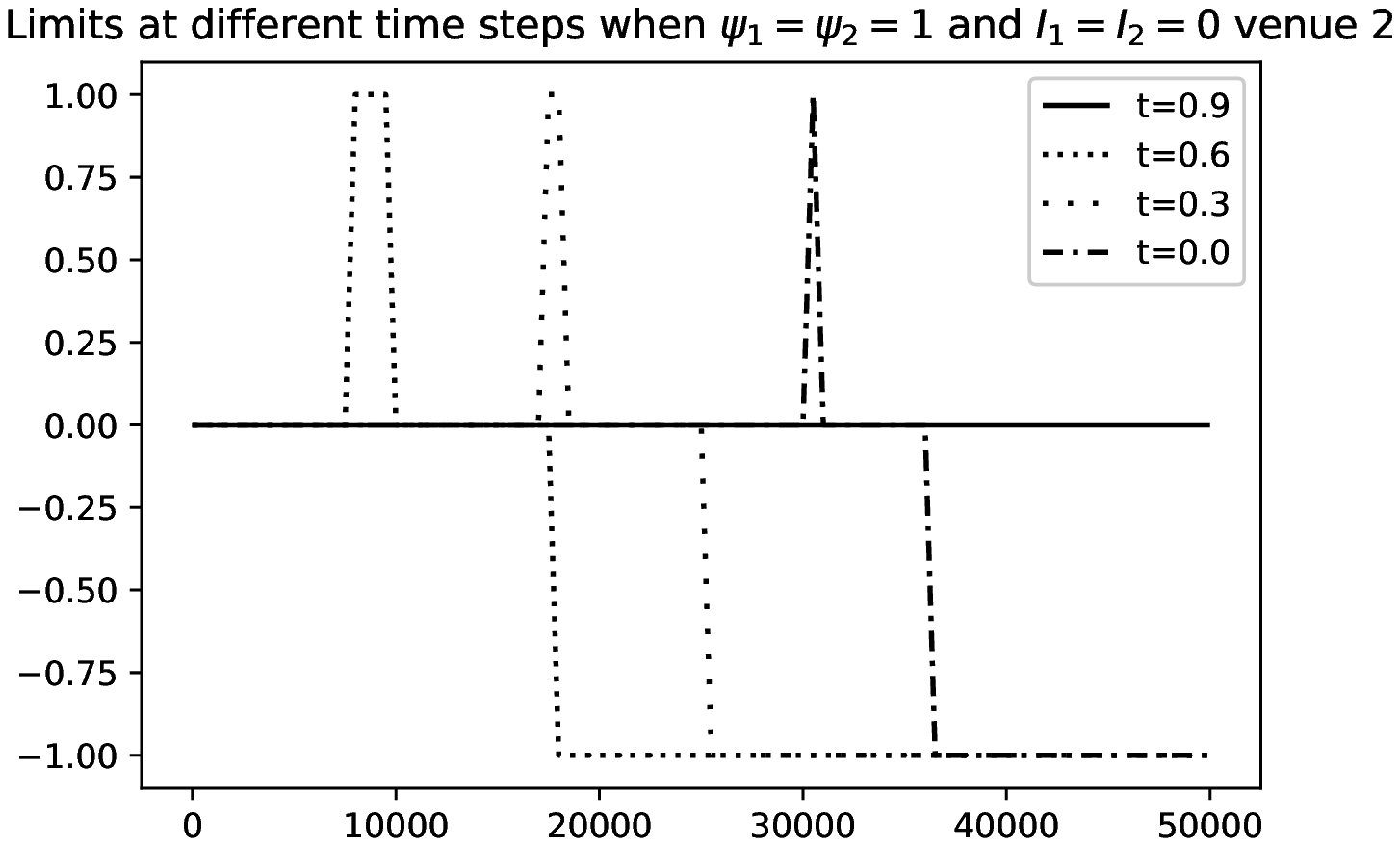}
       \vspace{-3mm}
       \caption{Limit order strategy in the second venue, $\psi^1 = \psi^2 = \delta, I^1 = I^2 = 0$.}\label{limits_venue2_diff}
     \end{center}
  \end{minipage} 
\end{figure}

The strategy of the trader differs drastically in terms of order volumes. In Figures \ref{volumes_venue1_diff} and \ref{volumes_venue2_diff}, we see that the trader posts the majority of his volume in the first venue. Especially when at $t=0.9$ the trader stops posting in the second venue to reduce his liquidity consumption and maximize his probability of execution in the first venue.

\begin{figure}[H]
\begin{minipage}[c]{.46\linewidth}
  \begin{center}
      \includegraphics[width=0.9\textwidth]{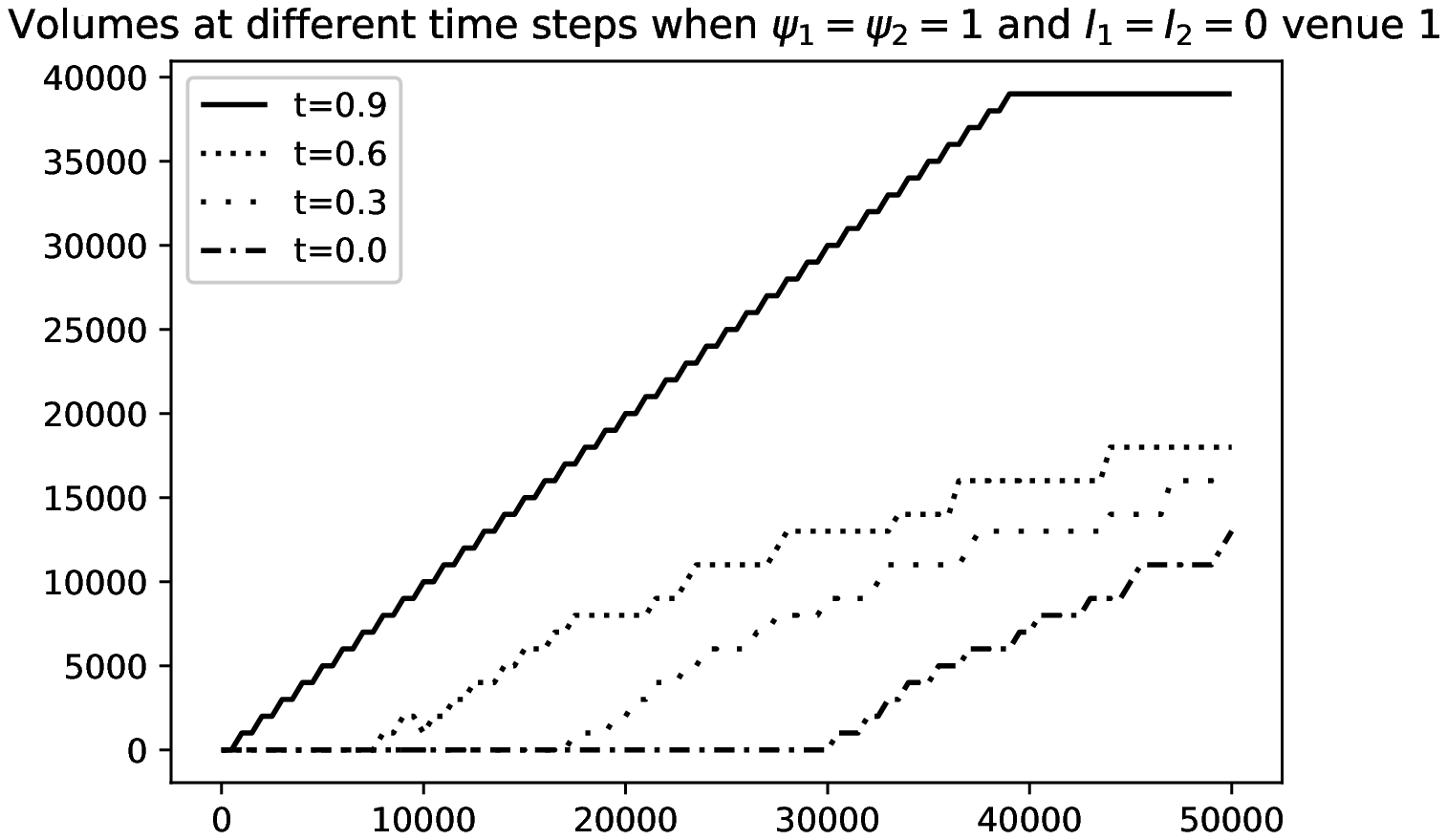}
      \vspace{-3mm}
      \caption{Volume posted in the first venue, $\psi^1 = \psi^2 = \delta, I^1 = I^2 = 0$.}\label{volumes_venue1_diff}
    \end{center}
\end{minipage} \hfill
\begin{minipage}[c]{.46\linewidth}
   \begin{center}
       \includegraphics[width=0.9\textwidth]{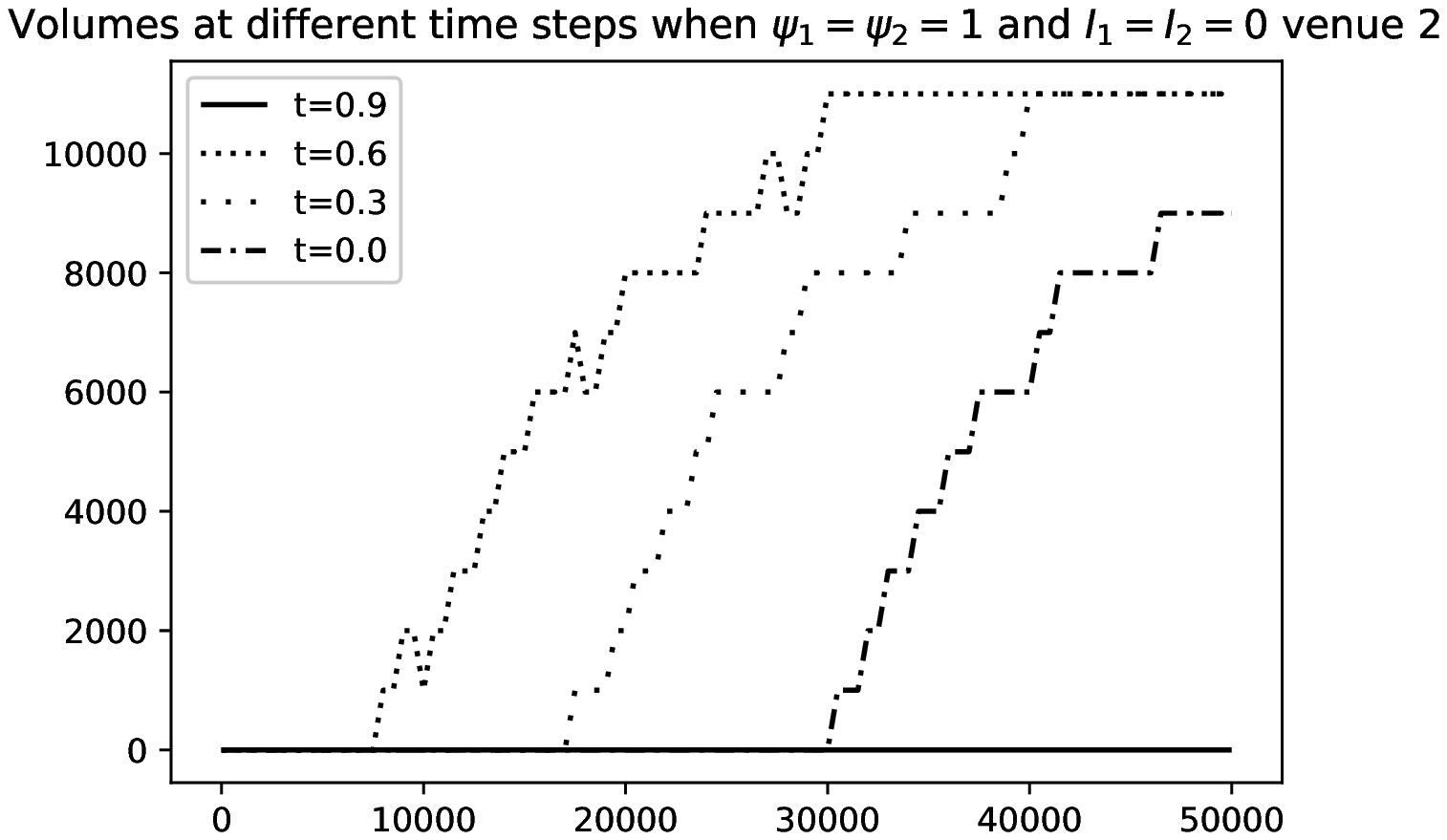}
       \vspace{-3mm}
       \caption{Volume posted in the second venue, $\psi^1 = \psi^2 = \delta, I^1 = I^2 = 0$.}\label{volumes_venue2_diff}
     \end{center}
  \end{minipage} 
\end{figure}

In Figures \ref{limits_venue_sprdiff} and \ref{volumes_venue_sprdiff}, we see the limits and the volumes recommended to the trader when the second venue has a higher spread, and the imbalances are equal. The trader posts an even smaller volume in the second venue, compared to Figure \ref{volumes_venue_spr}. As the filling rate is lower in the second venue, the trader decreases his liquidity consumption in this venue, because of the smaller probability of collecting a higher spread.\\

The strategy on the limits in Figure \ref{limits_venue_sprdiff} is also different from the one in Figure \ref{limits_venue_spr}. When $t=0.5$ and the two venues are the same, the trader posts at the second best limit in the second venue when $q\in[11000,13000]$, then at the first best limit when $q\in[13000,18000]$ and at a new best limit for $q\in[18000,30000]$. When the venues are different, the trader posts at the second best limit in the second venue for $q\in[10000,12000]$, at the first best limit for $q\in[12000,17000]$ and at a new best limit when $q\in[17000,19000]$. Therefore, when the second venue has a worse filling rate, the trader posts in the second venue earlier (for a higher inventory) and less compared to the case with two equivalent venues. 
\vspace{-3mm}
\begin{figure}[H]
\begin{minipage}[c]{.48\linewidth}
  \begin{center}
      \includegraphics[width=0.88\textwidth]{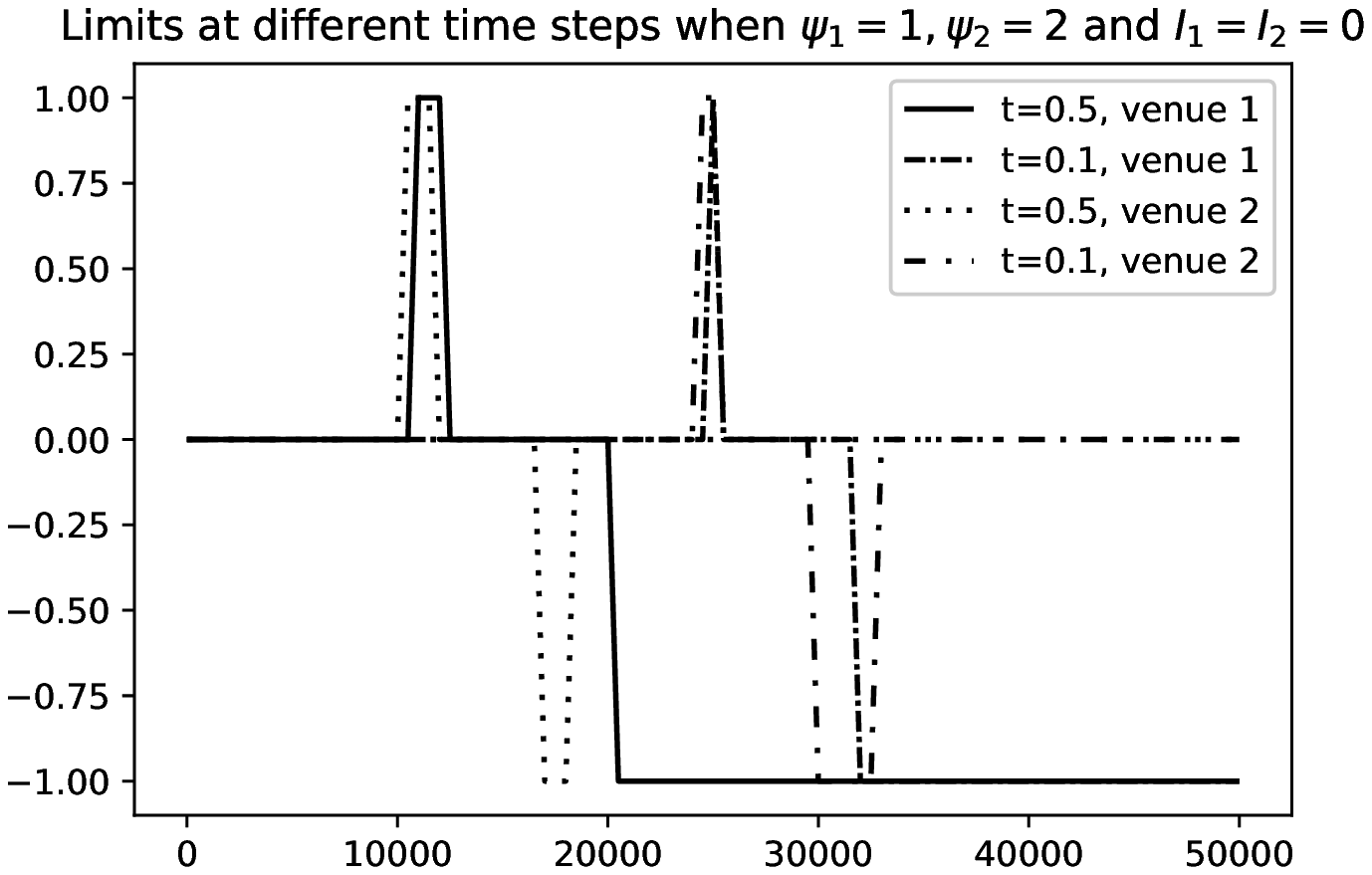}
      \vspace{-3mm}
      \caption{Limit order strategy, $\psi^1 =\delta, \psi^2 = 2\delta,$ \protect\\ $I^1 = I^2 = 0$.}\label{limits_venue_sprdiff}
    \end{center}
\end{minipage} \hfill
\begin{minipage}[c]{.46\linewidth}
   \begin{center}
       \includegraphics[width=0.9\textwidth]{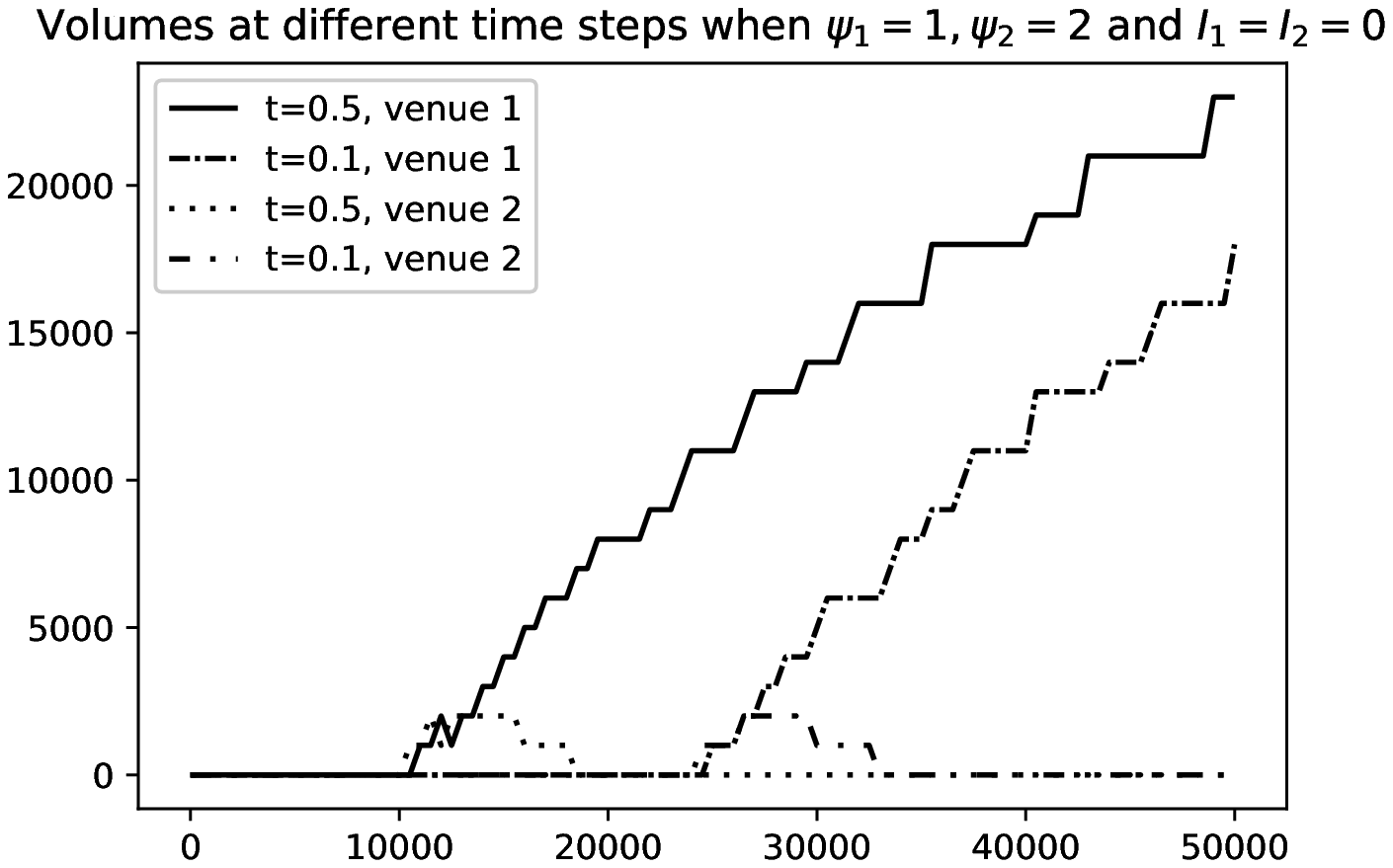}
       \vspace{-3mm}
       \caption{Volume strategy, $\psi^1 = \delta, \psi^2 = 2\delta,$ \protect\\ $ I^1 = I^2 = 0$.}\label{volumes_venue_sprdiff}
     \end{center}
  \end{minipage} 
\end{figure}
\vspace{-3mm}
If the imbalance is more favorable in the second venue, we see in Figures \ref{limits_venue_imbdiff} and \ref{volumes_venue_imbdiff} that the strategy is very different from the one in Figures \ref{limits_venue_imb} and \ref{volumes_venue_imb} where the two venues shared the same characteristics. As the second venue has a more favorable imbalance, the trader posts a higher volume in it. However, he posts a nonzero volume in the first venue, because of the overall better filling ratio. This contrasts with Figure \ref{volumes_venue_imb} where at some sufficiently high inventories, the trader stops sending orders to the first venue. Due to the trade-off between an overall higher filling ratio in the first venue and a more favorable imbalance in the second venue, the trader splits his liquidity consumption between the two venues. \\

The strategy on the limits in Figure \ref{limits_venue_imbdiff} also differs from the one with two identical venues in Figure \ref{limits_venue_imb}. For $t=0.5$ in Figure \ref{limits_venue_imb}, the trader posts in the first venue at the second best limit for $q\in[10000,13000]$, at the first best limit for $q\in[13000,20000]$ and at a new best limit for $q\in[20000,32000]$. In Figure \ref{limits_venue_imbdiff}, the trader posts in the first venue at the second best limit for $q\in[10000,12000]$, at the first best limit for $q\in[12000,18000]$ and at a new best limit for $q>18000$. Therefore, he posts at a more favorable limit in terms of filling rate in the first venue in order to compensate for the unfavorable imbalance compared to the second venue.

\begin{figure}[H]

\begin{minipage}[c]{.46\linewidth}
  \begin{center}
      \includegraphics[width=0.9\textwidth]{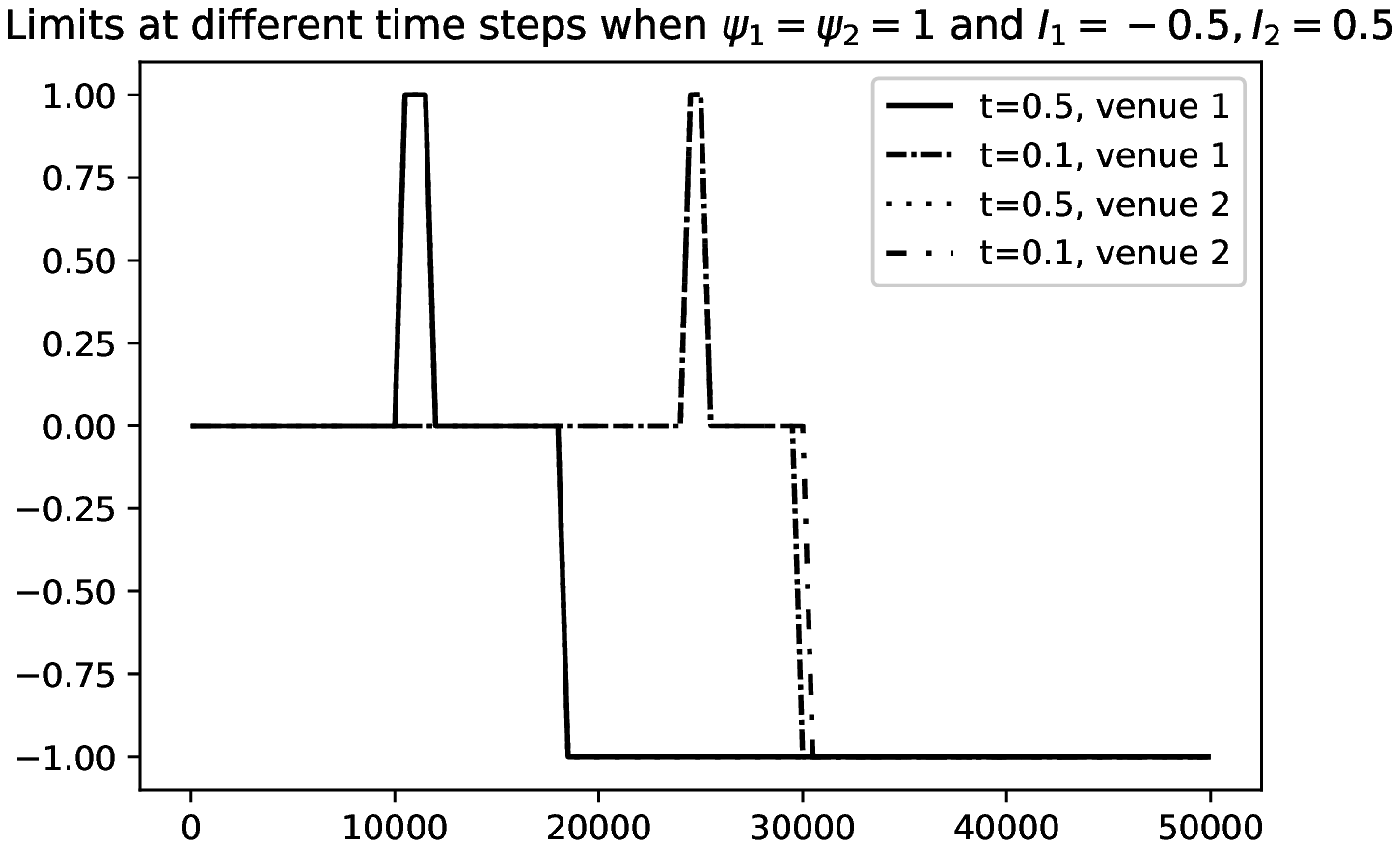}
      \vspace{-3mm}
      \caption{Limit order strategy, $\psi^1 =\psi^2 = \delta,$ \protect\\ $ I^1 =-0.5, I^2 = 0.5$.}\label{limits_venue_imbdiff}
    \end{center}
\end{minipage} \hfill
\begin{minipage}[c]{.46\linewidth}
   \begin{center}
       \includegraphics[width=0.9\textwidth]{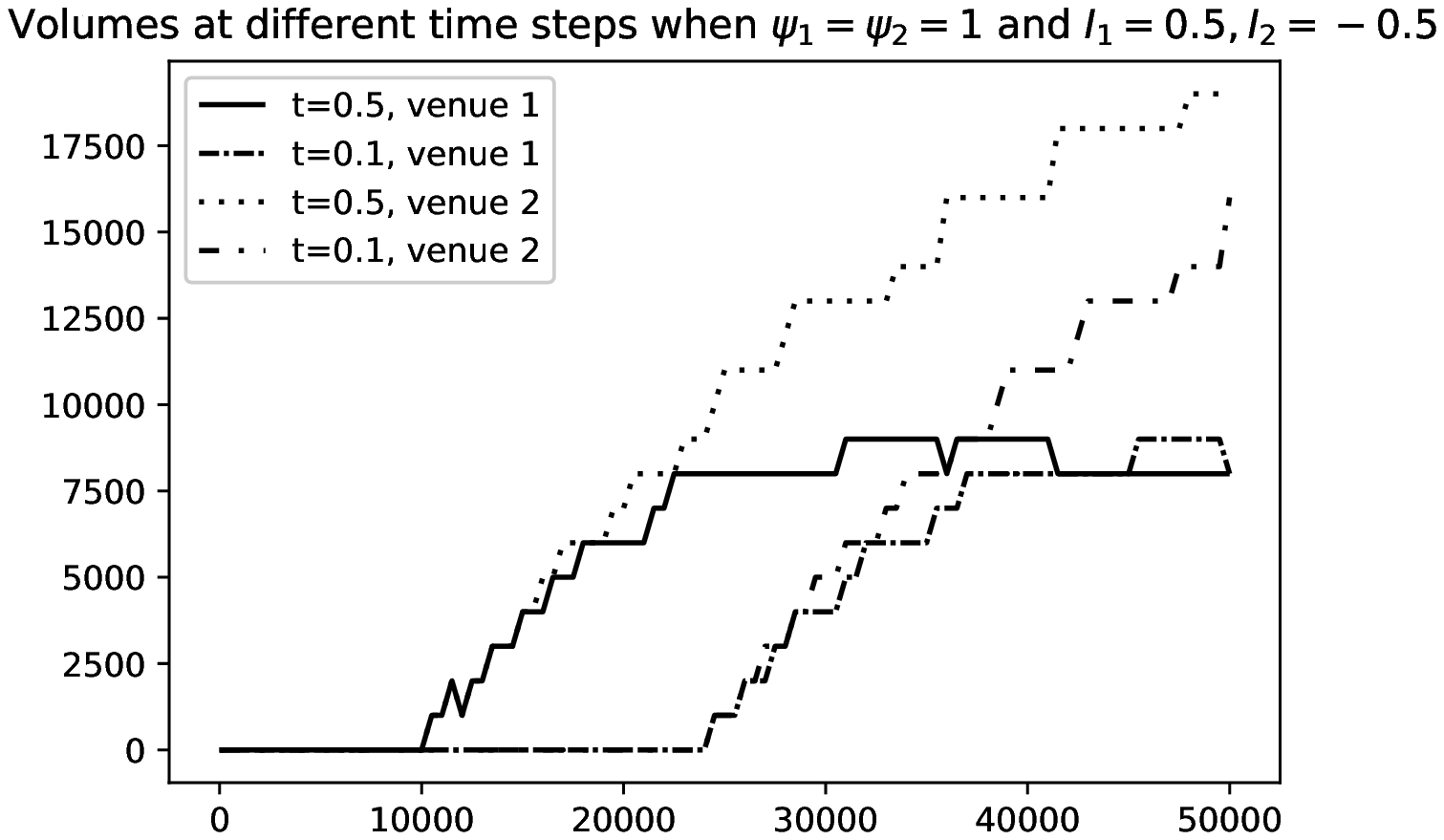}
       \vspace{-3mm}
       \caption{Volume strategy, $\psi^1 =\psi^2 = \delta,$ \protect\\ $ I^1 =-0.5, I^2 = 0.5$.}\label{volumes_venue_imbdiff}
     \end{center}
  \end{minipage} 
\end{figure}

Before moving to the analysis of the effectiveness of the Bayesian update of market parameters, we conclude with a comparison of the strategies obtained via neural networks optimization.

\subsubsection{Strategy: limit orders and volumes with neural networks}

We observe in Figures \ref{limits_venue1_diff_nets} and \ref{limits_venue2_diff_nets} that the strategy of the trader on the limits is in line with the one in Figures \ref{limits_venue1_diff} and \ref{limits_venue2_diff} up to the states where the optimal volume of the order equals $0$. 

\begin{figure}[H]

\begin{minipage}[c]{.48\linewidth}
  \begin{center}
      \includegraphics[width=0.88\textwidth]{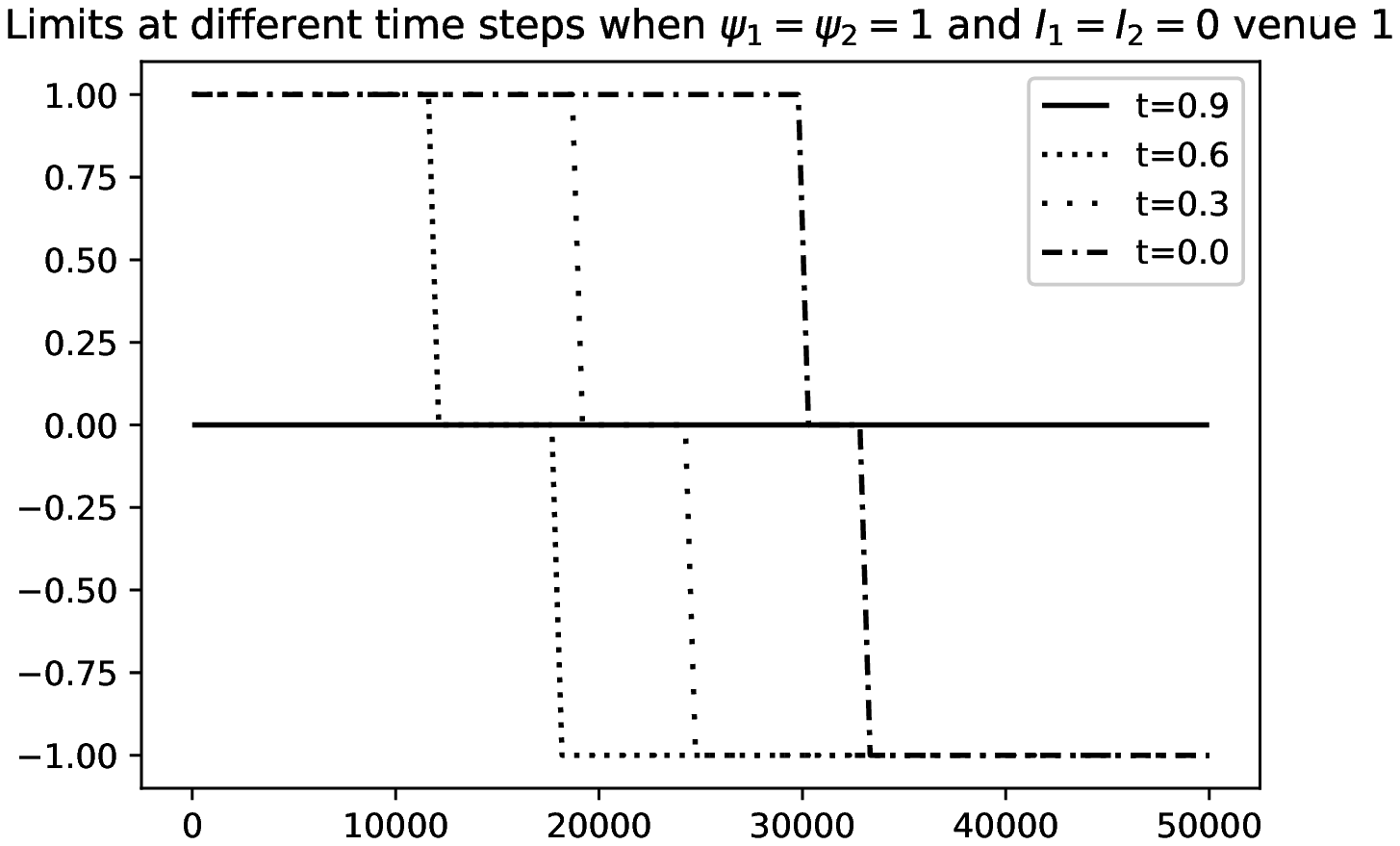}
      \vspace{-3mm}
      \caption{Limit order strategy in the first \protect\\ venue, $\psi^1 = \psi^2 = \delta, I^1 = I^2 = 0$ using neural networks.}
      \label{limits_venue1_diff_nets}
    \end{center}
\end{minipage} \hfill
\begin{minipage}[c]{.46\linewidth}
   \begin{center}
       \includegraphics[width=0.9\textwidth]{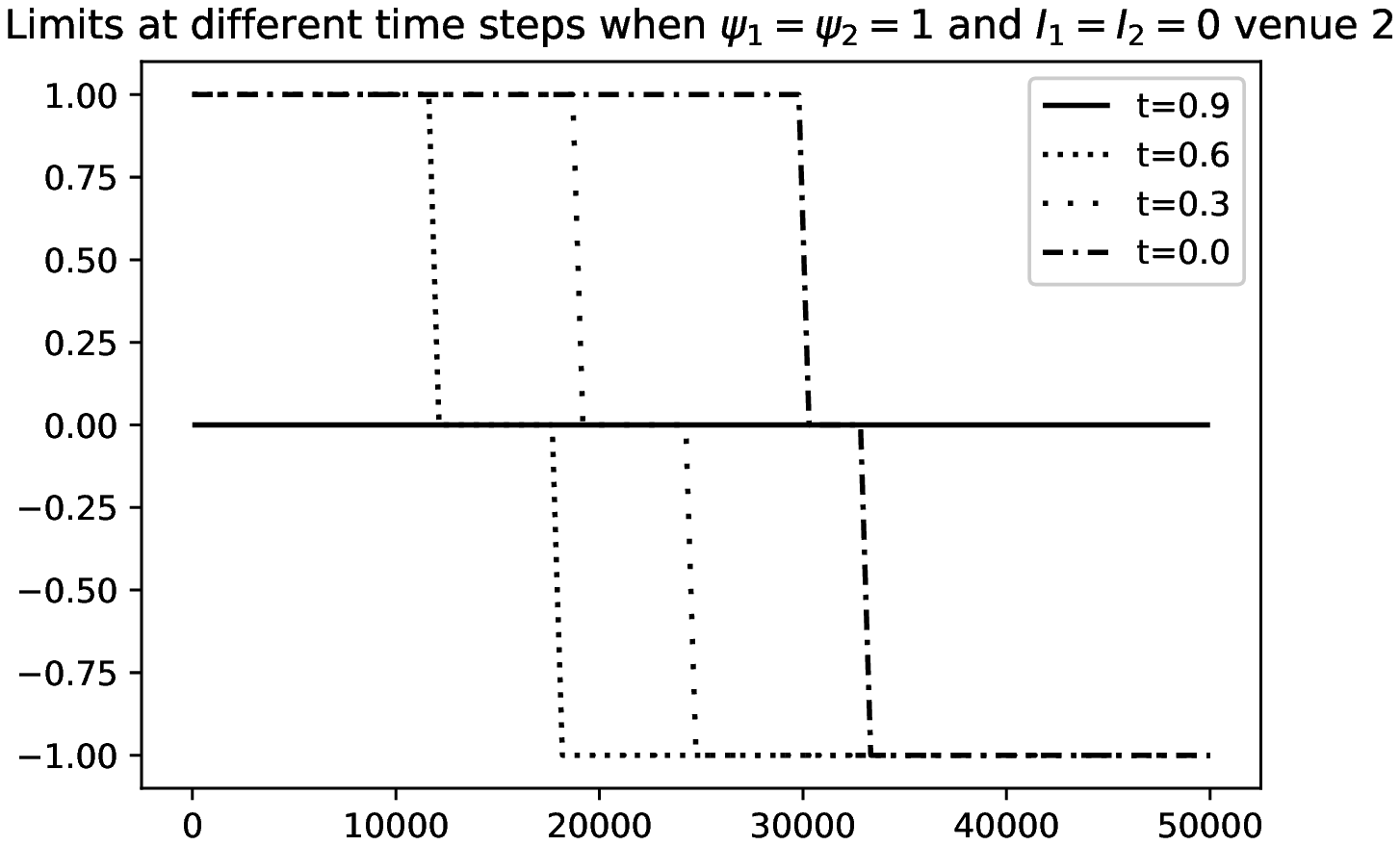}
       \vspace{-3mm}
       \caption{Limit order strategy in the second venue, $\psi^1 = \psi^2 = \delta, I^1 = I^2 = 0$ using neural networks.}
       \label{limits_venue2_diff_nets}
     \end{center}
  \end{minipage} 
\end{figure}

In Figures \ref{volumes_venue1_diff_nets} and \ref{volumes_venue2_diff_nets}, we see that the strategy of the trader on the posted volumes is well approximated and smoothed by neural networks.

\begin{figure}[H]

\begin{minipage}[c]{.46\linewidth}
  \begin{center}
      \includegraphics[width=0.9\textwidth]{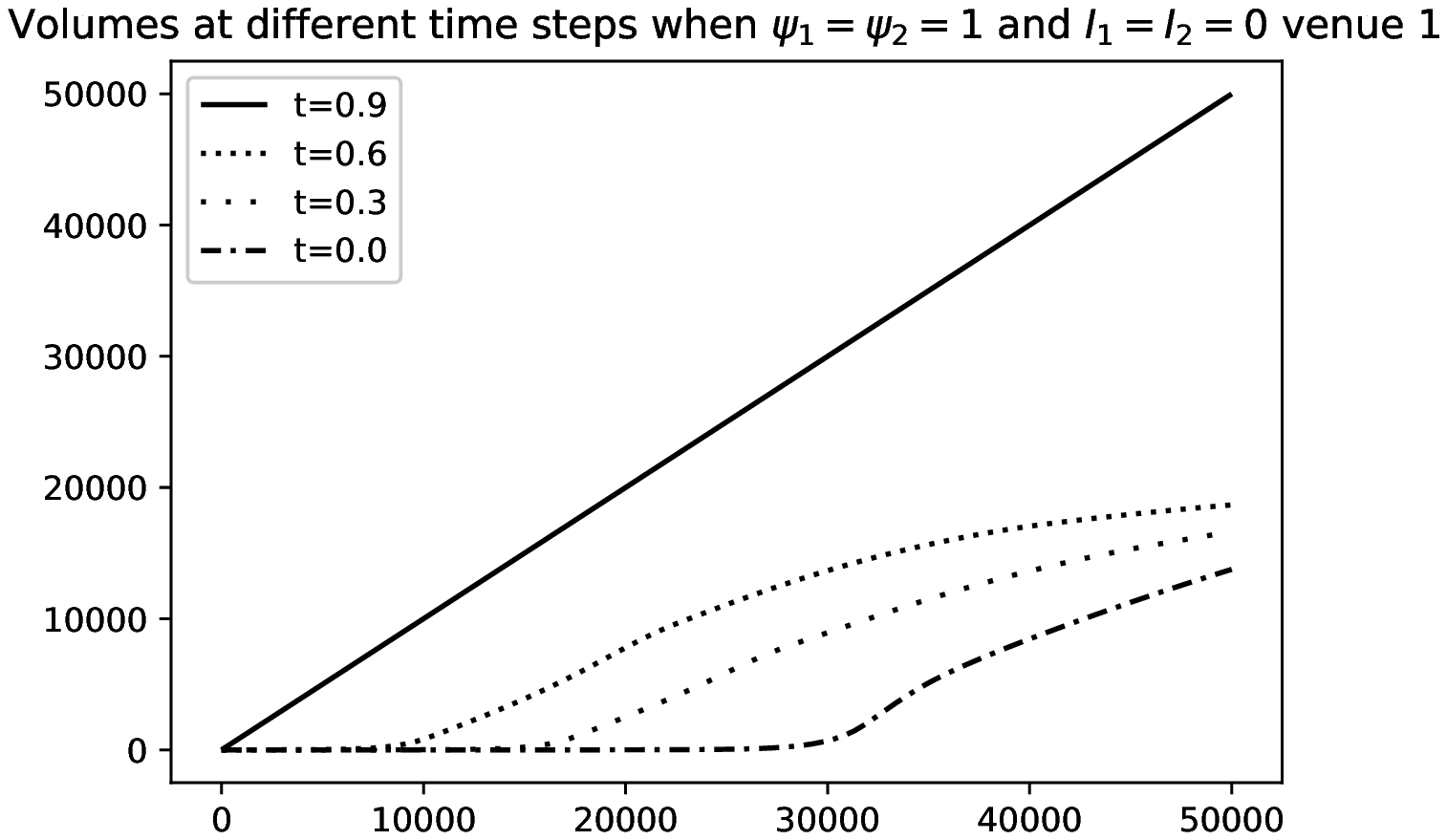}
      \vspace{-3mm}
      \caption{Volume posted in the first venue, $\psi^1 = \psi^2 = \delta, I^1 = I^2 = 0$ using neural networks.}
      \label{volumes_venue1_diff_nets}
    \end{center}
\end{minipage} \hfill
\begin{minipage}[c]{.46\linewidth}
   \begin{center}
       \includegraphics[width=0.9\textwidth]{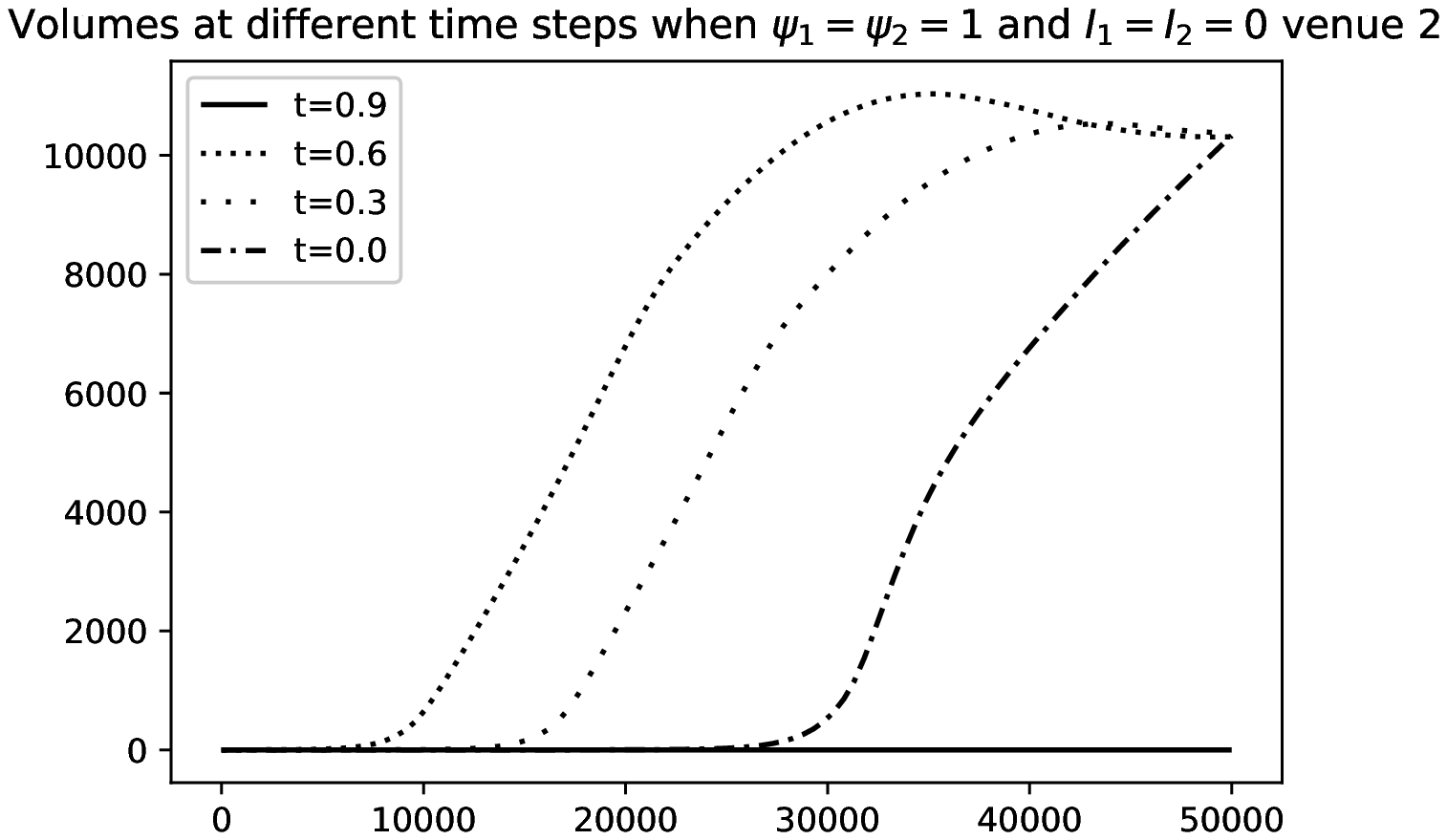}
       \vspace{-3mm}
       \caption{Volume posted in the second venue, $\psi^1 = \psi^2 = \delta, I^1 = I^2 = 0$ using neural networks.}
       \label{volumes_venue2_diff_nets}
     \end{center}
  \end{minipage} 
\end{figure}

In Figures \ref{limits_venue_sprdiff_nets} and \ref{volumes_venue_sprdiff_nets}, we see in the case of a higher spread in the second venue that, because of neural network parametrization of the strategy, the trader posts a nonzero volume in the second venue leaving the possibility to better explore the filling ratios. Results are in line with the ones in Figures \ref{limits_venue_sprdiff} and \ref{volumes_venue_sprdiff}: the trader posts the majority of his volume in the first venue because of a lower spread and a more favorable filling ratio.

\begin{figure}[H]

\begin{minipage}[c]{.49\linewidth}
  \begin{center}
      \includegraphics[width=0.94\textwidth]{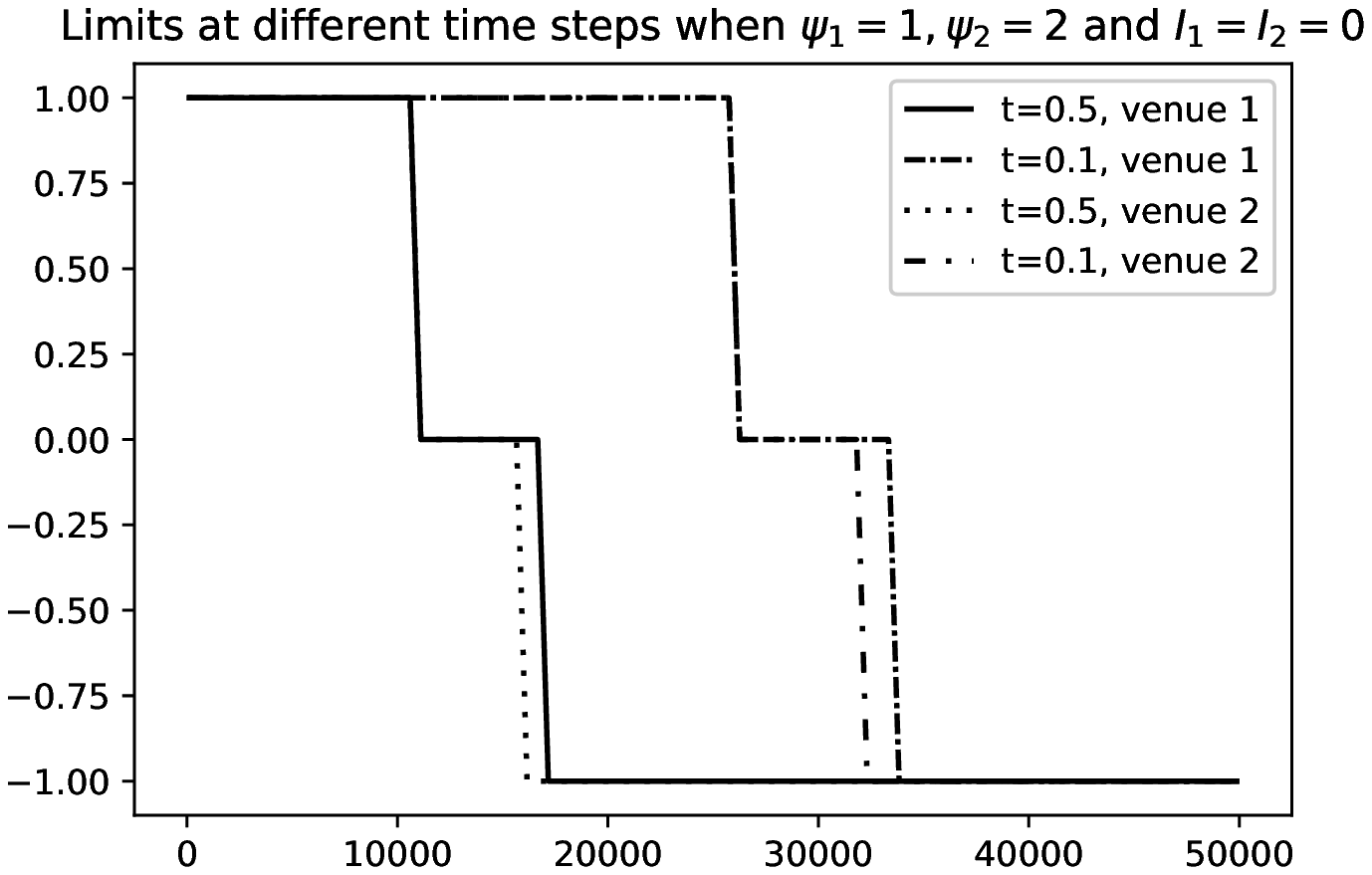}
      \vspace{-3mm}
      \caption{Limit order strategy, $\psi^1 =\delta, \psi^2 = 2\delta,$ \protect\\ $ I^1 = I^2 = 0$ using neural networks.}
      \label{limits_venue_sprdiff_nets}
    \end{center}
\end{minipage} \hfill
\begin{minipage}[c]{.46\linewidth}
   \begin{center}
       \includegraphics[width=0.97\textwidth]{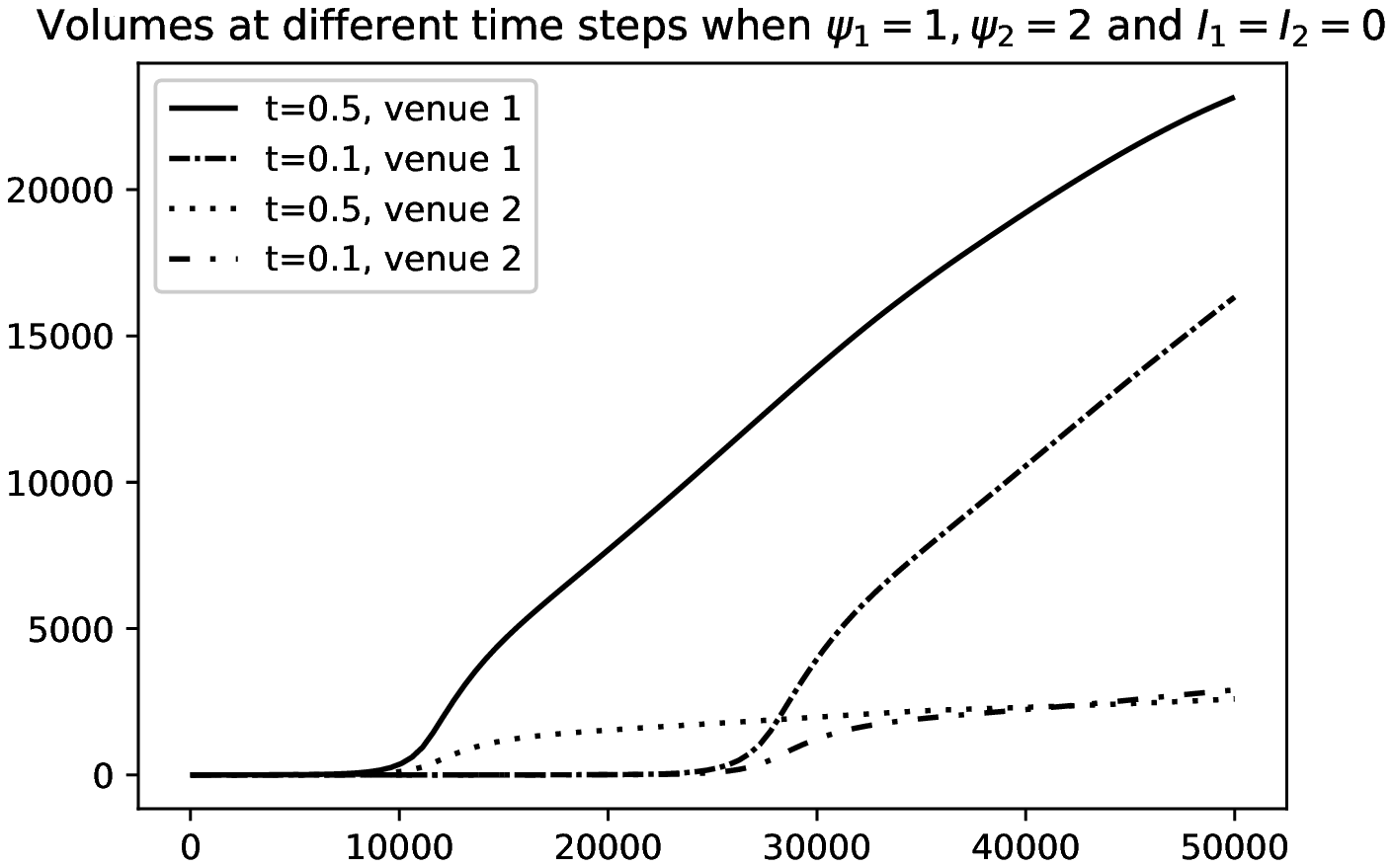}
       \vspace{-3mm}
       \caption{Volume strategy, $\psi^1 = \delta, \psi^2 = 2\delta,$ \protect\\ $ I^1 = I^2 = 0$ using neural networks.}
       \label{volumes_venue_sprdiff_nets}
     \end{center}
  \end{minipage} 
\end{figure}

Finally, we show in Figures \ref{limits_venue_imbdiff_nets} and \ref{volumes_venue_imbdiff_nets} a similar behavior compared to the finite difference schemes in Figures \ref{limits_venue_imbdiff} and \ref{volumes_venue_imbdiff}: the trader posts a higher volume in the second venue due to a more favorable imbalance, and keeps posting in the first venue due to an overall more favorable filling ratio. 

\begin{figure}[H]

\begin{minipage}[c]{.49\linewidth}
  \begin{center}
      \includegraphics[width=0.88\textwidth]{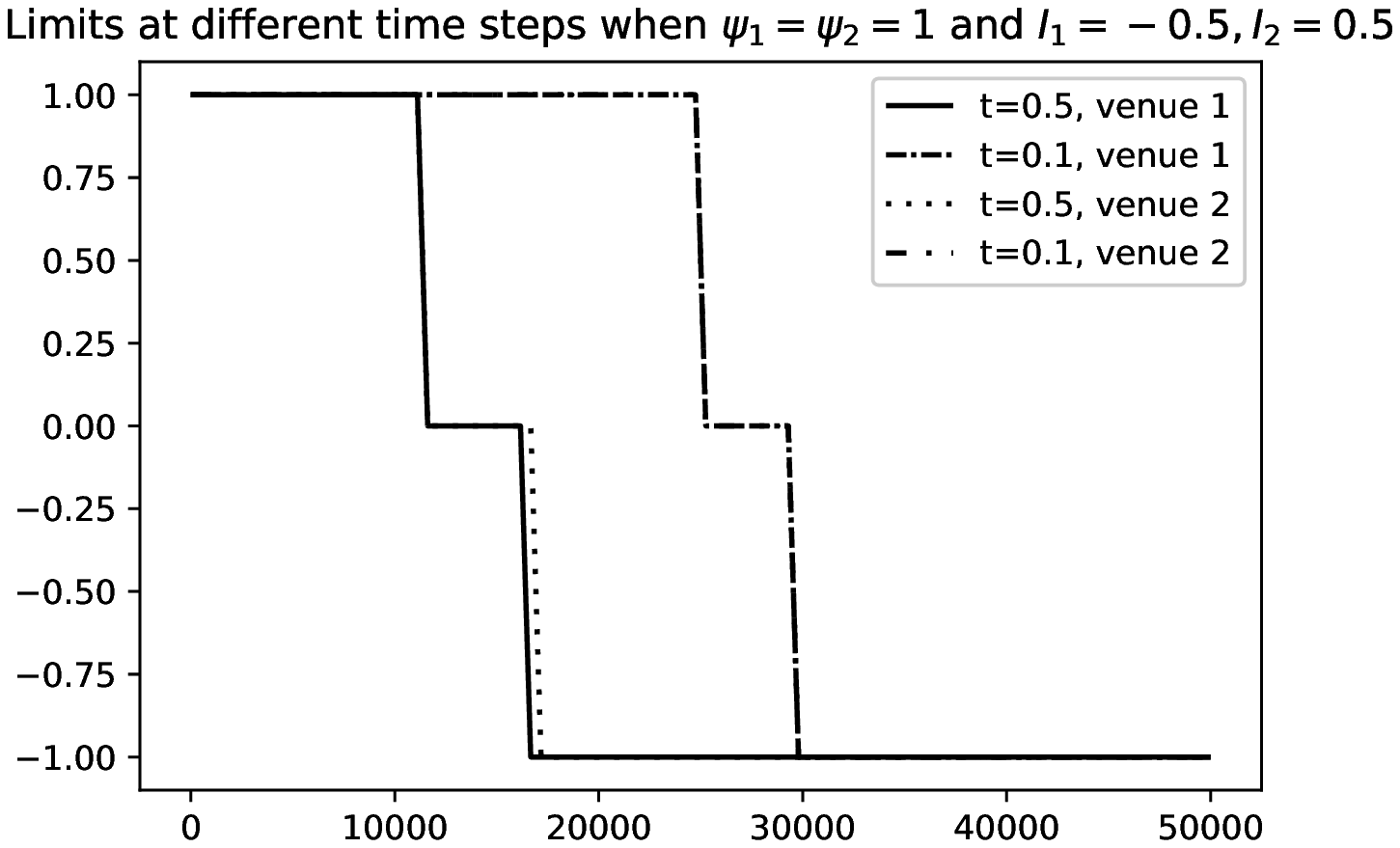}
      \vspace{-3mm}
      \caption{Limit order strategy, $\psi^1 =\psi^2 = \delta,$ \protect\\ $ I^1 =-0.5, I^2 = 0.5$ using neural networks.}
      \label{limits_venue_imbdiff_nets}
    \end{center}
\end{minipage} \hfill
\begin{minipage}[c]{.46\linewidth}
   \begin{center}
       \includegraphics[width=0.9\textwidth]{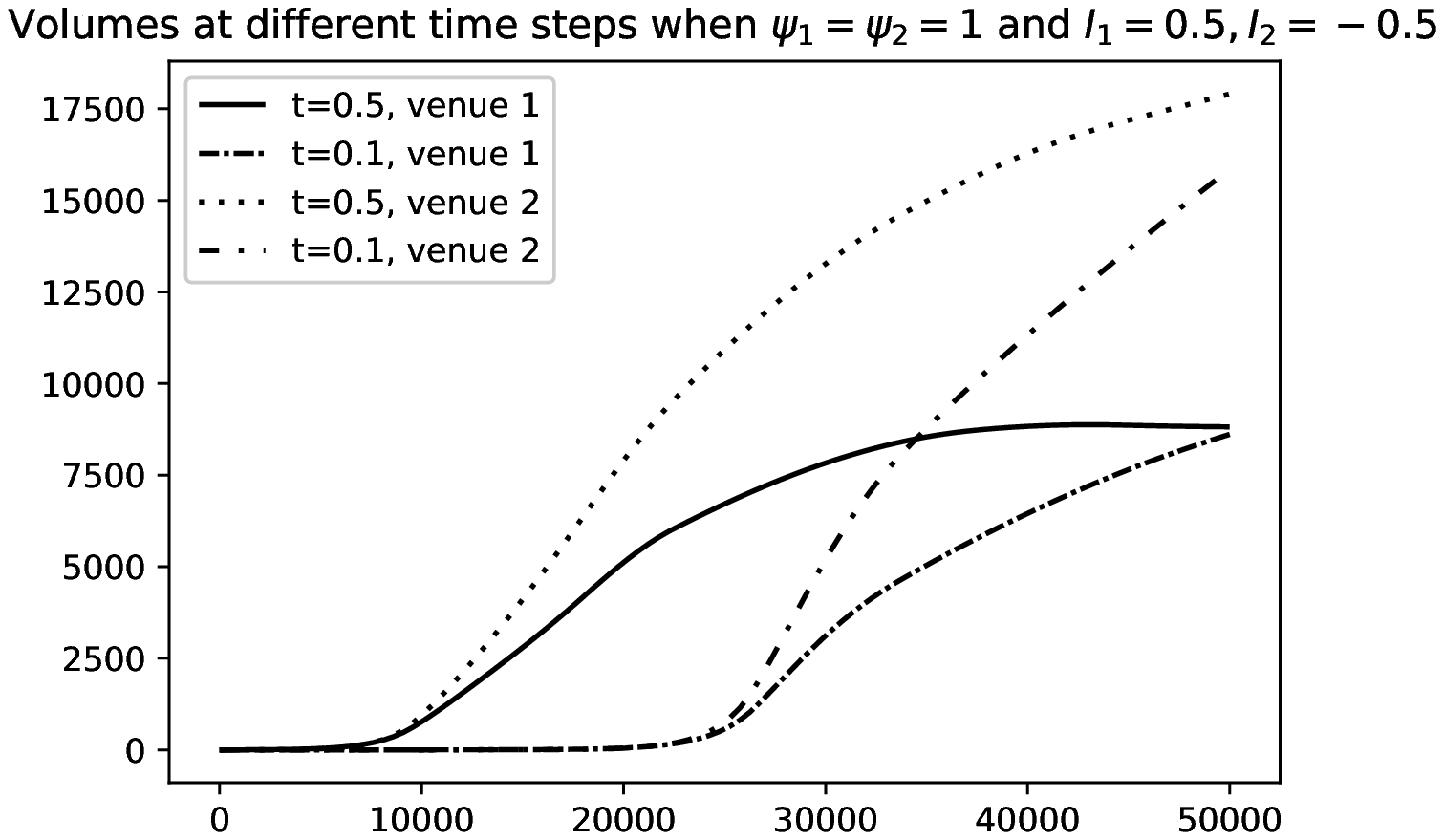}
       \vspace{-3mm}
       \caption{Volume strategy, $\psi^1 =\psi^2 = \delta,$ \protect\\ $ I^1 =-0.5, I^2 = 0.5$ using neural networks.}
       \label{volumes_venue_imbdiff_nets}
     \end{center}
  \end{minipage} 
\end{figure}

\subsection{Bayesian update}

In this section, we analyze the effectiveness of the Bayesian update framework through several execution slices. 

\subsubsection{Market simulation on a slice}

We first show an example of a market simulation of one slice and demonstrate the trading strategy through the slice, which are illustrated in Figure \ref{simulation}. \\

At $t=0.2$, the spread in both venues is equal to $\delta$, with an unfavorable imbalance in both venues. In that case, as two venues share the same characteristics, and the inventory is sufficiently close to the optimal for the next step, so the trader sends the same quantity to both venues, which is close to zero. \\

When $t=0.5$, the first venue has an unfavorable imbalance, and the second venue has a higher spread. In this configuration, the trader sends a higher volume in the first venue, in order to get a better filling rate due to a lower spread. \\

Finally, at $t=0.7$, the first venue has a higher spread and a more favorable imbalance compared to the second venue. This leads to a higher volume in the first venue at the second best limit and a lower volume in the second venue at first best limit. The favorable imbalance in the first venue indicates a higher probability of execution for an order at a higher limit, because the price may move in this direction. Therefore, even if the spread is equal to two ticks, the trader posts in this venue in order to be executed at a more favorable price. As the spread in the second venue is lower, but the imbalance is less favorable, he posts at the first best limit to benefit from the trade-off between execution and profit through collecting the spread. 
\vspace{-5mm}
\begin{figure}[H]
\begin{center}
      \includegraphics[width=0.7\textwidth]{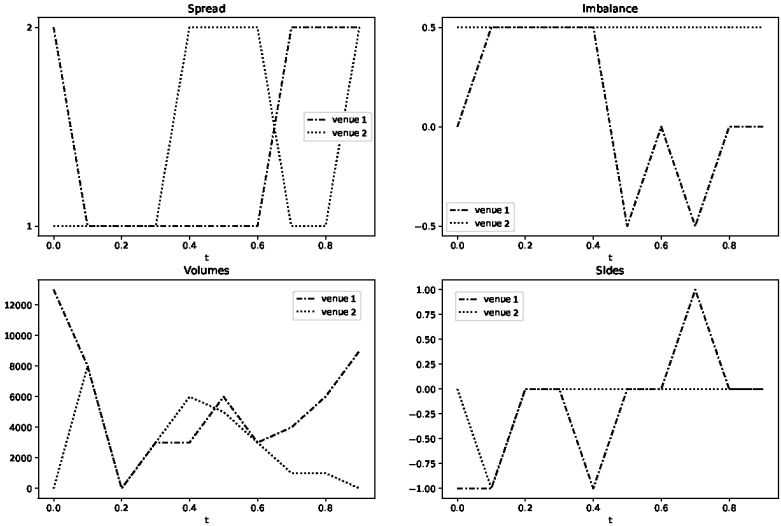}
      \vspace{-10mm}
      \caption{Market simulation: spreads (upper left), imbalances (upper right), volumes (lower left) and limits (lower right) in both venues.}\label{simulation}
\end{center}
\end{figure}
\vspace{-5mm}
The corresponding execution trajectory is shown in Figure \ref{simq}, where we can see the typical Implementation Shortfall execution shape, coming from the pre-computed trading curve $q^\star$. 
\vspace{-5mm}
\begin{figure}[H]
\begin{center}
  \includegraphics[width=0.44\textwidth]{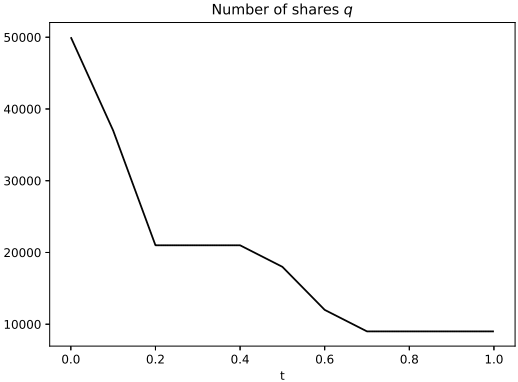}
  \vspace{-5mm}
  \caption{Evolution of the inventory of the trader on a slice of execution.}\label{simq}
\end{center}
\end{figure}
\vspace{-5mm}
\subsubsection{Update of the execution proportion}

We show how the trader updates the market parameters through observations and trading. The update of the execution proportion is quite fast, as it can be seen in Figures \ref{bayesianrho0} and \ref{bayesianrho1} the good estimation can be achieved after completing 1-2 slices. In this example we started from the correct prior for the second venue and the inaccurate one for the first:
\vspace{-3mm}
\begin{align*}
  &\rho^{1} = \left[\begin{matrix}
0.1 & 0.9   
\end{matrix}\right],
  &\rho^{2} = \left[\begin{matrix}
0.1 & 0.9    
\end{matrix}\right].
\end{align*}
\vspace{-6mm}
\begin{figure}[H]
\begin{minipage}[c]{.48\linewidth}
\hspace{-10mm}
      \includegraphics[width=1.15\textwidth]{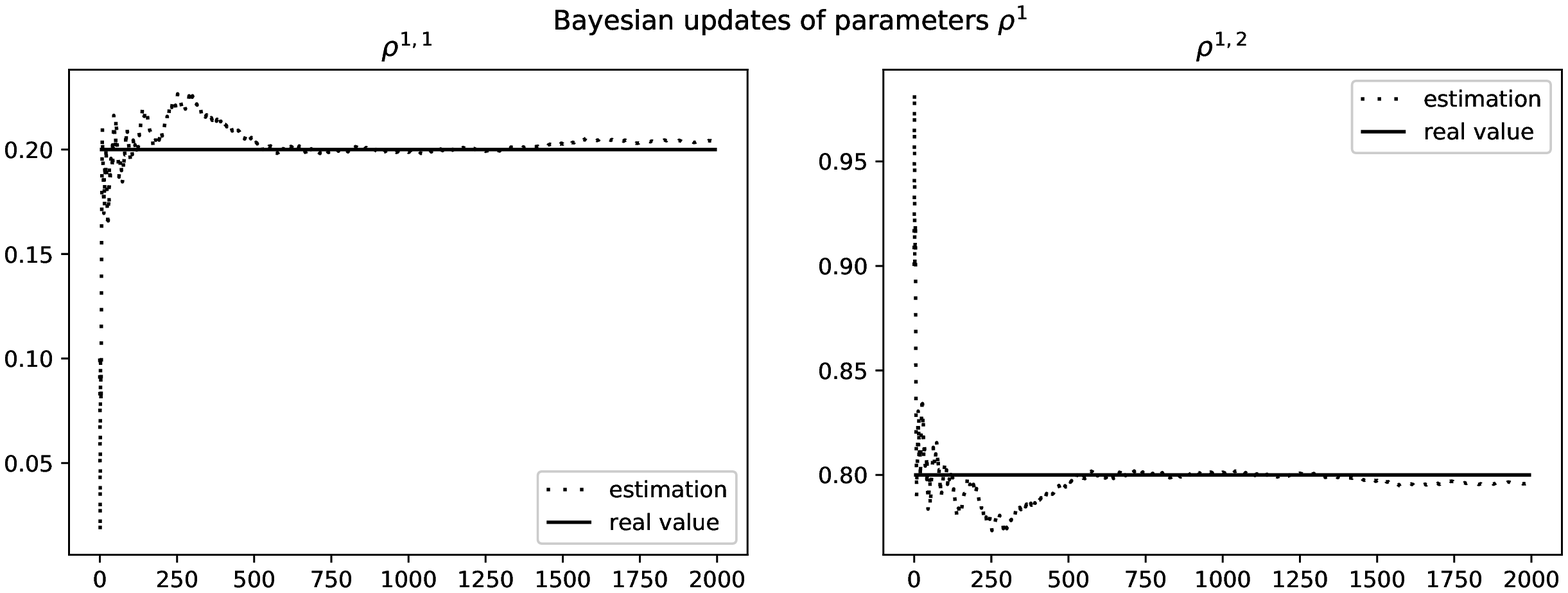}
      \vspace{-3mm}
      \caption{Bayesian update of the executed proportion in the first venue.}\label{bayesianrho0}
\end{minipage} \hfill
\begin{minipage}[c]{.48\linewidth}
\hspace{-10mm}
      \includegraphics[width=1.15\textwidth]{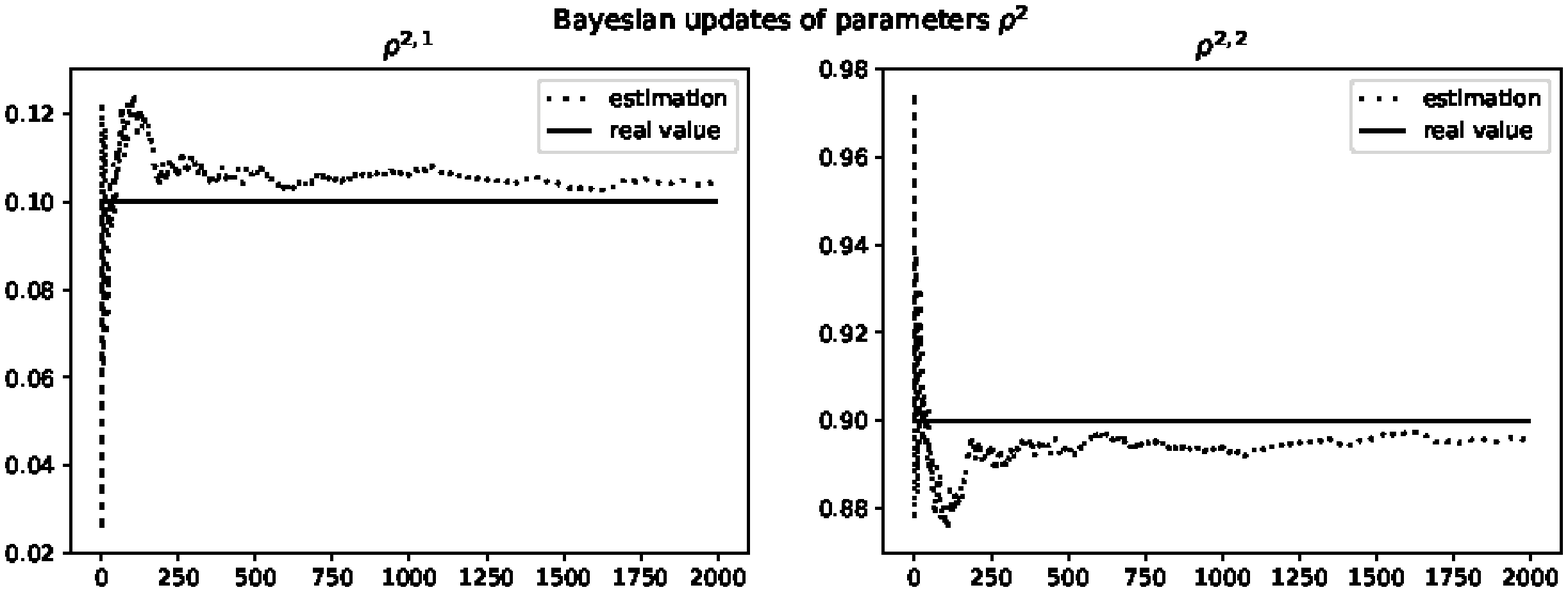}
      \vspace{-3mm}
      \caption{Bayesian update of the executed proportion in the second venue.}\label{bayesianrho1}
  \end{minipage} 
\end{figure}

\subsubsection{Update of the imbalance and the spread dynamics}
We plot the convergence of the estimated transition matrices $r^{1,\psi},r^{2,\psi}$ in Figures \ref{bayesianrpsi} and \ref{bayesianrpsi1}. We observe quite fast convergence to a good approximation of the spread dynamics parameters. The prior values are respectively:
\begin{align*}
  &r^{1, \psi} = \left[\begin{matrix}
-5 & 5   \\
 5 & -5    
\end{matrix}\right],
  &r^{2, \psi} &= \left[\begin{matrix}
-5 & 5   \\
 5 & -5    
\end{matrix}\right].
\end{align*}
\vspace{-8mm}
\begin{figure}[H]
\begin{minipage}[c]{.48\linewidth}
\hspace{-12mm}
      \includegraphics[width=1.23\textwidth]{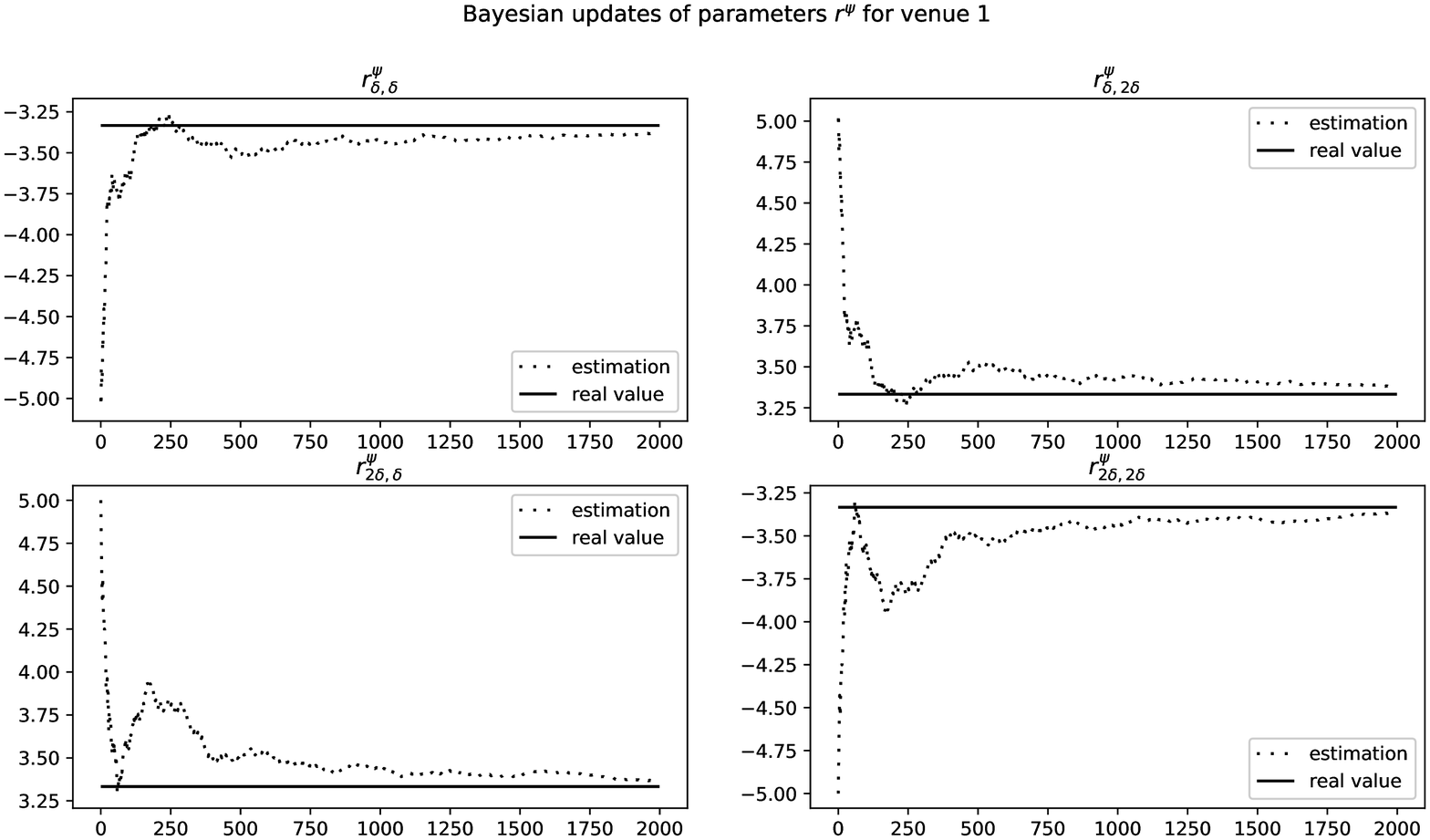}
      \vspace{-10mm}
      \caption{Bayesian update of the transition matrix $r^{1,\psi}$.}\label{bayesianrpsi}
\end{minipage} \hfill
\begin{minipage}[c]{.48\linewidth}
\hspace{-10mm}
 \includegraphics[width=1.23\textwidth]{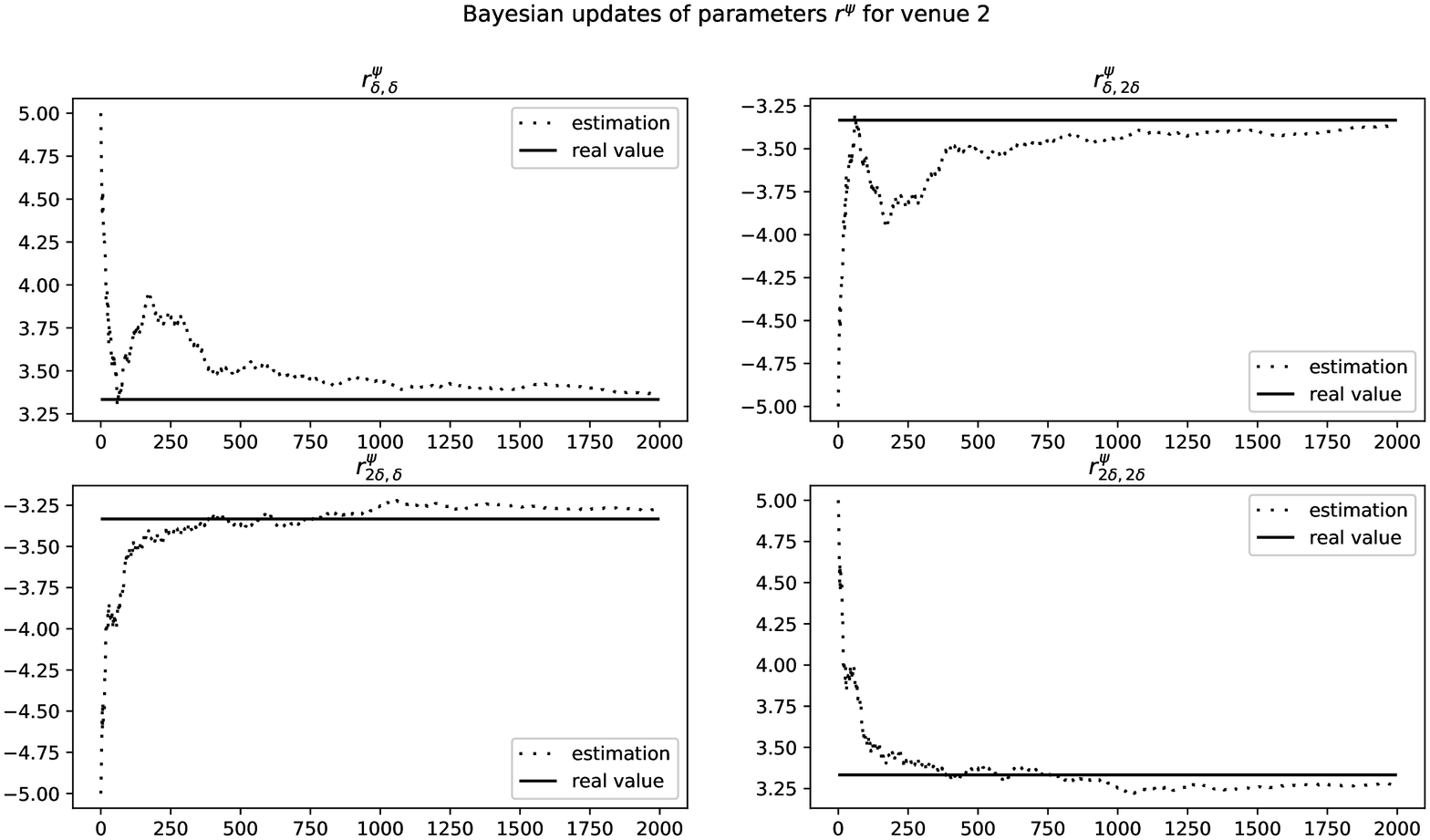}
      \vspace{-10mm}
      \caption{Bayesian update of the transition matrix $r^{2,\psi}$.}\label{bayesianrpsi1}
  \end{minipage} 
\end{figure}

We perform the same study for the transition matrices $r^{1,I}, r^{2,I}$ of the imbalance processes through the observed one in Figures \ref{bayesiani0} and \ref{bayesiani1}. We see that we need just a couple of slices to have a quite good approximation and only a dozen of slices (less for more granular slices) to achieve the right estimation. \\

\begin{figure}[H]

\begin{center}
      \includegraphics[width=0.85\textwidth]{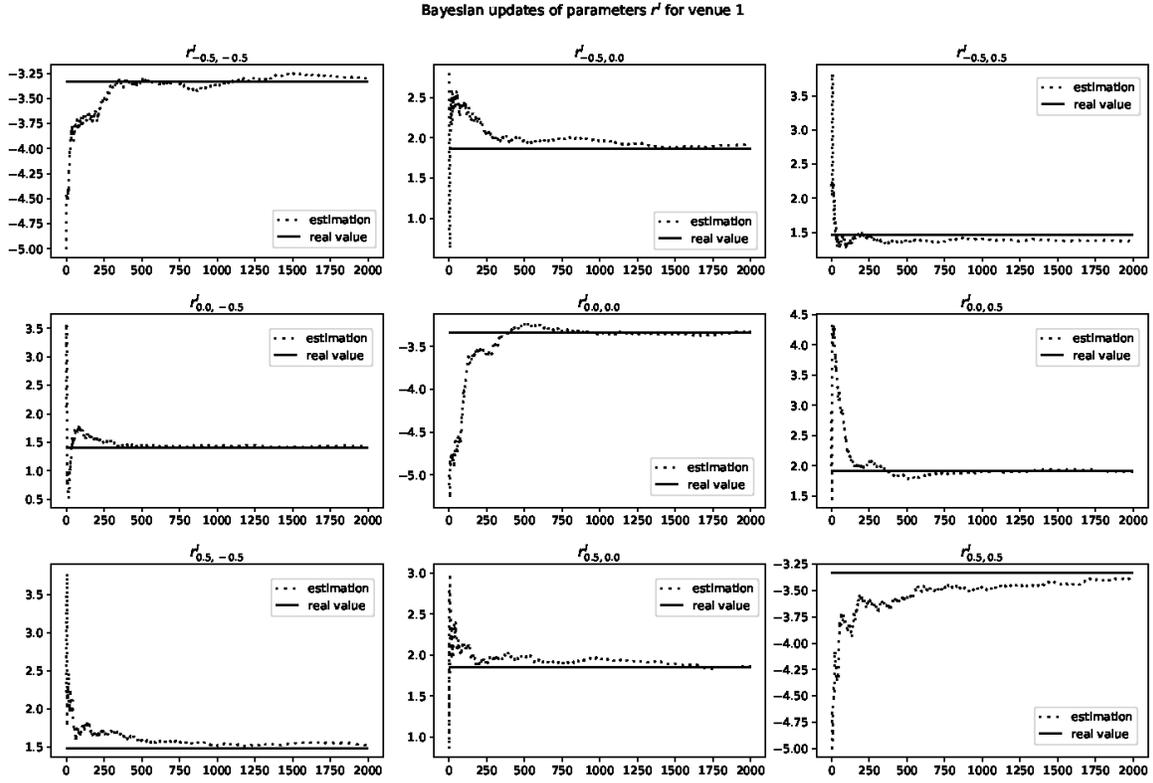}
      \vspace{-5mm}
      \caption{Bayesian update of the transition matrix $r^{1,I}$.}\label{bayesiani0}
\end{center}
\vspace{-10mm}
\end{figure}

\begin{figure}[H]
\begin{center}
      \includegraphics[width=0.85\textwidth]{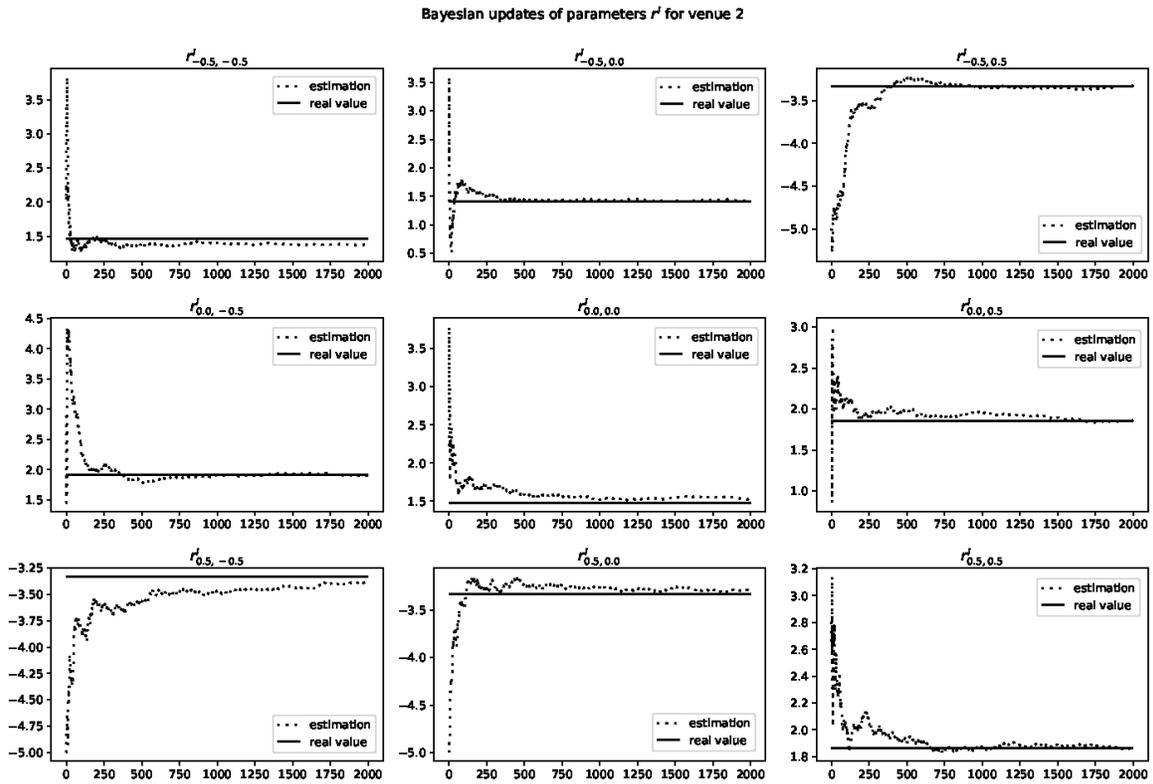}
      \vspace{-5mm}
      \caption{Bayesian update of the transition matrix $r^{2,I}$.}\label{bayesiani1}
\end{center}
\vspace{-10mm}
\end{figure}
In the examples in Figures \ref{bayesiani0} and \ref{bayesiani1} we started from the following prior parameters: 
\begin{align*}
  &r^{1, I} = \left[\begin{matrix}
-5    & 2.8    & 2.2    \\
 2.2& -5. & 2.8\\
 2.2& 2.8& -5    
\end{matrix}\right],
  &r^{2, I} &= \left[\begin{matrix}
-5    & 2.8    & 2.2    \\
 2.2& -5. & 2.8\\
 2.2& 2.8& -5    
\end{matrix}\right].
\end{align*}

\subsubsection{Update of the long term drift of the asset}

As we observe the increments of the price process $S_t$ continuously, it is easy to converge toward a real market drift, the example is in Figure \ref{bayesianmu}, we find $\mu^{\text{true}}=-0.5$, starting from a prior of $\mu=0.1$. It took $20$ slices to find a real value even if in the considered example we supposed to be sure in our prior estimation $\nu=0.02$, which appeared to be incorrect. 
\vspace{-4mm}
\begin{figure}[H]
\begin{center}
      \includegraphics[width=0.35\textwidth]{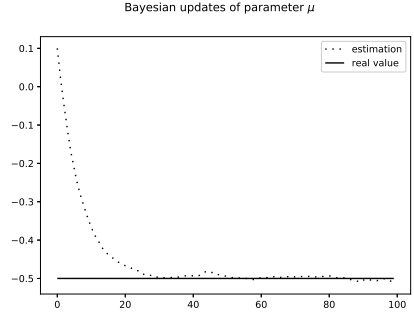}
      \vspace{-5mm}
      \caption{Bayesian update of the drift of the asset.}
      \label{bayesianmu}
\end{center}
\end{figure}
\vspace{-5mm}
\subsubsection{Update of the intensity of limit orders}
The hardest parameter to update quickly is obviously the intensity of filling which depends on states of both venues. In our numerical setting we have $32$ possible states, so during one slice of 10 time steps we have no possibility to even visit all the states. The results of convergence of the parameters $\lambda$ can be found in Figures \ref{bayesianrlambda0} and \ref{bayesianrlambda1}. We see that full convergence requires a lot of observations, however we should keep in mind that to have a strategy close to the optimal one we do not necessitate an excessive precision. \\

In this example, we started from the priors same for both venues, whereas the real parameters are different. The priors are:

\begin{align*}
&\lambda^{1}_{\delta, \delta} = \lambda^{2}_{\delta, \delta} = \left[\begin{matrix}
    5.35   & 6.52   & 7.11   \\
    2.75   & 3.4    & 3.79   \\
    1.5    & 1.86 & 2.1 
\end{matrix}\right],
&\lambda^{1}_{\delta, 2\delta} = \lambda^{2}_{\delta, 2\delta} =
\left[\begin{matrix}
    8.28   & 10.03   & 10.9  \\
    4.38   & 5.35   & 5.9   \\
    2.5    & 3.05 & 3.4    
\end{matrix}\right],\\
&\lambda^{1}_{2\delta, \delta} = \lambda^{2}_{2\delta, \delta} =
\left[\begin{matrix}
    1.81 & 2.27 & 2.5    \\
    0.78 & 1.04 & 1.19\\
    0.29 & 0.43 & 0.53
\end{matrix}\right],
&\lambda^{1}_{2\delta, 2\delta} = \lambda^{2}_{2\delta, 2\delta} =
\left[\begin{matrix}
    2.96 & 3.65 & 4.    \\
    1.42 & 1.81 & 2.04\\
    0.68 & 0.9 & 1.04
\end{matrix}\right].
\end{align*}

\begin{figure}[H]
\begin{center}
      \includegraphics[width=0.9\textwidth]{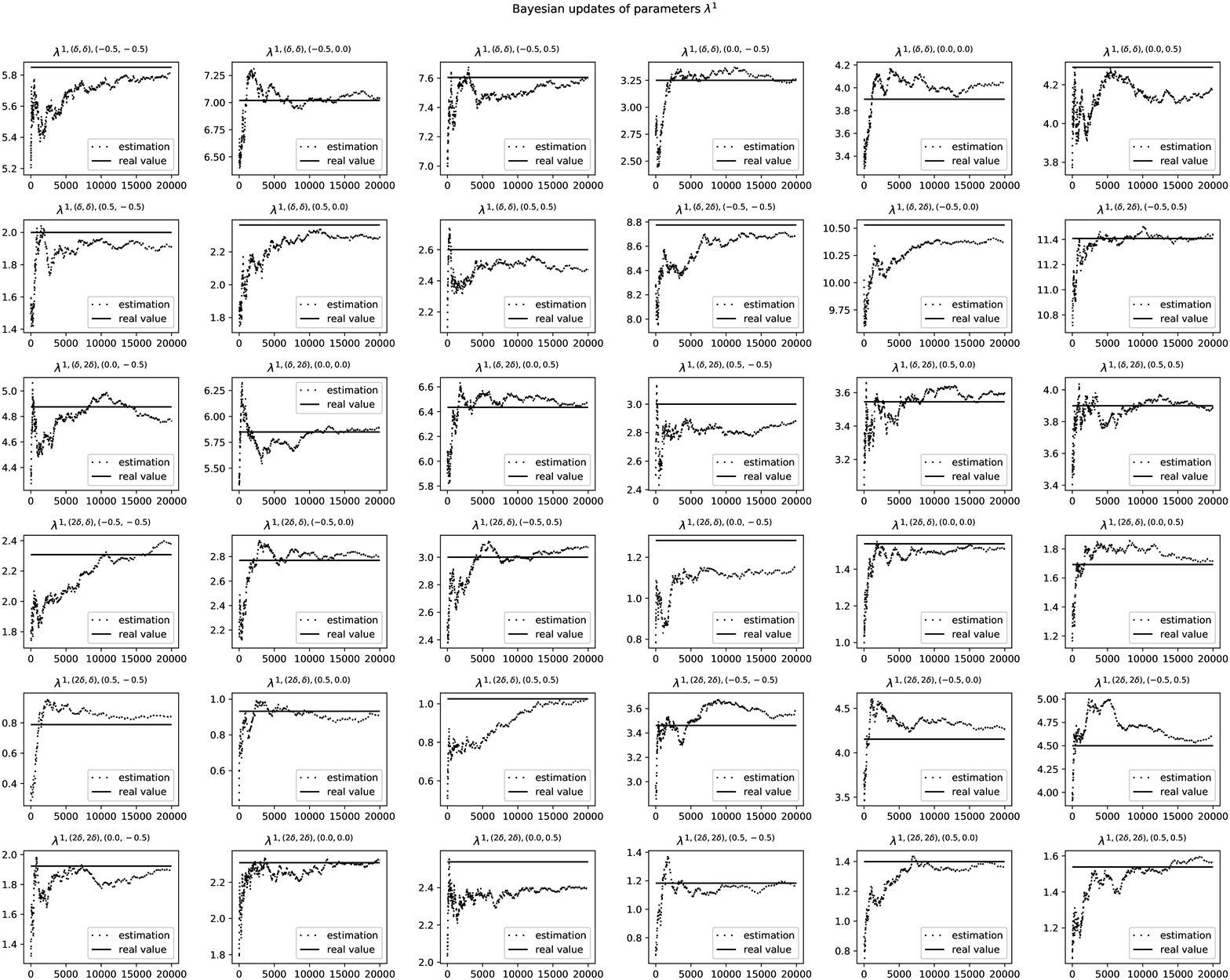}
      \vspace{-3mm}
      \caption{Bayesian update of the intensity of limit orders in the first venue.}
      \label{bayesianrlambda0}
\end{center}

\end{figure}

\vspace{-10mm}
\begin{figure}[H]
\begin{center}
      \includegraphics[width=0.9\textwidth]{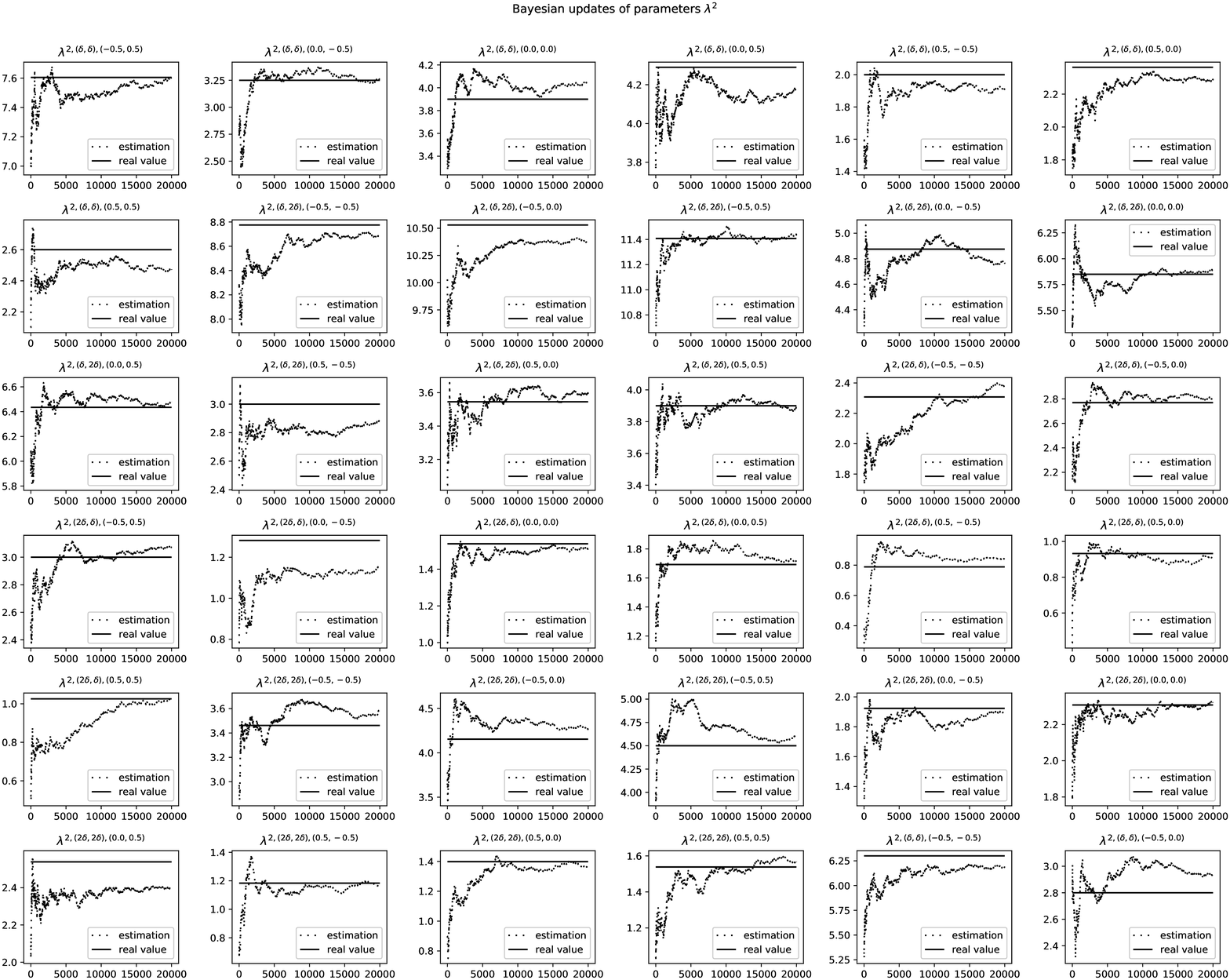}
      \vspace{-3mm}
      \caption{Bayesian update of the intensity of limit orders in the second venue.}\label{bayesianrlambda1}
\end{center}
\end{figure}

\pagebreak

\begingroup
\setcounter{section}{0}
\renewcommand\thesection{Appendix \Alph{section}}
\section{Proof of Theorem \ref{HJBQVI}}\label{Section Proof HJBQVI}
It can be show with the dynamic programming principle that the HJBQVI \eqref{HJBQVI} does not depend on the cash variable $x$. We set $\big(\tilde{q},\tilde{\psi},\tilde{I}\big)\in \mathcal{D}= \mathcal{Q}\times \mathcal{K}$ and $(t_i,S_i)\in [0,T)\times \mathbb{R}$ such that
\begin{align*}
  & t_i\rightarrow_{i\rightarrow +\infty} \hat{t},\\
  & S_i \rightarrow_{i\rightarrow +\infty} \hat{S},\\
  & v(t_i,\tilde{q},S_i,\tilde{\psi},\tilde{I})\rightarrow_{i\rightarrow +\infty} v_{\star} (\hat{t},\tilde{q},\hat{S},\tilde{\psi},\tilde{I}).
\end{align*}
We begin with $\hat{t}=T$. By taking $\ell^{n}=0$ for all $n\in\{1,\dots,N\}$ we get 
\begin{align*}
  v(t_i,\tilde{q},S_i,\tilde{\psi},\tilde{I})\geq \mathbb{E}_{t_i,\tilde{q},S_i,\tilde{\psi},\tilde{I}}\Big[Q_T S_T - \int_0^T g(q_t^{\star} - q_t) dt \Big].
\end{align*}
By dominated convergence, we get $v_{\star}(T,\tilde{q},\hat{S},\tilde{\psi},\tilde{I})\geq \tilde{q}\hat{S}$. \\

Assume now that $\hat{t}<T$ and that the minimum in the HJBQVI is given by the first term. We take $\phi: [0,T)\times \mathbb{R} \times \mathcal{D}\rightarrow \mathbb{R}$ be $C^1$ in time, $C^2$ in $\hat{S}$ and such that $0=\min_{[0,T]\times\mathbb{R}\times \mathcal{D}}(v_{\star}-\phi)=(v_{\star}-\phi)(\hat{t},\tilde{q},\hat{S},\tilde{\psi},\tilde{I})$. If there exists $\eta>0$ such that \vspace{-2mm}
\begin{align*}
2\eta < &\partial_t \phi(\hat{t},\tilde{q},\hat{S},\tilde{\psi},\tilde{I})\! -\! g(q-q_t^{\star}) \!+\! \mu\partial_S \phi + \frac{1}{2}\sigma^2 \partial_{SS}\phi \!+\! \sum_{\mathbf{k}\in \mathcal{K}} \!r_{(\tilde{\psi},\tilde{I}),(\mathbf{k}^\psi,\mathbf{k}^I)}\big(\phi(\hat{t},\tilde{q},\hat{S},\mathbf{k}^\psi,\mathbf{k}^I)\!-\!\phi(\hat t,\tilde q,\hat S,\tilde \psi,\tilde I)\big) \\
  & + \sup_{p\in Q_\psi, \ell\in \mathcal{A}} \sum_{n=1}^N \lambda^{n}(\tilde{\psi},\tilde{I},p^{n},\ell)\mathbb{E}\Big[\epsilon^{n}\ell^{n}(\hat{S}+\frac{\tilde{\psi}^{n}}{2}+p^{n}\delta^n)+\phi\big(\hat{t},\tilde{q}-\ell^{n}\epsilon^{n},\hat{S},\tilde{\psi},\tilde{I}\big) - \phi(\hat{t},\tilde{q},\hat{S},\tilde{\psi},\tilde{I})\Big], 
\end{align*}
\vspace{-2mm}
we should have
\vspace{-2mm}
\begin{align*}
  0 \leq & \partial_t \phi(t,\tilde{q},S,\tilde{\psi},\tilde{I}) - g(q-q_t^{\star}) + \mu\partial_S \phi + \frac{1}{2} \partial_{SS}\phi\! +\! \sum_{\mathbf{k}\in \mathcal{K}} \!r_{(\tilde{\psi},\tilde{I}),(\mathbf{k}^\psi,\mathbf{k}^I)}\big(\phi(t,\tilde{q},S,\mathbf{k}^\psi,\mathbf{k}^I)\!-\!\phi(t,\tilde q, S,\tilde \psi,\tilde I)\big) \\
  & + \sup_{p\in Q_\psi, \ell\in \mathcal{A}}\sum_{n=1}^N\lambda^{n}(\tilde{\psi},\tilde{I},p^{n},\ell)\mathbb{E}\Big[\epsilon^{n}\ell^{n}(S+\frac{\tilde{\psi}^{n}}{2}+p^{n}\delta^n)+\phi\big(t,\tilde{q}-\ell^{n}\epsilon^{n},S,\tilde{\psi},\tilde{I}\big) - \phi(t,\tilde{q},S,\tilde{\psi},\tilde{I})\Big], 
\end{align*}
\vspace{-2mm}
for all $(t,S)\in B= \Big((\hat{t}-r,\hat{t}+r) \cap [0,T)\Big)\times \Big(\hat{S}-r,\hat{S}+r\Big)$ for a given $r\in (0,T-\hat{t})$. We can assume without loss of generality that $B$ contains the sequences $(t_i,S_i)_i$ and, by taking $\eta$ arbitrarily small
\begin{align*}
  \phi(t,\tilde{q},S,\tilde{\psi},\tilde{I})+\eta \leq v_\star(t,\tilde{q},S,\tilde{\psi},\tilde{I}) \leq v(t,\tilde{q},S,\tilde{\psi},\tilde{I})
\end{align*}
on the boundary of $B$, denoted by $\partial_p B$. Without loss of generality we can also assume that
\begin{align*}
  \phi(t,q,S,\psi,I)+\eta \leq v_\star(t,q,S,\psi,I) \leq v(t,q,S,\psi,I),
\end{align*}
for $(t,q,S,\psi,I)\in \tilde{B}$ where
\begin{align*}
  \tilde{B}=\Big\{ & (t,q,S,\psi,I): (t,S)\in B, q\in\{\tilde{q}-\min_n \epsilon^{n},\tilde{q},\tilde{q}+\min_n \epsilon^{n}\}, \\
  & \psi \in \prod_{n=1}^N \{\tilde{\psi}^n-\delta^n, \tilde{\psi}^n,\tilde{\psi}^n+ \delta^n\}, I\in \prod_{n=1}^N \{\tilde{I}_{-1}^n, \tilde{I}^n,\tilde{I}_{+1}^n\}, (q,\psi,I)\neq (\tilde{q},\tilde{\psi},\tilde{I})\Big\}. 
\end{align*}
We introduce the set 
\begin{align*}
  B_{\mathcal{D}}=\big\{(t,\tilde q,S,\tilde \psi,\tilde I): (t,S)\in B \Big\},
\end{align*}
and denote by $\tau_i$ the first exit time of $(t,q_t,S_t,\psi_t,I_t)_{t\geq t_i}$ from $B_{\mathcal{D}}$, with $q_{t_i}=\tilde{q},S_{t_i}=\hat{S},\psi_{t_i}=\tilde{\psi},I_{t_i}=\tilde{I}$, and the processes are controlled by the optimal controls $(\ell^{n},p^{n})_{n\in\{1,\dots,N\}}\in\mathcal{A}\times Q_\psi$. By Ito's formula, we get
\begin{align*}
 \phi(\tau_i,q_{\tau_i},&S_{\tau_i},\psi_{\tau_i},I_{\tau_i}) =  \phi(t_i,q_{t_i},S_{t_i},\psi_{t_i},I_{t_i}) + \int_{t_i}^{\tau_i} \partial_t\phi(s,q_s,S_s,\psi_s,I_s) + \mu\partial_S \phi \\
  & + \frac{1}{2}\sigma^2 \partial_{SS}\phi +\sum_{\mathbf{k}\in \mathcal{K}} r_{(\psi_s,I_s),(\mathbf{k}^\psi,\mathbf{k}^I)}\big(\phi(s,q_s,S_s,\mathbf{k}^\psi,\mathbf{k}^I)-\phi(s,q_s, S_s, \psi_s, I_s)\big) \\
  & + \sum_{n=1}^N\lambda^{n}(\psi_s,I_s,p_s^{n},\ell_s)\mathbb{E}\Big[\phi\big(s,q_s-\ell_s^{n}\epsilon_s^{n},S_s,\psi_s,I_s\big) - \phi(s,q_s,S_s,\psi_s,I_s)\Big] ds + M(\tau_i,t_i),
\end{align*}
where $M(\tau_i,t_i)$ is a martingale. This can be rewritten as
\begin{align*}
  \phi(\tau_i,q_{\tau_i},&S_{\tau_i},\psi_{\tau_i},I_{\tau_i}) =  \phi(t_i,q_{t_i},S_{t_i},\psi_{t_i},I_{t_i}) + \int_{t_i}^{\tau_i} \partial_t\phi(s,q_s,S_s,\psi_s,I_s) + \mu\partial_S \phi - g(q_s-q^\star(s)) \\
  & + \frac{1}{2}\sigma^2 \partial_{SS}\phi + \sum_{\mathbf{k}\in \mathcal{K}} r_{(\psi_s,I_s),(\mathbf{k}^\psi,\mathbf{k}^I)}\big(\phi(s,q_s,S_s,\mathbf{k}^\psi,\mathbf{k}^I)-\phi(s,q_s, S_s, \psi_s, I_s)\big) \\
  & + \sum_{n=1}^N \! \lambda^{n}(\psi_s,I_s,p_s^{n},\ell_s)\mathbb{E}\Big[\epsilon_s^{n}\ell_s^{n}(S_s\!+\!\frac{\psi_s^{n}}{2}+p_s^{n}\delta^n)\!+\! \phi\big(s,q_s\!-\!\ell_s^{n}\epsilon_s^{n},S_s,\psi_s,I_s\big) \!- \!\phi(s,q_s,S_s,\psi_s,I_s)\Big] ds \\
  & + M(\tau_i,t_i) - \sum_{n=1}^N \int_{t_i}^{\tau_i} \lambda^{n}(\psi_s,I_s,p_s^{n},\ell_s) \mathbb{E}\big[\epsilon_s^{n}\ell_s^{n}(S_s+\frac{\psi_s^{n}}{2}+p_s^{n}\delta^n)\big] + g(q_s - q^\star(s)) ds . 
\end{align*}
We derive
\begin{align*}
   \phi(\tau_i,q_{\tau_i},S_{\tau_i},\psi_{\tau_i},I_{\tau_i}) \geq & \phi(t_i,q_{t_i},S_{t_i},\psi_{t_i},I_{t_i}) \\
  & + M(\tau_i,t_i) - \sum_{n=1}^N \int_{t_i}^{\tau_i} \lambda^{n}(\psi_s,I_s,p_s^n,\ell_s) \mathbb{E}\big[\epsilon_s^{n}\ell_s^{n}(S_s+\frac{\psi_s^{n}}{2}+p_s^{n}\delta)\big] + g(q_s - q^\star(s)) ds.
\end{align*}
As the martingale term vanishes with the expectation, we get
\begin{align*}
   \phi(t_i,q_{t_i},S_{t_i},\psi_{t_i},I_{t_i}) \leq & \mathbb{E}\Big[\phi(\tau_i,q_{\tau_i},S_{\tau_i},\psi_{\tau_i},I_{\tau_i}) \\
  & + \sum_{n=1}^N \int_{t_i}^{\tau_i} \lambda^{n}(\psi_s,I_s,p_s^{n},\ell_s) \mathbb{E}\big[\epsilon_s^{n}\ell_s^{n}(S_s+\frac{\psi_s^{n}}{2}+p_s^{n}\delta)\big] - g(q_s - q^\star(s)) ds\Big].
\end{align*}
and thus
\begin{align*}
  \phi(t_i,q_{t_i},S_{t_i},\psi_{t_i},I_{t_i}) \leq & -\eta + \mathbb{E}\Big[v(\tau_i,q_{\tau_i},S_{\tau_i},\psi_{\tau_i},I_{\tau_i}) \\
  & + \sum_{n=1}^N \int_{t_i}^{\tau_i} \lambda^{n}(\psi_s,I_s,p_s^{n},\ell_s) \mathbb{E}\big[\epsilon_s^{n}\ell_s^{n}(S_s+\frac{\psi_s^{n}}{2}+p_s^{n}\delta)\big] - g(q_s - q^\star(s)) ds\Big].  
\end{align*}
For $i$ sufficiently large, we deduce 
\begin{align*}
  v(t_i,\tilde{q},S_{t_i},\tilde{\psi},\tilde{I}) \leq & -\frac{\eta}{2} + \mathbb{E}\Big[v(\tau_i,q_{\tau_i},S_{\tau_i},\psi_{\tau_i},I_{\tau_i}) \\
  & + \sum_{n=1}^N \int_{t_i}^{\tau_i} \lambda^{n}(\psi_s,I_s,p_s^{n},\ell_s) \mathbb{E}\big[\epsilon_s^{n}\ell_s^{n}(S_s+\frac{\psi_s^{n}}{2}+p_s^{n}\delta)\big] - g(q_s - q^\star(s)) ds\Big],  
\end{align*}
which contradicts the dynamic programming principle. In conclusion, we necessarily have
\begin{align*}
  0\geq & \partial_t v(\hat{t},\tilde{q},\hat{S},\tilde{\psi},\tilde{I}) \!-\! g(q - q_t^{\star}) \!+ \!\mu\partial_S v \!+\! \frac{1}{2}\sigma^2 \partial_{SS}v \!+\! \sum_{\mathbf{k}\in \mathcal{K}}\! r_{(\tilde{\psi},\tilde{I}),(\mathbf{k}^\psi,\mathbf{k}^I)}\big(v(\hat{t},\tilde q,\hat S,\mathbf{k}^\psi,\mathbf{k}^I)\!-\!v(\hat t,\tilde q, \hat S, \tilde \psi, \tilde I)\big) \\
  & + \sup_{p\in Q_\psi, \ell\in \mathcal{A}}\sum_{n=1}^N \lambda^{n}(\tilde{\psi},\tilde{I},p^{n},\ell)\mathbb{E}\Big[\epsilon^{n}\ell^{n}(\hat{S}+\frac{\tilde{\psi}^{n}}{2}+p^{n}\delta^n)+v\big(\hat{t},\tilde{q}-\ell^{n}\epsilon^{n},\hat{S},\tilde{\psi},\tilde{I}\big) - v(\hat{t},\tilde{q},\hat{S},\tilde{\psi},\tilde{I})\Big]. 
\end{align*}
The second part of the HJBQVI being straightforward, we prove that $v$ is a viscosity supersolution of the HJBQVI on $[0,T)\times \mathbb{R}\times \mathcal{D}$. The proof for the subsolution is identical, except that we need to prove
\begin{align*}
  \sum_{n=1}^N \sup_{m^n\in [0,\overline{m}]}m^n(S-\frac{\psi^n}{2}) + v\big(t,q-m^n,S,\psi,I\big) - v(t,q,S,\psi,I) \geq 0,
\end{align*}
which is direct by choosing the constant controls $\overline{m}^n = 0$ for all $n\in\{1,\dots,N\}$. \\

For the proof of the uniqueness, we recall the definition of subjet and superjet.

\begin{definition}
Let $v:[0,T)\times\mathbb{R}\times\mathcal{D}\rightarrow \mathbb{R}$ be l.s.c (resp u.s.c) with respect to $(\hat{t},\hat{S})$. For $(\hat{t},\tilde{q},\hat{S},\tilde{\psi},\tilde{I})\in [0,T)\times\mathbb{R}\times\mathcal{D}$ we say that $(y,p,A)\in\mathbb{R}^3$ is in the subjet $\mathcal{P}^- v(\hat{t},\tilde{q},\hat{S},\tilde{\psi},\tilde{I})$ (resp. the superjet $\mathcal{P}^+v(\hat{t},\tilde{q},\hat{S},\tilde{\psi},\tilde{I})$ if 
\begin{align*}
  v(t,\tilde{q},S,\tilde{\psi},\tilde{I})\geq v(\hat{t},\tilde{q},\hat{S},\tilde{\psi},\tilde{I}) + y(t-\hat{t}) + p(S-\hat{S}) + \frac{1}{2}A(S-\hat{S})^2 + o(|t-\hat{t}| + |S-\hat{S}|^2),
\end{align*}
(resp. $v(t,\tilde{q},S,\tilde{\psi},\tilde{I})\geq v(\hat{t},\tilde{q},\hat{S},\tilde{\psi},\tilde{I}) + y(t-\hat{t}) + p(S-\hat{S}) + \frac{1}{2}A(S-\hat{S})^2 + o(|t-\hat{t}| + |S-\hat{S}|^2)$), for all $(t,S)$ such that $(t,\tilde q,S,\tilde \psi, \tilde I)\in [0,T)\times\mathbb{R}\times\mathcal{D}$. We also define $\mathcal{\overline{P}}^- (\hat{t},\tilde{q},\hat{S},\tilde{\psi},\tilde{I})$ as the set of points $(y,p,A)\in\mathbb{R}^3$ such that there exists a sequence $(t_I,\tilde{q},S_i,\tilde{\psi},\tilde{I},y_i,p_i,A_i)$ satisfying 
\begin{align*}
  (t_i,\tilde{q},S_i,\tilde{\psi},\tilde{I},y_i,p_i,A_i)\rightarrow_{i\rightarrow +\infty} (\hat t,\tilde{q},\hat S,\tilde{\psi},\tilde{I},y,p,A).
\end{align*}
The set $\mathcal{\overline{P}}^+ (\hat{t},\tilde{q},\hat{S},\tilde{\psi},\tilde{I})$ is defined similarly. 
\end{definition}
We now introduce an analogous of the Ishii's lemma, whose proof can be found in \cite{bouchard2007introduction}. 
\begin{lemma}\label{viscolemma}
A l.s.c (resp u.s.c) function $v$ is a supersolution (resp. subsolution) of the HJBQVI on $[0,T)\times\mathbb{R}\times \mathcal{D}$ if and only if for all $(\hat{t},\tilde{q},\hat{S},\tilde{\psi},\tilde{I})\in [0,T)\times\mathbb{R}\times\mathcal{D}$, and all $(\hat{y},\hat{p},\hat{A})\in \mathcal{\overline{P}}^-(\hat{t},\tilde{q},\hat{S},\tilde{\psi},\tilde{I})$ (resp. $\mathcal{\overline{P}}^+(\hat{t},\tilde{q},\hat{S},\tilde{\psi},\tilde{I})$), we have
\begin{align*}
  0\leq & \min \Bigg\{-\hat y + g(\tilde q - q^\star(\hat{t})) - \mu\hat p - \frac{1}{2}\sigma^2 \hat A  - \sum_{\mathbf{k}\in \mathcal{K}} r_{(\tilde{\psi},\tilde{I}),(\mathbf{k}^\psi,\mathbf{k}^I)}\big(v(\hat{t},\tilde q,\hat S,\mathbf{k}^\psi,\mathbf{k}^I)-v(\hat t,\tilde q, \hat S, \tilde \psi, \tilde I)\big) \\
  & - \sup_{p\in Q_\psi, \ell\in \mathcal{A}}\sum_{n=1}^N \lambda^{n}(\tilde \psi,\tilde I,p^{n},\ell)\mathbb{E}\Big[\epsilon^{n}\ell^{n}(\hat S+\frac{\tilde \psi^{n}}{2}+p^{n}\delta^n)+v(\hat t,\tilde q-\ell^{n}\epsilon^{n},\hat S,\tilde \psi,\tilde I) \\
  & - v(\hat t,\tilde q,\hat S,\tilde \psi,\tilde I)\Big] ; \quad \sum_{n=1}^N v(\hat t,\tilde q,\hat S,\tilde \psi,\tilde I)-\sup_{m^n\in [0,\overline{m}]}m^n(\hat S-\frac{\tilde \psi^n}{2}) + v\big(\hat t,\tilde q-m^n,\hat S,\tilde \psi,\tilde I\big)
  \Bigg\},  
\end{align*}
(resp. $\leq 0)$.
\end{lemma}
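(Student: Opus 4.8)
The statement is an equivalence between the test-function definition of a viscosity solution and its characterization through the closed semijets, so the plan is to prove the supersolution case in full and obtain the subsolution case by the symmetric argument (reversing the roles of lower/upper semicontinuity, of $\liminf/\limsup$, and the sense of the final inequality). The backbone is the classical correspondence between test functions and semijets: $(y,p,A)\in\mathcal{P}^-v(\hat t,\tilde q,\hat S,\tilde\psi,\tilde I)$ if and only if there is a function $\phi$ that is $C^1$ in $t$ and $C^2$ in $S$ with $v-\phi$ attaining a local minimum at the base point and $(\partial_t\phi,\partial_S\phi,\partial_{SS}\phi)=(y,p,A)$ there. I would keep the inventory and Markov-state coordinates $(\tilde q,\tilde\psi,\tilde I)$ frozen throughout, since the jets differentiate only in $(t,S)$.

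The implication ``jet inequality $\Rightarrow$ supersolution'' is immediate: given any admissible test function $\phi$ touching $v$ from below, its derivatives at the contact point define an element of $\mathcal{P}^-v\subseteq\overline{\mathcal{P}}^-v$, and feeding this triple into the assumed inequality reproduces verbatim the differential inequality required of a supersolution. For the converse I would proceed in two steps. First, for $(y,p,A)$ in the \emph{open} jet $\mathcal{P}^-v$, I realize the triple by a smooth test function via the correspondence above and invoke the supersolution property already established through the dynamic programming principle; this yields the full minimum inequality of the lemma (both components) at the base point. Second, I extend it to the closure $\overline{\mathcal{P}}^-$: given $(y,p,A)\in\overline{\mathcal{P}}^-v(\hat t,\dots)$, pick approximating data $(t_i,S_i,y_i,p_i,A_i)$ with $(y_i,p_i,A_i)\in\mathcal{P}^-v(t_i,\tilde q,S_i,\tilde\psi,\tilde I)$ and $v(t_i,\tilde q,S_i,\tilde\psi,\tilde I)\to v(\hat t,\tilde q,\hat S,\tilde\psi,\tilde I)$, and write the first step's inequalities $F_i\ge0$ and $B_i\ge0$ (first and obstacle components) at each index. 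The local terms $-y_i-\mu p_i-\frac12\sigma^2A_i$ and $g(\tilde q-q^\star(t_i))$ converge by construction and by continuity of $g$ and $q^\star$, so everything reduces to the limiting behaviour of the Markov-transition term, the supremum term, and the obstacle term.

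The main obstacle is exactly this passage to the limit, since $v$ is only lower semicontinuous; the point I would make explicit is that the signs cooperate. In the transition term $-\sum_{\mathbf k}r_{(\tilde\psi,\tilde I),\mathbf k}\big(v(t_i,\tilde q,S_i,\mathbf k^\psi,\mathbf k^I)-v(t_i,\tilde q,S_i,\tilde\psi,\tilde I)\big)$ the displaced values enter with the nonnegative rates and a minus sign, so $\liminf_i v(t_i,\tilde q,S_i,\mathbf k^\psi,\mathbf k^I)\ge v(\hat t,\tilde q,\hat S,\mathbf k^\psi,\mathbf k^I)$ forces $\limsup_i$ of this term to be $\le$ its value at $(\hat t,\hat S)$, while the base value converges by the definition of the closure. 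For the supremum I fix an arbitrary control $(\ell,p)$, bound the infimum $-C_i$ above by its value at that control, and pass to the limit there—the displaced inventory value $v(t_i,\tilde q-\ell^n\epsilon^n,S_i,\tilde\psi,\tilde I)$ again enters with a minus sign and is controlled by lower semicontinuity, whereas the tick profit $\epsilon^n\ell^n(S_i+\frac{\tilde\psi^n}{2}+p^n\delta^n)$ and the base value converge—and only then take the infimum over controls, which yields the correct one-sided limit $\limsup_i(-C_i)\le -C(\hat t,\hat S)$. The obstacle term is treated identically by fixing $m^n$ inside its supremum and using lower semicontinuity of $v(t_i,\tilde q-m^n,S_i,\tilde\psi,\tilde I)$, giving $B(\hat t,\hat S)\ge0$.

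Collecting these estimates gives $\limsup_iF_i\le F(\hat t,\hat S)$; since $F_i\ge0$ and $B_i\ge0$ for every $i$, both components are nonnegative at $(\hat t,\hat S)$, so the minimum in the lemma is $\ge0$, which is the claimed characterization on the closure. The subsolution statement then follows by the mirror-image argument, working with an upper semicontinuous $v$, exchanging $\limsup$ with $\liminf$, and reversing the final inequality.
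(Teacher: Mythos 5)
Your argument is essentially correct, but note that the paper does not prove this lemma at all: it is stated as an analogue of Ishii's lemma and the proof is deferred to the cited reference (\cite{bouchard2007introduction}), so there is no in-paper proof to compare against. What you supply is the standard Crandall--Ishii--Lions jet characterization adapted to a nonlocal equation, and you correctly isolate the only genuinely delicate point: when passing from $\mathcal{P}^-$ to $\overline{\mathcal{P}}^-$ along an approximating sequence, the displaced values $v(t_i,\tilde q,S_i,\mathbf{k}^\psi,\mathbf{k}^I)$, $v(t_i,\tilde q-\ell^n\epsilon^n,S_i,\tilde\psi,\tilde I)$ and $v(t_i,\tilde q-m^n,S_i,\tilde\psi,\tilde I)$ all enter the operator with nonpositive coefficients, so lower semicontinuity pushes the $\limsup$ of each component in the favourable direction, while the base value converges by the definition of the closed jet and the supremum/infimum over controls is handled by fixing a control before taking limits. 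Two small points deserve tightening. First, your phrase ``invoke the supersolution property already established through the dynamic programming principle'' conflates the general hypothesis of the lemma (an arbitrary l.s.c.\ supersolution $v$) with the specific value function treated earlier in the appendix; the correct invocation is simply the test-function definition of a supersolution applied to the $\phi$ realizing the jet. Second, since the nonlocal terms in the lemma are evaluated on $v$ itself rather than on the test function, you should say a word on why the test-function form of the supersolution inequality (with $\phi$ in the nonlocal part) implies the form with $v$: here this is immediate because $v\geq\phi$ with equality at the contact point and every displaced value carries a nonpositive coefficient, so replacing $\phi$ by $v$ only increases the first component of the minimum. With those clarifications the proof is complete and is the expected one.
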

We now prove the following comparison principle: 
\begin{proposition}
Let $u$ (resp. $v$) be a l.s.c supersolution (resp. u.s.c subsolution) with polynomial growth of the HJBQVI on $[0,T)\times\mathbb{R}\times \mathcal{D}$. If $u\geq v$ on $\{T\}\times\mathbb{R}\times \mathcal{D}$, then $u\geq v$ on $[0,T)\times\mathbb{R}\times \mathcal{D}$.
\end{proposition}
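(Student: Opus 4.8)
The plan is to argue by contradiction using the Crandall--Ishii doubling-of-variables technique, exploiting the fact that the only genuinely continuous state variables carrying viscosity derivatives are the time $t$ and the mid-price $S$, while the inventory $\tilde q$ and the discrete state $(\tilde\psi,\tilde I)$ range over the \emph{finite} set $\mathcal D=\mathcal Q\times\mathcal K$. First I would suppose, for contradiction, that $M:=\sup_{[0,T)\times\mathbb R\times\mathcal D}(v-u)>0$. Since $u,v$ have only polynomial growth in $S$ whereas $\mathcal D$ is finite, I would penalise the growth by subtracting a term $\beta\,\chi(S)$ with $\chi(S)=(1+|S|^2)^{m}$ and $m$ chosen above the common growth exponent, so that $v-u-\beta\chi$ attains a strictly positive maximum for all small $\beta$; a further term $\kappa/(T-t)$ (subtracted from $v$) would both push the maximiser away from $t=T$, where the ordering $u\ge v$ is assumed, and turn $v-\kappa/(T-t)$ into a \emph{strict} subsolution, yielding the margin $\kappa/(T-t)^2>0$ needed to close the argument. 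One sends $\eps\to0$ first, then $\kappa,\beta\to0$.

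Fixing the discrete maximising state $(\bar q,\bar\psi,\bar I)$, I would double the continuous variables through
\[
\Phi_{\eps}(t,s,S,S')=v(t,\bar q,S,\bar\psi,\bar I)-u(s,\bar q,S',\bar\psi,\bar I)-\tfrac{1}{2\eps}|S-S'|^2-\tfrac{1}{2\eps}|t-s|^2-\beta\chi(S)-\tfrac{\kappa}{T-t},
\]
with maximiser $(t_\eps,s_\eps,S_\eps,S'_\eps)$. The standard estimates give $|S_\eps-S'_\eps|^2/\eps\to0$, $|t_\eps-s_\eps|^2/\eps\to0$, and convergence of the maximisers to a point realising $M$. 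Lemma~\ref{viscolemma} then supplies elements of $\overline{\mathcal P}^+v$ and $\overline{\mathcal P}^-u$ whose time- and space-slopes are matched by the penalisation (the $S$-slope equal to $(S_\eps-S'_\eps)/\eps$) and whose second-order coefficients $A,B$ satisfy $A\le B$. Because $\mu$ and $\sigma$ are constant, the drift contributions $-\mu\hat p$ cancel upon subtraction and the diffusion contribution $-\tfrac12\sigma^2(A-B)\le0$ has the right sign; the errors from differentiating $\beta\chi(S)$ against the drift and diffusion remain controllable at large $S$ by the choice of $m$.

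The conceptual heart of the argument is that every nonlocal term preserves the sign needed for the contradiction. The Markov-chain coupling $\sum_{\mathbf k}r_{(\bar\psi,\bar I),\mathbf k}\big[(v-u)(\cdot,\mathbf k^\psi,\mathbf k^I)-(v-u)(\cdot,\bar\psi,\bar I)\big]$ is nonpositive because $r_{(\bar\psi,\bar I),\mathbf k}\ge0$ for $\mathbf k\ne(\bar\psi,\bar I)$ and $v-u$ is maximised (up to the doubling error) at $(\bar\psi,\bar I)$; the inventory jumps $\tilde q\mapsto\tilde q-\ell^n\epsilon^n$ likewise move only to points where $v-u$ is no larger. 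The running-reward pieces $\epsilon^n\ell^n(\hat S+\tfrac{\tilde\psi^n}{2}+p^n\delta^n)$ do not involve the value function and cancel termwise, so using $\sup_{p,\ell}A_{p,\ell}-\sup_{p,\ell}B_{p,\ell}\le\sup_{p,\ell}(A_{p,\ell}-B_{p,\ell})$ the whole supremum term is bounded above by a nonpositive quantity up to vanishing errors. The quasi-variational obstacle is handled by a case analysis on the active branch of the $\min$: since $u$ is a supersolution both branches are $\ge0$, while for the subsolution $v$ at least one is $\le0$; if the market-order branch is active one compares the intervention operators directly, using that $m^n(\hat S-\tfrac{\tilde\psi^n}{2})$ cancels and that $\tilde q\mapsto\tilde q-m^n$ lands where $v-u\le M$, so that the genuinely substantive case is the one in which \emph{both} inequalities come from the first (diffusion) branch.

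The main obstacle I expect is not any individual term but the simultaneous bookkeeping of the limits: one must combine the matrix inequality from Ishii's lemma, the sign estimates for the finitely many coupling and jump terms, and the two penalisations, and then verify that as $\eps\to0$ and afterwards $\kappa,\beta\to0$ every error term is dominated while the strict margin $\kappa/(T-t_\eps)^2$ survives. Once these limits are taken the surviving inequality reads $0<M\le0$, the desired contradiction, whence $u\ge v$ on $[0,T)\times\mathbb R\times\mathcal D$. Uniqueness of the viscosity solution follows, and its identification with the value function of the control problem is then obtained through the dynamic programming principle already invoked in the supersolution step.
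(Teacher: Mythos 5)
Your proposal follows essentially the same route as the paper's proof: argue by contradiction, penalize the polynomial growth in $S$, double the continuous variables, invoke Ishii's lemma together with Lemma \ref{viscolemma} to compare the jets, and use the finiteness of $\mathcal{D}$ to control the sign of the Markov-chain coupling and inventory-jump terms, with the obstacle branch handled by a direct comparison of the intervention operators. The only differences are cosmetic devices within the same Crandall--Ishii framework --- you obtain the strict margin from a $\kappa/(T-t)$ term and also double the time variable, whereas the paper rescales by $e^{\rho t}$ to create a zeroth-order term $\rho w$ and doubles only the price variable --- so the argument is correct and matches the paper's.
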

\begin{proof}
For $\rho>0$ we introduce the following change of variables:
\begin{align*}
  \tilde{u}(t,q,S,\psi,I)=e^{\rho t}u(t,q,S,\psi,I), \quad \tilde{v}(t,q,S,\psi,I)=e^{\rho t}v(t,q,S,\psi,I).
\end{align*}
Then, $\tilde u$ and $\tilde v$ are respectively supersolution and subsolution of the following equation:
\begin{align*}
  0= & \min \Bigg\{-\partial_t w(t,q,S,\psi,I) + \rho w(t,q,S,\psi,I) + g(q-q_t^{\star}) - \mu\partial_S w - \frac{1}{2}\sigma^2 \partial_{SS}w \\
  & - \sum_{\mathbf{k}\in \mathcal{K}} r_{(\psi,I),(\mathbf{k}^\psi,\mathbf{k}^I)}\big(w(t, q, S,\mathbf{k}^\psi,\mathbf{k}^I)-w( t, q, S, \psi, I)\big) \\
  & - \sup_{p\in Q_\psi, \ell\in \mathcal{A}}\sum_{n=1}^N\lambda^{n}(\psi,I,p^{n},\ell)\mathbb{E}\Big[\epsilon^{n}\ell^{n}e^{\rho t}(S+\frac{\psi^{n}}{2}+p^{n}\delta^n)+w\big(t,q-\ell^{n}\epsilon^{n},S,\psi,I\big) \\
  & - w(t,q,S,\psi,I)\Big] ; \quad \sum_{n=1}^N w(t,q,S,\psi,I)-\sup_{m^n\in [0,\overline{m}]}m^n e^{\rho t}(S-\frac{\psi^n}{2}) + w\big(t,q-m^n,S,\psi,I\big)
  \Bigg\},  
\end{align*}
on $[0,T)\times \mathbb{R} \times \mathcal{D}$, with $\tilde{u}\geq \tilde{v}$ on $\{T\}\times\mathbb{R}\times\mathcal{D}$. In order to prove the proposition, we only have to show that $\tilde{u}\geq\tilde{v}$ on $[0,T)\times\mathbb{R}\times\mathcal{D}$. Assume that the minimum is given by the first term and that $\sup_{[0,T)\times\mathbb{R}\times\mathcal{D}}\tilde{v}-\tilde{u} > 0$. We fix $p\in \mathbb{N}^\star$ such that 
\begin{align*}
  \lim_{\|S\|_2\rightarrow +\infty} \sup_{\substack{t\in [0,T]\\ (q,\psi,I)\in \mathcal{D}}} \frac{|\tilde{u}(t,q,S,\psi,I)| + |\tilde{v}(t,q,S,\psi,I)|}{1+ \|S\|_2^{2p}}= 0. 
\end{align*}
Then, there exists $(\hat{t},\tilde{q},\hat{S},\tilde{\psi},\tilde{I})\in [0,T]\times\mathbb{R}\times\mathcal{D}$ such that 
\begin{align*}
  0 &< \tilde{v}(\hat t,\tilde q,\hat S,\tilde\psi,\tilde I) - \tilde{u}(\hat t,\tilde q,\hat S,\tilde\psi,\tilde I) - \phi(\hat t, \tilde{q}, \hat{S}, \hat{S}, \tilde{\psi},\tilde{I}) \\
  & = \max_{(t,q,S,\psi,I)} \tilde{v}( t, q, S, \psi, I) - \tilde{u}( t, q, S, \psi, I) - \phi( t, q, S, S, \psi,I),
\end{align*}
where $\epsilon>0$ is small enough and 
\begin{align*}
  \phi( t, S, R) = \epsilon\exp(-\tilde{\kappa} t)\big( 1 + \|S\|_2^{2p} + \|R\|_2^{2p} \big), \quad \tilde{\kappa}>0.
\end{align*}
Since $\tilde{u}\geq \tilde{v}$ on $\{T\}\times\mathbb{R}\times\mathcal{D}$, we directly have $\hat{t}<T$. \\

For all $i\in \mathbb{N}$, we can find a sequence $(t_i,S_i,R_i)$ such that
\begin{align*}
0 & <\tilde{v}(t_i,\tilde{q}, S_i, \tilde{\psi},\tilde{I}) - \tilde{u}(t_i,\tilde{q}, R_i, \tilde{\psi},\tilde{I}) - \phi(t_i,S_i,R_i) - i|S_i - R_i|^2 - \left( |t_i-\hat{t}|^2 + |S_i - \hat{S}|^4 \right)\\
& = \underset{(t,S,R) }{\max} \tilde{v}(t,\tilde{q}, S, \tilde{\psi},\tilde{I}) - \tilde{u}(t,\tilde{q}, R, \tilde{\psi},\tilde{I}) - \phi(t,S,R) - i|S - R|^2 - \left( |t-\hat{t}|^2 + |S - \hat{S}|^4 \right).
\end{align*}
Then we have:
\[(t_i,S_i,R_i) \underset{i\rightarrow +\infty}{\longrightarrow} (\hat{t},\hat{S},\hat{S})\]
up to a subsequence, and
\begin{align*}
&\tilde{v}(t_i,\tilde{q}, S_i, \tilde{\psi},\tilde{I}) - \tilde{u}(t_i,\tilde{q}, R_i, \tilde{\psi},\tilde{I}) - \phi(t_i,S_i,R_i) - i|S_i - R_i|^2 - \left( |t_i-\hat{t}|^2 + |S_i - \hat{S}|^4 \right)\\
& \underset{n\rightarrow +\infty}{\longrightarrow } \tilde{v}(\hat t,\tilde{q}, \hat S, \tilde{\psi},\tilde{I}) - \tilde{u}(\hat t,\tilde{q}, \hat S, \tilde{\psi},\tilde{I}) - \phi(\hat t,\hat S,\hat S)
\end{align*}
Let us then denote for $i\in \mathbb{N}^*$ $$\varphi_i(t,S,R)= \phi(t,S,R) + i|S-R|^2 + |t-\hat{t}|^2 + |S-\hat{S}|^4 \quad \forall \ (t,S,R) \in [0,T]\times \mathbb{R}^2.$$
Then Ishii's Lemma (see \cite{barles2008second, crandall1992user}) guarantees that for all $\eta>0,$ we can find $(y^1_i,p^1_i,A^1_i) \!\in\! \bar{\mathcal{P}}^+ \tilde{v}(t_i,\tilde{q},S_i,\tilde{\psi},\tilde{I})$ and $(y^2_i,p^2_i,A^2_i) \in \bar{\mathcal{P}}^- \tilde{u}(t_i,\tilde{q},R_i,\tilde{\psi},\tilde{I})$ such that:
\[y^1_i - y^2_i = \partial_t \varphi_i(t_i,S_i,R_i), \quad (p^1_i,p^2_i) = \left(\partial_S \varphi_i, -\partial_R \varphi_i \right)(t_i,S_i,R_i),\]
and
\[\begin{pmatrix} A^1_i & 0\\ 0 & -A^2_i \end{pmatrix} \leq H_{SR} \varphi_i (t_i,S_i,R_i) + \eta \left( H_{SR} \varphi_n (t_i,S_i,R_i) \right)^2, \]
where $H_{SR}\varphi_i (t_i,.,.)$ denotes the Hessian matrix of $\varphi_i(t_i,.,.).$ Applying Lemma \ref{viscolemma}, we get
\begin{align*}
\rho& \left( \tilde{v}(t_i,\tilde{q}, S_i, \tilde{\psi},\tilde{I}) - \tilde{u}(t_i,\tilde{q}, R_i, \tilde{\psi},\tilde{I}) \right) \leq  y^1_i - y^2_i + \frac 12 \sigma^2 (A^1_i - A^2_i) + \mu (p^1_i - p^2_i) \\
&\quad + \sum_{\mathbf{k}\in \mathcal{K}} r_{(\tilde \psi,\tilde I),(\mathbf{k}^\psi,\mathbf{k}^I)}\big(\tilde{v}(t_i, \tilde q, S_i,\mathbf{k}^\psi,\mathbf{k}^I)-\tilde{v}( t_i, \tilde q, S_i, \tilde \psi, \tilde I)\big) \\
&\quad + \sup_{p\in Q_\psi, \ell\in \mathcal{A}}\sum_{n=1}^N \lambda^{n}(\tilde \psi,\tilde I,p^n,\ell)\mathbb{E}\Big[\epsilon^{n}\ell^{n}e^{\rho t_i}(S_i+\frac{\tilde{\psi}^{n}}{2}+p^{n}\delta^n)+\tilde{v}\big(t_i,\tilde{q}-\ell^{n}\epsilon^{n},S_i,\tilde \psi,\tilde I\big)- \tilde{v}(t_i,\tilde q,S_i,\tilde \psi,\tilde I)\Big] \\
&\quad - \sum_{\mathbf{k}\in \mathcal{K}} r_{(\tilde \psi,\tilde I),(\mathbf{k}^\psi,\mathbf{k}^I)}\big(\tilde{u}(t_i, \tilde q, R_i,\mathbf{k}^\psi,\mathbf{k}^I)-\tilde{u}( t_i, \tilde q, R_i, \tilde \psi, \tilde I)\big)\\
&\quad -\sup_{p\in Q_\psi, \ell\in \mathcal{A}} \sum_{n=1}^N \lambda^{n}(\tilde \psi,\tilde I,p^{n},\ell)\mathbb{E}\Big[\epsilon^{n}\ell^{n}e^{\rho t_i}(R_i+\frac{\tilde{\psi}^{n}}{2}+p^{n}\delta^n)+\tilde{u}\big(t_i,\tilde{q}-\ell^{n}\epsilon^{n},R_i,\tilde \psi,\tilde I\big)- \tilde{u}(t_i,\tilde q,R_i,\tilde \psi,\tilde I)\Big].
\end{align*}
Moreover, we have $$ H_{SR} \varphi_i (t_i,S_i,R_i) = \begin{pmatrix} \partial^2_{SS} \phi(t_i,S_i,R_i) + 2i +12 (S_i-\hat{S})^2 & \partial^2_{SR} \phi(t_i,S_i,R_i)-2i\\ \partial^2_{SR} \phi(t_i,S_i,R_i)-2i & \partial^2_{SR}\phi(t_i,S_i,R_i) + 2i \end{pmatrix},$$
and 
\begin{align*}
  & \partial_S \varphi_i (t_i,S_i,R_i) = \partial_S \phi(t_i,S_i,R_i) +2i|S_i-R_i| +  4 |S_i - \hat{S}|^3, \\
  & \partial_R \varphi_i (t_i,S_i,R_i) = \partial_R \phi(t_i,S_i,R_i) -2i|S_i-R_i| ,
\end{align*}
so from what precedes we can write
\begin{align*}
\rho & \left( \tilde{v}(t_i,\tilde{q},S_i,\tilde{\psi},\tilde{I}) - \tilde{u}(t_i,\tilde{q},R_i,\tilde{\psi},\tilde{I}) \right) \leq  \partial_t \phi(t_i,S_i,R_i) + 2(t_i-\hat{t}) + \mu\big(\partial_S \phi(t_i,S_i,R_i) + \partial_R \phi(t_i,S_i,R_i) \\
& +4(S_i-\hat{S})^3\big)\\
& + \frac 12 \sigma^2 \left(\partial^2_{SS}\phi(t_i,S_i,R_i) + \partial^2_{RR}\phi(t_i,S_i,R_i) + 2\partial^2_{SR}\phi(t_i,S_i,R_i) + 12(S_i-\hat{S})\right) + \eta C_i\\
& + \sum_{\mathbf{k}\in \mathcal{K}} r_{(\tilde \psi,\tilde I),(\mathbf{k}^\psi,\mathbf{k}^I)}\big(\tilde{v}(t_i, \tilde q, S_i,\mathbf{k}^\psi,\mathbf{k}^I)-\tilde{v}( t_i, \tilde q, S_i, \tilde \psi, \tilde I)\big) \\
& + \sup_{p\in Q_\psi, \ell\in \mathcal{A}}\sum_{n=1}^N\lambda^{n}(\tilde \psi,\tilde I,p^{n},\ell)\mathbb{E}\Big[\epsilon^{n}\ell^{n}e^{\rho t_i}(S_i+\frac{\tilde{\psi}^{n}}{2}+p^{n}\delta^n)+\tilde{v}\big(t_i,\tilde{q}-\ell^{n}\epsilon^{n},S_i,\tilde \psi,\tilde I\big)- \tilde{v}(t_i,\tilde q,S_i,\tilde \psi,\tilde I)\Big] \\
& - \sum_{\mathbf{k}\in \mathcal{K}} r_{(\tilde \psi,\tilde I),(\mathbf{k}^\psi,\mathbf{k}^I)}\big(\tilde{u}(t_i, \tilde q, R_i,\mathbf{k}^\psi,\mathbf{k}^I)-\tilde{u}( t_i, \tilde q, R_i, \tilde \psi, \tilde I)\big) \\
& - \sup_{p\in Q_\psi, \ell\in \mathcal{A}}\sum_{n=1}^N \lambda^{n}(\tilde \psi,\tilde I,p^{n},\ell)\mathbb{E}\Big[\epsilon^{n}\ell^{n}e^{\rho t_i}(R_i+\frac{\tilde{\psi}^{n}}{2}+p^{n}\delta^n)+\tilde{u}\big(t_i,\tilde{q}-\ell^{n}\epsilon^{n},R_i,\tilde \psi,\tilde I\big)- \tilde{u}(t_i,\tilde q,R_i,\tilde \psi,\tilde I)\Big],
\end{align*}
where $C_i$ does not depend on $\eta.$ As $\tilde{v}$ is u.s.c., $\tilde{u}$ is l.s.c. and $(t_i,S_i,R_i)_i$ is convergent, when $\eta \to 0$ it is clear, that, when $i\rightarrow +\infty$, for a certain constant $M$ we get
\begin{align*}
\rho \big( \tilde{v}(\hat{t},\tilde{q},\hat{S},\tilde{\psi},\tilde{I}) -& \tilde{u}(\hat{t},\tilde{q},\hat{S},\tilde{\psi},\tilde{I}) \big) \leq  \partial_t \phi(\hat{t},\hat{S},\hat{S}) + \mu\big(\partial_S \phi(\hat{t},\hat{S},\hat{S}) + \partial_R \phi(\hat{t},\hat{S},\hat{S}) \big)\\
& + \frac 12 \sigma^2 \left(\partial^2_{SS}\phi(\hat{t},\hat{S},\hat{S}) + \partial^2_{RR}\phi(\hat{t},\hat{S},\hat{S}) + 2\partial^2_{SR}\phi(\hat{t},\hat{S},\hat{S}) \right) + M .
\end{align*}
For $\tilde \kappa>0$ large enough, the right-hand side is strictly negative, and as $\rho>0$ we get $$\tilde{v}(\hat{t},\tilde{q}, \hat{S}, \tilde{\psi}, \tilde{I}) - \tilde{u}(\hat{t},\tilde{q}, \hat{S}, \tilde{\psi}, \tilde{I})<0,$$ which yields to a contradiction. The proof for the other part of the HJBQVI is direct. \\

With the two above propositions, it is easy to conclude the proof of the theorem. Indeed, as $v_\star$ is a supersolution such that $v_\star \geq v$ on $\{T\} \times \mathbb{R} \times \mathcal{D},$ and $v^\star$ is a subsolution such that $v^\star \leq v$ on $\{T\} \times \mathbb{R} \times \mathcal{D},$ we can apply the maximum principle to get $v_\star \geq v^\star$ on $[0,T] \times \mathbb{R} \times \mathcal{D}.$ But by definition of $v_\star$ and $v^\star,$ we must have $v_\star \leq v \leq v^\star$ on $[0,T] \times \mathbb{R} \times \mathcal{D},$ which proves that we have $v_\star = v = v^\star$ and $v$ is continuous. The maximum principle implies that if two continuous viscosity solutions of the HJBQVI satisfy the same terminal condition, they are equal on $[0,T] \times \mathbb{R} \times \mathcal{D},$ hence the uniqueness.
\end{proof}

\pagebreak

\section{Application to OTC market making}
\subsection{Framework}
The model we present in this article is designed for trading in cross-listed stocks in limit order books. However, it can be adapted straightforwardly to handle the problem of an OTC market maker, who often deals with a large number of assets driven by a few factors. We borrow here the factorial method market making framework of~\cite{bergault2019size} (we are also going to keep their notation only for this section). We consider a market maker who is in charge of providing bid and ask quotes on $d$ assets, whose dynamics are
\begin{align*}
  dS_t^i = \mu^i dt + \sigma^i dW_t^i, \quad i\in \{1,\dots,d\},
\end{align*}
where $\mu^i$ is the drift of the $i$-th asset, $\sigma^i$ is its volatility and $(W_t^1,\dots,W_t^d)$ is a $d$-dimensional Brownian motion. We consider a non-singular variance-covariance matrix $\Sigma=(\rho^{i,j}\sigma^i \sigma^j)_{i,j\in\{1,\dots,d\}}$ for the vector of assets $(S_t^1,\dots,S_t^d)$. The market maker sets bid and ask prices on every asset:
\begin{align*}
  S^{i,b}(t,z) = S_t^i -\delta^{i,b}(t,z), \quad S^{i,a}(t,z) = S_t^i +\delta^{i,a}(t,z), \quad z\in \mathbb{R},
\end{align*}
where $\delta=(\delta^{i,a},\delta^{i,b})_{i\in\{1,\dots,d\}}$ are the (predictable and uniformly lower bounded) bid and ask quotes around the mid-price of each asset. The volume of transactions on the bid and ask sides are modeled by marked point processes $N^{i,b}(dt,dz),N^{i,a}(dt,dz)$ of intensity $\nu_t^{i,b}(dz),\nu_t^{i,a}(dz)$ defined by
\begin{align*}
  \nu_t^{i,j}(dt,dz) = \Lambda^{i,j}\big(\delta^{i,j}(t,z)\big)\eta^{i,j}(dz), \quad i\in\{1,\dots,d\}, 
\end{align*}
where $\Lambda^{i,j}$ is a sufficiently regular function (exponential, logistic, SU Johnson etc.) modeling the probability to trade on the asset $i$, on the side $j$ for a given spread $\delta^{i,j}(t,z)$ and a size $z$. The functions $\eta^{i,j}(dz)$ are probability densities over $\mathbb{R}_+$ modeling the distribution of a trade size. The market maker manages his inventory process $q_t = (q_t^1,\dots,q_t^d)$ of dynamics given by
\begin{align*}
  dq_t^i = \int_{\mathbb{R}_+} z N^{i,b}(dt,dz) - \int_{\mathbb{R}_+} z N^{i,a}(dt,dz), \quad i\in \{1,\dots,d\}.
\end{align*}
The market maker manages his cash process given at time $t$ by
\begin{align*}
  dX_t = \sum_{i=1}^d \int_{\mathbb{R}_+} z S^{i,a}(t,z) N^{i,a}(dt,dz) - \int_{\mathbb{R}_+} z S^{i,b}(t,z) N^{i,b}(dt,dz).
\end{align*}
Its optimization problem is defined as
\begin{align*}
  \sup_{\delta}\mathbb{E}\Big[X_T + \sum_{i=1}^d q_T^i S_T^i - \int_0^T \phi(q_t) dt \Big],
\end{align*}
where $\phi$ is a running penalty preventing from too large positions and $\sum_{i=1}^d q_T^i S_T^i$ is the marked-to-market value of the market maker's portfolio at time $t$. The corresponding HJB equation is given by
\begin{align*}
  0 = & \partial_t v(t,q) + \sum_{i=1}^d q^i \mu^i - \phi(q) + \sum_{i=1}^d \int_{\mathbb{R}_+} z H^{i,b}\big(\frac{v(t,q)-v(t,q+z e^i)}{z}\big)\eta^{i,b}(dz) \\
  & + \sum_{i=1}^d \int_{\mathbb{R}_+} z H^{i,a}\big(\frac{v(t,q)-v(t,q-z e^i)}{z}\big)\eta^{i,a}(dz),
\end{align*}
with terminal condition $v(T,q)=0$, $H^{i,j}(p)=\sup_{\delta} \Lambda^{i,j}(\delta) (\delta - p)$, and $(e^1,\dots,e^d)$ is the canonical basis of $\mathbb{R}^d$. 

\subsection{Bayesian update for OTC market makers}\label{sec_Bayesian_OTC}

Usually, the functions $\Lambda^{i,j}$ are of the form
\begin{align*}
  \Lambda^{i,j}\big(\delta^{i,j}(t,z)\big) = \lambda^{i,j}_{\text{RFQ}}f\big(\delta^{i,j}(t,z)\big),
\end{align*}
where 
$\lambda^{i,j}_{\text{RFQ}}$ is the constant intensity of arrival of requests for quote, and $f\big(\delta^{i,j}(t,z)\big)$ gives the probability that a request will result in a transaction given the quote $\delta$ proposed by the market maker. The estimation of the quantity $\lambda^{i,j}_{\text{RFQ}}$ is of particular importance for the market maker so that he can adjust his quotes depending on his view on the number of request for a certain asset and a certain side. In the same spirit as in Section \ref{section_learn_intensities}, we assume the following prior distribution: 
\begin{align*}
  \lambda^{i,j}_{\text{RFQ}} \sim \Gamma(\alpha^{i,j},\beta^{i,j}), \quad (\alpha^{i,j},\beta^{i,j})>0.
\end{align*}
For an asset $i\in\{1,\dots,d\}$ on the side $j\in \{a,b\}$, this corresponds to an average intensity of $\frac{\alpha^{i,j}}{\beta^{i,j}}$, with variance equal to $\frac{\alpha^{i,j}}{(\beta^{i,j})^2}$. If the market maker is confident in his estimation of the intensity $\lambda^{i,j}_{\text{RFQ}}$, he can choose a large $\beta^{i,j}$ so that the variance of his Bayesian estimator is small. Given all the information accumulated up to time $t$, its best estimation of the quantity $\lambda^{i, j}_{\text{RFQ}}$, is given by
\begin{align}\label{lambda_rfq_update}
  \mathbb{E}\big[\lambda^{i,j}_{\text{RFQ}}|N(t,dz)\big]=\frac{\alpha^{i,j}+ \int_{\mathbb{R}_+} N(t,dz)}{\beta^{i,j} + \int_{\mathbb{R}_+}\int_0^t f(\delta^{i,j}(s,z))ds\;\eta^{i,j}(dz)}. 
\end{align}
By the law of large numbers, when the market maker has accumulated a sufficiently large number of observations, his best estimation of $\lambda^{i,j}_{\text{RFQ}}$ converges to the ``real'' intensity of the market. As time passes, the prior parameters $(\alpha^{i,j},\beta^{i,j})$ of the market maker are less important as the estimation will rely mostly on the observations. \\

Another important parameter of the model is the size of transactions, which impacts the quotes of the market maker as well as his inventory risk. In~\cite{bergault2019size}, the authors choose in their numerical experiments a $\Gamma(a^{i,j},b^{i,j})$ distribution for $\eta^{i,j}$. The trader can choose between Bayesian updates (revise only $a^{i,j}$, only $b^{i,j}$, or both), depending on his confidence on parameters' estimation. If he is confident with respect to the shape parameter $a^{i,j}$, that is he knows approximately the average size of a request but not the standard deviation, he sets $b^{i,j}\sim \Gamma(a^{i,j}_0,b^{i,j}_0)$. Given $n$ observations of size $z^1, \dots, z^n$, the best Bayesian estimate of $b^{i,j}$ (the scale parameter of the size of the request) is
\begin{align}\label{bayesian_b_gamma_OTC}
  \mathbb{E}[b^{i,j}|(z^1, \dots, z^n)] = \frac{a_0^{i,j} + n a^{i, j}}{b_0^{i, j} + \sum_{i=1}^n z^i}. 
\end{align}
The use of different prior distribution to take into account the uncertainty on the shape parameter $a^{i, j}$ (if $b^{i, j}$ is known) or on both $(a^{i, j}, b^{i, j})$ can be done in the same way. \\

Another sensitive parameter, especially for the multi-asset market making, is the variance-covariance matrix $\Sigma$. This quantity is usually estimated on a long run, but parameters are subject to a brutal change. For example, let us assume that the market maker is in charge of $d$ assets on $2$ different sectors (for instance, technology and aerospace). Following the factorial approach, the market making problem's dimension will be reduced from $d$ to $3$. The three factors mainly correspond to the three highest eigenvalues of the variance-covariance matrix $\Sigma$, and will drive the quotes of the market maker. However, in case of a sectorial tail event, for example the bankruptcy of one of the companies of the tech sector, it is likely that all the correlations between the assets of this sector will rise to one. This will impact the eigenvalue related to the technology sector, and change the quotes of the market maker as he has to avoid long inventory positions on assets whose values are decreasing. To design adaptive market making strategy based on Bayesian update of the correlation matrix and the drift of the assets, we define the Normal-Inverse-Wishart prior on $(\mathbf{\mu},\Sigma)\sim \text{NIW}(\mu_0, \kappa_0,\nu_0,\mathbf{\psi})$, where $(\mu_0, \kappa_0,\nu_0,\mathbf{\psi})\in \mathbb{R}^d \times \mathbb{R}_+^\star \times (d-1,+\infty)\times \mathcal{M}_d(\mathbb{R})$. This distribution is built as follows:
\begin{align*}
  \mathbf{\mu} | (\mathbf{\mu_0},\kappa_0,\Sigma) \sim \mathcal{N}\big(\mathbf{\mu_0}, \frac{1}{\kappa_0}\Sigma\big), \quad \Sigma| (\psi,\nu_0) \sim \mathcal{W}^{-1}(\psi,\nu_0), \text{ then } (\mathbf{\mu},\Sigma)\sim \text{NIW}(\mu_0, \kappa_0,\nu_0,\mathbf{\psi}),
\end{align*}
where $\mathcal{W}^{-1}$ is the standard inverse Wishart distribution. In other words, the drift vector $\mathbf{\mu}$ of the assets follows a multivariate Gaussian distribution whereas the variance-covariance matrix $\Sigma$ follows a standard inverse Wishart distribution. At time $t$, if we note $S_t=(S_t^1,\dots,S_t^d)$ the prices observed up to time $t$, the Bayesian update of $(\mathbf{\mu},\Sigma)$ is
\begin{align*}
  (\mu,\Sigma|S_t-S_0) \sim \text{NIW}\Big( &\frac{\kappa_0 \mathbf{\mu_0}+(S_t-S_0)}{\kappa_0 +t}, \kappa_0+t,\nu_0+t , \\
  & \psi + (S_t - \frac{S_t}{t})(S_t - \frac{S_t}{t})^{\mathbf{T}} + \frac{\kappa_0 t}{\kappa_0 + t}(\mathbf{\mu_0} - \frac{S_t}{t})(\mathbf{\mu_0} - \frac{S_t}{t})^{\mathbf{T}} \Big). 
\end{align*}
Following the law of large numbers, as $t\rightarrow +\infty$ we have a larger number of information and we converge toward the drift and variance-covariance of the market maker's portfolio. Therefore, the market maker can recompute his factors derived from the updated variance-covariance matrix and adjust his quotes. \\

This extension deserves several remarks. First, the problems encountered by an OTC market maker are quite different from a high-mid frequency trader in an order book. The model is more parsimonious, especially for the intensity functions. Therefore, the convergence toward the ``true'' market parameters will be faster than in order book model. The objective of the Bayesian update on the quantities $\lambda^{i,j}_{\text{RFQ}}$ is to determine the average behavior or the counterparts of the market maker. If he observes a large number of requests on the ask (resp. bid) side of the asset $i$, the Bayesian update \eqref{lambda_rfq_update} enables the market maker to adjust his quotes to set a higher ask (resp. bid) price for this asset. If the market maker observes a higher discrepancy than expected for the transaction sizes, the Bayesian update~\eqref{bayesian_b_gamma_OTC} helps to adjust his quotes. Finally, the Bayesian learning on the drift and covariance of the assets enables to update the factors from which the market maker chooses his quotes. 

\endgroup

\pagebreak
\bibliographystyle{abbrv}
\bibliography{biblio.bib}

\begin{thebibliography}{10}

\bibitem{almgren2008dynamic}
R.~Almgren and B.~Harts.
\newblock A dynamic algorithm for smart order routing.
\newblock {\em White paper StreamBase}, 2008.

\bibitem{almgren2005h}
R.~Almgren, C.~Thum, E.~Hauptmann, and H.~Li.
\newblock Equity market impact.
\newblock {\em Risk, July}, pages 58--62, 2005.

\bibitem{alsayed2012arbitrage}
H.~Alsayed and F.~McGroarty.
\newblock Arbitrage and the law of one price in the market for american
  depository receipts.
\newblock {\em Journal of International Financial Markets, Institutions and
  Money}, 22(5):1258--1276, 2012.

\bibitem{avellaneda2011forecasting}
M.~Avellaneda, J.~Reed, and S.~Stoikov.
\newblock Forecasting prices from level-i quotes in the presence of hidden
  liquidity.
\newblock {\em Algorithmic Finance}, 1(1):35--43, 2011.

\bibitem{avellaneda2008high}
M.~Avellaneda and S.~Stoikov.
\newblock High-frequency trading in a limit order book.
\newblock {\em Quantitative Finance}, 8(3):217--224, 2008.

\bibitem{bachouch2018deep}
A.~Bachouch, C.~Hur{\'e}, N.~Langren{\'e}, and H.~Pham.
\newblock Deep neural networks algorithms for stochastic control problems on
  finite horizon, part 2: Numerical applications, 2018.

\bibitem{baldacci2019algorithmic}
B.~Baldacci, P.~Bergault, and O.~Gu{\'e}ant.
\newblock Algorithmic market making: the case of equity derivatives.
\newblock {\em arXiv preprint arXiv:1907.12433}, 2019.

\bibitem{baldacci2019market}
B.~Baldacci, I.~Manziuk, T.~Mastrolia, and M.~Rosenbaum.
\newblock Market making and incentives design in the presence of a dark pool: a
  deep reinforcement learning approach.
\newblock {\em arXiv preprint arXiv:1912.01129}, 2019.

\bibitem{barles2008second}
G.~Barles and C.~Imbert.
\newblock Second-order elliptic integro-differential equations: viscosity
  solutions' theory revisited.
\newblock In {\em Annales de l'Institut Henri Poincare/Analyse non lineaire},
  volume~3, pages 567--585, 2008.

\bibitem{bergault2019size}
P.~Bergault and O.~Gu{\'e}ant.
\newblock Size matters for otc market makers: viscosity approach and
  dimensionality reduction technique.
\newblock {\em arXiv preprint arXiv:1907.01225}, 2019.

\bibitem{bouchard2007introduction}
B.~Bouchard.
\newblock Introduction to stochastic control of mixed diffusion processes,
  viscosity solutions and applications in finance and insurance.
\newblock {\em Lecture Notes Preprint}, 2007.

\bibitem{cartea2015algorithmic}
{\'A}.~Cartea, S.~Jaimungal, and J.~Penalva.
\newblock {\em Algorithmic and high-frequency trading}.
\newblock Cambridge University Press, 2015.

\bibitem{cont2017optimal}
R.~Cont and A.~Kukanov.
\newblock Optimal order placement in limit order markets.
\newblock {\em Quantitative Finance}, 17(1):21--39, 2017.

\bibitem{crandall1992user}
M.~G. Crandall, H.~Ishii, and P.-L. Lions.
\newblock User’s guide to viscosity solutions of second order partial
  differential equations.
\newblock {\em Bulletin of the American mathematical society}, 27(1):1--67,
  1992.

\bibitem{gatheral2010no}
J.~Gatheral.
\newblock No-dynamic-arbitrage and market impact.
\newblock {\em Quantitative finance}, 10(7):749--759, 2010.

\bibitem{gueant2016financial}
O.~Gu{\'e}ant.
\newblock {\em The Financial Mathematics of Market Liquidity: From optimal
  execution to market making}, volume~33.
\newblock CRC Press, 2016.

\bibitem{gueant2013dealing}
O.~Gu{\'e}ant, C.-A. Lehalle, and J.~Fernandez-Tapia.
\newblock Dealing with the inventory risk: a solution to the market making
  problem.
\newblock {\em Mathematics and financial economics}, 7(4):477--507, 2013.

\bibitem{gueant2019deep}
O.~Gu{\'e}ant and I.~Manziuk.
\newblock Deep reinforcement learning for market making in corporate bonds:
  beating the curse of dimensionality.
\newblock {\em arXiv preprint arXiv:1910.13205}, 2019.

\bibitem{guilbaud2013optimal}
F.~Guilbaud and H.~Pham.
\newblock Optimal high-frequency trading with limit and market orders.
\newblock {\em Quantitative Finance}, 13(1):79--94, 2013.

\bibitem{hure2018deep}
C.~Hur{\'e}, H.~Pham, A.~Bachouch, and N.~Langren{\'e}.
\newblock Deep neural networks algorithms for stochastic control problems on
  finite horizon, part i: convergence analysis.
\newblock {\em arXiv preprint arXiv:1812.04300}, 2018.

\bibitem{jain2017hidden}
A.~Jain and C.~Jain.
\newblock Hidden liquidity on the us stock exchanges.
\newblock {\em The Journal of Trading}, 12(3):30--36, 2017.

\bibitem{laruelle2011optimal}
S.~Laruelle, C.-A. Lehalle, and G.~Pages.
\newblock Optimal split of orders across liquidity pools: a stochastic
  algorithm approach.
\newblock {\em SIAM Journal on Financial Mathematics}, 2(1):1042--1076, 2011.

\bibitem{laruelle2013optimal}
S.~Laruelle, C.-A. Lehalle, and G.~Pag{\`e}s.
\newblock Optimal posting price of limit orders: learning by trading.
\newblock {\em Mathematics and Financial Economics}, 7(3):359--403, 2013.

\bibitem{rabinovitch2003returns}
R.~Rabinovitch, A.~C. Silva, and R.~Susmel.
\newblock Returns on adrs and arbitrage in emerging markets.
\newblock {\em Emerging Markets Review}, 4(3):225--247, 2003.

\bibitem{stoikov2009option}
S.~Stoikov and M.~Sa{\u{g}}lam.
\newblock Option market making under inventory risk.
\newblock {\em Review of Derivatives Research}, 12(1):55--79, 2009.

\bibitem{werner1996uk}
I.~M. Werner and A.~W. Kleidon.
\newblock Uk and us trading of british cross-listed stocks: An intraday
  analysis of market integration.
\newblock {\em The Review of Financial Studies}, 9(2):619--664, 1996.

\end{thebibliography}

\end{document}